\documentclass[pra,11pt,onecolumn,superscriptaddress,nofootinbib]{revtex4}
\usepackage{color}
\usepackage{amsmath}
\usepackage{amsfonts}
\usepackage{amsmath}
\usepackage{amsthm}
\usepackage{amssymb}
\usepackage{dsfont}
\usepackage[all]{xy}
\usepackage{graphicx}
\usepackage{relsize}
\usepackage{pgf}
\usepackage{pgf,pgfarrows,pgfnodes}
\usepackage{color}
\usepackage[caption=false]{subfig}
\bibliographystyle{apsrev4-1}
\usepackage{float}
\usepackage{tikz}
\usepackage{pdfpages}
\usepackage{hyperref}
\usepackage{color}
\usepackage{dcolumn}
\usepackage{bm}
\usepackage{verbatim}
\usepackage{amscd}
\usepackage{setspace}
\usepackage{enumerate}
\linespread{1.1} 

\newcommand{\calD}{\mathcal{D}}

\newcommand{\zz}{\mathbb Z}

\newcommand{\calM}{\mathcal M}
\newcommand{\calG}{\mathcal G}

\newcommand{\calH}{\mathcal H}

\newcommand{\calC}{\mathcal C}

\newcommand{\calZ}{\mathcal Z}
\newcommand{\calK}{\mathcal K}
\newcommand{\calU}{\mathcal U}
\newcommand{\calV}{\mathcal V}
\newcommand{\calO}{\mathcal O}
\newcommand{\calP}{\mathcal P}

\newcommand{\calE}{\mathcal E}

\renewcommand{\phi}{\varphi}

\newcommand{\norm}[1]{\left\lVert#1\right\rVert}

\newcommand{\Tr}{\operatorname {Tr}}

\newcommand{\bra}[1]{\ensuremath{\left\langle#1\right|}}
\newcommand{\ket}[1]{\ensuremath{\left|#1\right\rangle}}

\theoremstyle{plain}
\newtheorem{thm}{Theorem}
\theoremstyle{plain}
\newtheorem{lem}{Lemma}

\theoremstyle{definition}
\newtheorem{defn}{Definition}

\theoremstyle{remark}

\newenvironment{skproof}{%
	\proof}{\endproof}

\definecolor{samcolor1}{rgb} {0,0,1}
\definecolor{samcolor2}{rgb} {0,0,0}
\begin{document}
\title{Symmetry protected topological order at nonzero temperature}
\author{Sam Roberts}%
\affiliation{Centre for Engineered Quantum Systems, School of Physics, The University of Sydney, Sydney, NSW 2006, Australia}%
\author{Beni Yoshida}%
\affiliation{Perimeter Institute for Theoretical Physics, Waterloo, ON N2L 2Y5, Canada}%
\author{Aleksander Kubica}%
\affiliation{Institute for Quantum Information \& Matter, California Institute of Technology, Pasadena CA 91125, USA}%
\author{Stephen D. Bartlett}%
\affiliation{Centre for Engineered Quantum Systems, School of Physics, The University of Sydney, Sydney, NSW 2006, Australia}%

\date{7 August 2017}

\begin{abstract} 
	We address the question of whether symmetry-protected topological (SPT) order can persist at nonzero temperature, with a focus on understanding the thermal stability of several models studied in the theory of quantum computation.  We present three results in this direction.  First, we prove that nontrivial SPT-order protected by a global onsite symmetry cannot persist at nonzero temperature, demonstrating that several quantum computational structures protected by such onsite symmetries are not thermally stable.  Second, we prove that the 3D cluster state model used in the formulation of topological measurement-based quantum computation possesses a nontrivial SPT-ordered thermal phase when protected by a generalized (1-form) symmetry.  The SPT-order in this model is detected by long-range localizable entanglement in the thermal state, which compares with related results characterizing SPT-order at zero temperature in spin chains using localizable entanglement as an order parameter.  Our third result is to demonstrate that the high error tolerance of this 3D cluster state model for quantum computation, even without a protecting symmetry, can be understood as an application of quantum error correction to effectively enforce a 1-form symmetry.
\end{abstract}

\maketitle

\section{Introduction}
Topological phases are not only fascinating from the perspective of fundamental physics but are also well-suited for the design of quantum computers, for two essential reasons.  First, such phases possess topology-dependent ground-state degeneracies, into which quantum information can be encoded and which can manifest themselves through boundary degrees of freedom.  That is, qubits arranged on a spin lattice in a topologically-ordered phase are an instance of a quantum error correcting code:  information is encoded in nonlocal degrees of freedom, offering robustness to local errors that can be detected through the measurement of local syndromes.  Second, these desirable properties are robust against perturbations that act locally on the system~\cite{BravyiHastings}, making them ideal for quantum information processing with faulty devices without the requirement of precise control over all microscopic degrees of freedom.   

Although much of the existing work on the study of topological phases is devoted to studying ground state (zero temperature) properties, identifying systems that can maintain their quantum coherence in equilibrium at some nonzero temperature would be highly desirable for quantum computing applications.  Most of the well-studied exactly-solvable models in two or three dimensions (such as Kitaev's toric code) do not maintain their topological order except at zero temperature~\cite{HastingsThermal,brownreview}.  The full range of phenomena of topological models in three or more dimensions has yet to be fully explored, though, so there is plenty of room for optimism.

A promising new direction in recent years is to add a symmetry to the mix. Symmetries have historically proven to be a powerful tool for understanding the structure and thermal stability of many-body phases of matter, for example, Landau's paradigm of symmetry breaking, the Mermin-Wagner theorem \cite{mermin1966absence}, and Elitzur's theorem\cite{elitzur1975impossibility}. More recently, symmetries have been used to characterise the order in systems away from equilibrium, such as periodically driven (Floquet) systems, where the thermalization time can be long \cite{else2017prethermal,potter2016classification}. Even at zero temperature, a rich set of ordered phases can appear even in trivial models when a symmetry is enforced; such symmetry-protected topological (SPT) phases are described by short-range entangled states that cannot be adiabatically connected to a trivial product state while preserving the symmetry~\cite{chen11a,chen11b,schuch2011classifying}.  Like topological phases, these SPT phases can possess ground-state degeneracies manifested through boundary degrees of freedom, and these degeneracies are robust against local symmetry-respecting perturbations.  With symmetry, new avenues open up.  For example, SPT-nontrivial phases can be identified even in spin chains with only one spatial dimension; nontrivial topological phases require at least 2D.

Nontrivial SPT phases are not likely to be useful for defining good quantum codes, mainly because this would require a very strong assumption about the error model (i.e., that it respects the symmetry).  Nonetheless, SPT phases have found several applications in our understanding of other features of quantum computation.  First, the model for measurement-based quantum computation (MBQC)~\cite{Rau01} can be understood in terms of performing computations on fractionalized edge modes associated with the boundaries of symmetry-protected phases of spin chains~\cite{BBMR,Miy10,RMBB}; a very precise relationship between the computational properties of a spin chain and its SPT-order was developed by Else \textit{et al.}~\cite{Else2012,Delse1} and Miller \textit{et al.} \cite{MillerMiyake15}.  Second, a direct relationship between the set of fault-tolerant gates for a topological code, the classification of gapped boundaries of this code, and SPT phases for which these gapped boundaries serve as ground states has been shown~\cite{BYCCSPT,BYFT}. This relationship is useful for the construction of fault-tolerant non-Clifford gates and may have applications in magic state distillation.  These results hint at a new relationship between such gapped domain walls and SPT-ordered phases on the boundary, in particular in higher dimensions.

Very little is known about the thermal stability of SPT-ordered systems, and the possibility is left open that some of the robust properties of SPT-ordered phases for quantum computing may survive at nonzero temperature when the local symmetry is enforced.  The presence of SPT-order in thermal systems is deeply connected to survival of the aforementioned gapped boundaries in a topological code and their associated fault-tolerant non-Clifford gates at nonzero temperature.  The survival of SPT-order for systems excited out of the ground state has been investigated in the context of many-body localization~\cite{bahri2015localization,yao2015many}.

Our first result is a proof that a nontrivial SPT phase protected by a global onsite (zero-form) symmetry cannot exist for any nonzero temperature.  This proof requires us to formulate a definition of nontrivial SPT-order for thermal states, which we do through an appropriate definition of a symmetric Gibbs state together with a definition of nontrivial SPT-order for mixed states based on circuit complexity following a similar approach by Hastings for topological order~\cite{HastingsThermal}. We prove this result for the broad class of models described by group cohomology \cite{ChenGuLiuWen}.

As SPT-order has been shown to be an enabling feature of measurement-based quantum computation, this no-go result would suggest that thermal states at nonzero temperature cannot be used as resource states for such schemes.  Surprisingly, though, we know this to be false, through the existence of several counterexamples.  The topological cluster state scheme of Raussendorf \textit{et al.}~\cite{Rau06} is the basis for essentially all currently-pursued high-error-threshold architectures for quantum computing (its circuit-model implementation gives the well-studied `surface code' architecture~\cite{FMMC12}). Using a cluster state Hamiltonian in three dimensions, the results of Ref.~\cite{RBH} show that the thermal state of this model is a resource for quantum computation below some critical temperature.  This is despite the fact that this cluster model Hamiltonian does not undergo any thermodynamic phase transition, even when protected by an onsite symmetry, and so the physical origin of its thermal stability remains elusive.   Other 3D Hamiltonians have been proposed that are universal for MBQC at nonzero temperature \cite{fujii2012topologically,li2011thermal,fujii2013measurement}, but there is currently no guiding principle explaining the thermal robustness of MBQC.

As our second result, we present and analyse the 3D cluster state model from the perspective of SPT-order, and show that this model possesses a nontrivial SPT phase at nonzero temperature when protected by a 1-form symmetry.  Higher-form symmetries are a natural generalization of the 0-form global symmetry for which the group action is onsite. A $q$-form symmetry can be imposed by an operator acting on a closed codimension-$q$ manifold $\mathcal{M}$.  When $q> 0$, the symmetry imposes much stronger constraints than the onsite, $q=0$ case. Several recent works have investigated SPT phases with higher-form symmetries~\cite{Baez04,Baez10,KThigher,Kapustin14b,GKSWhigher,BYhigher}.  By enforcing a 1-form symmetry on the 3D cluster state model, we prove that SPT-ordering in 1-form symmetric models can be maintained at nonzero temperature.  We explicitly construct types of nonlocal order parameters that characterize this SPT-ordering in the thermal state. These order parameters consist of pairs of membranes, and when equipped with local error-correcting operations on the boundaries serve as a witness of the long-range localizable entanglement that is present in the thermal state.

Our third result is to provide an operational interpretation of this SPT-ordering under the 1-form symmetry, using the concept of localizable entanglement in the thermal state.  This interpretation provides an explanation of the thermal stability of the topological cluster state model for quantum computation, even for the case where symmetries are not enforced.  In one dimension, the SPT-ordering at zero temperature of the cluster state model protected by a global 0-form $\zz_2 \times \zz_2$ symmetry is characterised by the ability to localize entanglement in the ground state between the fractionalized edge modes via symmetry-adapted measurements in the bulk~\cite{Else2012,Delse1}.  By analogy, in the 3D cluster state model, we demonstrate that our order parameter takes near-maximal values for the nontrivial SPT phase at low temperature, which guarantees robustness of the localizable entanglement between two boundary surface codes of this model via symmetry-adapted measurements in the bulk.  In addition to localizing entanglement, the measurements provide complete information about the 1-form symmetry operators.  Therefore, even when the 1-form symmetry is not enforced, measurement of these symmetry operators allows for error correction of the resulting thermal state to the corresponding thermal SPT-ordered state for which entanglement is ensured. Therefore, the scheme can offer thermal stability even without enforcing the symmetry.

The paper is organised as follows. In section \ref{sec2} we formulate and define the types of models and relevant notions of SPT-order for thermal states. We then provide a proof that SPT-order protected by an onsite symmetry cannot exist at nonzero temperature. We prove this first for a well known SPT model in 2D, and then for the more general group cohomology models. In section \ref{sec3} we show that the 3D cluster model possesses SPT-order at nonzero temperature, protected by a 1-form symmetry. We show this firstly through an argument based on gauging and secondly through a nonlocal order parameter. In section \ref{sec4} we discuss the nontrivial SPT protected by 1-form symmetry in the context of measurement-based quantum computation. We conclude with a discussion and outlook in section \ref{sec5}. 

\section{Thermal SPT-order}\label{sec2}
In this section we introduce the types of models we will be treating and the relevant definitions of SPT-order. We then develop a toolset to analyse SPT-order in a thermal setting, making use of the well-known framework of simulating thermalization of quantum many-body systems based on the Davies map \cite{Davies1, Davies2}. Our main result in this section is a proof of the instability of SPT-order at nonzero temperature for models in arbitrary dimension protected by an onsite symmetry.

\subsection{The setting}
 Consider a discrete lattice $\Lambda$ embedded in a $D$-dimensional manifold $M^D$. Spins with local Hilbert space $\calH_i$ are placed at each site $i \in \Lambda$ (`sites' can be chosen to be at vertices, edges, etc., of the lattice), with total Hilbert space $\calH = \otimes_{i \in \Lambda} \calH_i$. The types of models that we are considering can be represented by local, commuting projector Hamiltonians $H = \sum_X h_X$, where each local term $h_X$ is supported on a subset $X \subseteq \Lambda$ with diam$(X)\leq $ const. We assume that the system has some symmetry described by a group $G$, with unitary representation $S$. The symmetries we consider can be onsite symmetries, as well as more general higher-form symmetries, which we now define. An onsite symmetry takes the form 
 \begin{equation}
 S(g) = \bigotimes_{i\in \Lambda} u_i(g),
 \end{equation} 
 where $u_i(g)$ is the representation of $G$ on a single site $i\in \Lambda$. A $q$-form symmetry (for some $q\in \{0,1,\ldots,D-1\}$) consists of operators $S_{\calM}(g)$, supported on codimension-$q$ submanifolds $\calM$ in $M^D$, with $g\in G$ ~\cite{Baez04,Baez10,KThigher,Kapustin14b,GKSWhigher,BYhigher}. In this language, an onsite symmetry may also be referred to as a 0-form symmetry. In such a theory, charged excitations are $q$-dimensional objects and symmetry operators impose conservation laws on higher-dimensional charged objects.
 
A useful way to classify phases of matter at zero temperature is to use circuit complexity~\cite{ChenGuWen}. A quantum circuit may be represented as 
\begin{equation}
U_{\text{circ}} = \prod_{j=1}^{d} \calD_j \quad \text{where} \quad \calD_j = u_1^{(j)} \otimes u_2^{(j)} \otimes\ldots \otimes u_{k_j}^{(j)},
\end{equation}
where each geometrically local gate $u_{k}^{(j)}$ is supported on a region of radius at most $r$, and $d$ is the number of layers. The depth of such a circuit is defined to be the product $rd$, and a circuit is known as \textit{low-depth} if $rd$ is constant in the system size\footnote{Note that it is common to refer to $r$ and $d$ as the \textit{range} and the \textit{depth} of the circuit $U_{\text{circ}}$, respectively, but we do not make this distinction.}. We say a ground state of a gapped Hamiltonian $H$ is \textit{short-range entangled}, if it can be transformed into a product state using a low-depth circuit \cite{ChenGuWen}. In the context of SPT phases, the local gates $u_k^{(j)}$ of a quantum circuit are constrained to commute with the symmetry $S(g)$.

Namely, SPT-order at zero temperature is defined in the following way. Let $\ket{\psi}$ be the unique ground state of a gapped Hamiltonian $H$ on a closed (without boundary) lattice, with symmetry $G$. Then $\ket{\psi}$ belongs to a nontrivial SPT phase if:
 	\begin{enumerate}
 		\item it is short-range entangled,
 		\item any low-depth circuit connecting $\ket{\psi}$ to a product state has gates that break the symmetry. 
 	\end{enumerate}  

We emphasize that while there may exist a low-depth symmetric unitary map that connects a state with nontrivial SPT-order to a product state, the local gates composing it cannot be symmetric. SPT models have trivial bulk properties in the sense that they have no exotic excitations or degeneracies dependent on the topology of the underlying manifold. Despite this absence, interesting protected surface states are known to appear at the boundary of an SPT phase.  For example, in 1D, nontrivial SPT chains can exhibit fractionalized edge modes at their endpoints, such as with spin-1 antiferromagnets in the Haldane phase or Majorana nanowires.  In general, in higher dimensions, it is believed that the 1D surface of a 2D SPT must be gapless or break the symmetry \cite{CZX, LevinGu}, while in three or more dimensions it is believed that the surface must be gapless, break the symmetry or be topologically ordered \cite{BulkBoundarySPT, TranslationalSET}. 

A large and well-known class of SPT models are the group cohomology models protected by onsite symmetries \cite{ChenGuLiuWen}. In terms of circuit depth, using gates of constant range, these wavefunctions require a circuit of depth $\calO(N)$ to symmetrically disentangle, where $N$ is the number of spins (for example, the one-dimensional case is proven in \cite{HCComplexity}). While this class captures a large number of SPT phases, there are known models beyond group cohomology, including 3D models  that are protected by time reversal symmetry~\cite{SVBeyond,MKFBeyond, WSBeyond, BCFVBeyond}. More recently, looking beyond onsite symmetries has led to generalised SPT models protected by higher-form symmetries, both in the continuum and on the lattice \cite{KThigher, GKSWhigher, BYhigher}.

\subsection{Defining SPT-order for thermal states}

As defined above, SPT-order is manifestly a pattern of entanglement in the gapped ground state of a Hamiltonian. In this section we extend this definition to systems at nonzero temperature after briefly reviewing thermalization via the Davies map \cite{Davies1, Davies2}. We will argue that in the presence of symmetry, a natural notion of a thermal state at temperature $T$ is the \textit{symmetric} Gibbs ensemble
\begin{equation}\label{symgibbs}
\rho(\beta) = \lim\limits_{\lambda \rightarrow \infty} \rho_{\lambda}(\beta), 
\end{equation}
where $\beta = T^{-1}$, and $\rho_{\lambda}(\beta)$ is the (usual) Gibbs ensemble of a modified Hamiltonian $H(\lambda)$
\begin{equation}\label{symHam}
\rho_{\lambda}(\beta) = \calZ(\lambda)^{-1} e^{-\beta H(\lambda)}, \qquad H(\lambda) =  H - \lambda \sum_{g \in G} S(g),
\end{equation}
where $\calZ(\lambda) =\text{Tr} e^{-\beta H(\lambda)}$. Note that in the case of a higher-form symmetry, the sum in Eq. (\ref{symHam}) is over all symmetry operators. The symmetric ensemble arises naturally in two different contexts: $(i)$ the fixed point of a system thermalizing in the presence of a symmetry, $(ii)$ the post error corrected state of a thermal ensemble. We will overview the first point $(i)$ in this section, before returning to error correction in detail in section \ref{sec4B}.

To motivate this notion of a symmetric thermal state, consider thermalization as modelled by weakly coupling the system to a bosonic bath
\begin{equation}
H' = H_{S} \otimes {I}_B + {I}_S \otimes H_{B} + H_{int},
\end{equation}
where $H_{S}$ is the system Hamiltonian describing the SPT phase, $H_{B}$ is the bath Hamiltonian, and $H_{int} = \sum_{\alpha} s_{\alpha}\otimes b_{\alpha}$ is the interaction Hamiltonian comprised of the system and bath operators $s_{\alpha}$ and $b_{\alpha}$ respectively. The interaction Hamiltonian is constrained by the symmetry in that it must commute with the symmetry on the joint system $U(g) = S(g) \otimes {I}_{B}$. Note that we require no other symmetry of the bath, other than that the couplings respect the system symmetry $S(g)$. 

In order to realise the usual Gibbs ensemble as the fixed point of the reduced system dynamics, we require the dynamics to be ergodic. This is usually achieved by choosing bath couplings that are as simple as possible while ensuring the system operators address all energy levels of the system Hamiltonian $H_S$. The necessary and sufficient condition for ergodicity is that no nontrivial operators commute with all of the Hamiltonian and system operators \cite{Ergodic1, Ergodic2}. In the presence of symmetry, such a choice in general will not be possible, since the system operators $s_{\alpha}$ must respect the symmetry. Therefore ergodicity can only be achieved on a given sector, and for the sake of concreteness we focus our attention on the symmetric sector (the $+1$-eigenspace of $U(g)$).

We assume that the coupling is chosen such that the only operators that commute with $H_S$ and all of the system operators $s_{\alpha}$ are symmetry operators, and additionally that the initial state belongs to the symmetric sector. Then, following the Davies prescription, the unique fixed point of the dynamics generated by the above interaction will be the \textit{symmetric} Gibbs ensemble of Eq.~(\ref{symgibbs}). We will return to the assumption of the initial state belonging to the symmetric sector in section~\ref{sec4}, specifically in the context of error correction.

Given this ensemble, let us now define what it means to have SPT-order at nonzero temperature by modifying a definition due to Hastings \cite{HastingsThermal}. The notion of a \textit{trivial state} is replaced by a \textit{classical symmetric ensemble}, which is the symmetric Gibbs ensemble of a classical Hamiltonian $H_{cl}$. Here a classical Hamiltonian refers to a Hamiltonian expressible by a sum of terms diagonal in a local product basis. To define SPT at nonzero temperature, we follow Hastings \cite{HastingsThermal} and ask what is the circuit depth required to approximate the symmetric Gibbs ensemble, beginning with a classical Gibbs ensemble.
\begin{defn}
Let $\rho$ be the symmetric Gibbs state of a Hamiltonian $H$ that has symmetry $S(g)$, $g\in G$ and a SRE, unique ground state. We say $\rho$ is $(r,\epsilon)$ SPT-trivial if there exists: 
	\begin{enumerate}
		\item An enlarged Hilbert space $\calH' = \calH\otimes \calK$.
		\item A classical, nondegenerate Hamiltonian $H_{cl}$ defined on $\calH'$ with symmetry 
		\begin{equation}\label{JointSym}
		U(g)= S(g) \otimes {I}_{\calK}.
		\end{equation} 
		\item \vspace{-2.5mm} A circuit $\calU$ with depth $r$ acting on the enlarged space $\calH'$, composed of symmetric gates, such that 
		\begin{equation}
			\norm{ \rho - \Tr_{\calK} \left( \calU \rho_{cl} \calU^{\dagger} \right) }_1 \textless \epsilon,
		\end{equation}
		where $\rho_{cl}$ is the symmetric Gibbs ensemble of $H_{cl}$, and $\norm{\cdot}_1$ denotes the trace norm.
	\end{enumerate} 
\end{defn}
We make a few remarks on this definition. Firstly, we require $H_{cl}$ in the definition to be non-degenerate to exclude spontaneous symmetry breaking, since the symmetric Gibbs state of such a system can be highly nontrivial in terms of circuit depth. Secondly, we make the choice of symmetry in Eq.~(\ref{JointSym}) to avoid the following situation. Suppose the choice of symmetry was given by $U(g) = S(g) \otimes S(g)$. For any SPT-ordered state $\ket{\psi}$ with symmetry $S(g)$, there exists a state $\ket{\psi^{-1}}$ with symmetry $S(g)$ such that $\ket{\psi} \otimes \ket{\psi^{-1}}$ can be prepared from a product state by a constant depth circuit that is symmetric under $U(g)$. This property is referred to as the invertibility of SPT phases \cite{kongInvertible}. After tracing out the second subsystem, this choice of symmetry would imply that $\ket{\psi}$ is $(r,0)$ trivial (even at $T=0$) for some constant $r$.

Operationally, the above definition asserts than an SPT-trivial state is one that can be prepared from a classical ensemble using a low-depth symmetric quantum circuit (potentially with ancillas). An important consequence of this definition is that if a (symmetric) Gibbs ensemble can be expressed (up to error $\epsilon$) as a mixture of $(r,0)$ SPT-trivial states, then it is an $(r,\epsilon)$ SPT-trivial state \cite{HastingsThermal}. Indeed our strategy in the following section will be to show that the symmetric Gibbs ensemble of SPTs protected by onsite symmetries can be approximated by a convex combination of states, each of which is symmetrically low-depth equivalent to a product state.

\subsection{Onsite symmetric models have no SPT-order at nonzero temperature}

We now show that any SPT-ordered Hamiltonian $H$ with an \textit{onsite} symmetry is trivial at any $T \textgreater 0$ according to the above definition. For concreteness, we focus on a particular 2D example with $\zz_2$ onsite symmetry and defer the more general result to the next subsection. The proof proceeds by first constructing a new Hamiltonian $H'$ from $H$ whose Gibbs ensemble approximates that of $H$ and is obtained by removing terms from $H$. By dividing the lattice into disjoint regions of small size (i.e. logarithmic in the system size), the missing terms present within each region allow us to define a circuit with small-depth that transforms $H'$ into a trivial Hamiltonian describing a paramagnet. We find that many tools used to prove that two-dimensional, commuting projector Hamiltonians have trivial (intrinsic) topological order at nonzero temperature (in the absence of symmetry) in Ref.~\cite{HastingsThermal} apply in this context. 

Our proof method has the following physical interpretation. In the SPT-ordered Hamiltonian $H$, excitations are point-like objects and the $\zz_2$ onsite symmetry imposes a conservation law on $H$ that the number of point-like excitations must be even. By removing terms in $H'$, we create sinks where single point-like excitations can be created and destroyed, circumventing the above conservation law. Using these sinks, one can construct a symmetric disentangler out of operators that move point-like excitations into the sinks. This construction leaves open the possibility of thermal SPT-order in the presence of higher-form symmetries, as the removed terms do not act as sinks for the higher-dimensional excitations of these models, as we investigate in the next section. 

The example 2D model we consider was first discussed in \cite{LevinGu} (although it appeared, previously in a different guise in \cite{CZX}), and will capture the key ingredients used to prove the general case. Consider a triangular lattice whose set of vertices, edges and faces is labelled by $\Delta_0$, $\Delta_1$, and $\Delta_2$ respectively. On each vertex $v \in \Delta_0$ resides a qubit as in Fig.~\ref{figTriangleLattice}, and let $N = |\Delta_0|$ be the number of qubits. Consider first the trivial paramagnet
\begin{equation}
H_0 = -\sum_{v \in \Delta_0}X_v,
\end{equation}
where $X_v$ is the Pauli $X$ operator acting on the qubit at vertex $v$. The unique, gapped ground state of this model is the trivial product state $\ket{\psi_0} = \ket{+}^{\otimes N}$, where $\ket{+} = \frac{1}{\sqrt{2}}(\ket{0} +\ket{1})$. The Hamiltonian, and thus the unique ground state, possess an onsite $\zz_2$ symmetry generated by 
\begin{equation}\label{EqSym}
S = \bigotimes_{v \in \Delta_0} X_v.
\end{equation}
We would like to construct a model with the same symmetry, but belonging to a nontrivial SPT phase. We first define the controlled-$Z$ unitary acting on two qubits sharing an edge $e = (v_1,v_2)$ to be
\begin{equation}\label{EqCZ}
CZ_{(v_1,v_2)} = \exp\left(\frac{i \pi}{4}(I-Z_{v_1})(I-Z_{v_2})\right).
\end{equation}
The nontrivial model can be constructed from these operators as a sum of local terms
\begin{equation}
H_1 = -\sum_{v \in \Delta_0} h_v, \qquad h_v = X_v \prod_{e \in \text{Link}_1 (v)} CZ_e,
\end{equation}
where the $\text{Link}_1(v)$ consists of the neighbouring edges of $v$ that do not contain $v$, as depicted by thick blue edges in Fig.~\ref{figTriangleLattice}. We note that $H_1$ is slightly different to the model presented in \cite{LevinGu}, but they are equivalent up to a constant depth quantum circuit comprised of symmetric gates. Each of the terms $h_v$ are commuting and satisfy $h_v^2 = I$, and therefore have eigenvalues $\pm1$. One can confirm that this model shares the same $\zz_2$ symmetry as the trivial paramagnetic model $H_0$.

\begin{figure}%
	\centering
	\subfloat[]{{\includegraphics[width=0.32\linewidth]{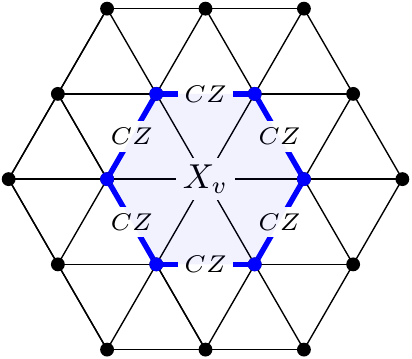} }\label{figTriangleLattice}}%
	\qquad
	\subfloat[]{{\includegraphics[width=0.28\linewidth]{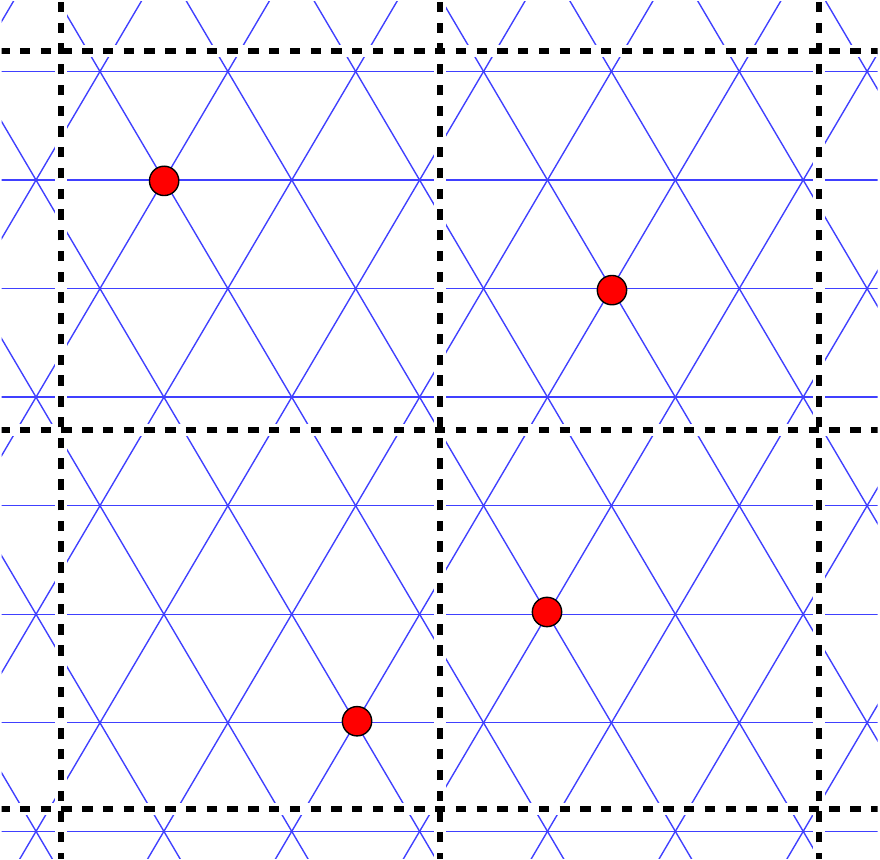} }\label{figLatticeGrid}}%
	\caption{(Color online) (a) The triangular lattice and one of the terms $h_v$ belonging to $H_1$. The 1-link of the vertex $v$ is the set of blue (thick line) edges. (b) A valid configuration has sinks (large dots) for each square region in $\mathcal{P}_l$, where a sink is a vertex $v$ with $k_v = 0$.}%
\end{figure}

The unique ground state $\ket{\psi_1}$ is the $+1$-eigenstate of each of the terms $h_v$. Additionally, one can show that this model is short-range entangled, as it can be connected to the trivial ground state via the following unitary $U_1 = \prod_{t \in \Delta_2}C^{\otimes 2}Z_{t}$, where $C^{\otimes 2}Z_{t}$ the 3-qubit controlled-$Z$ unitary acting on qubits in a triangle $t = (v_1,v_2,v_3)$ as:
\begin{equation}
C^{\otimes 2}Z_{t} = \exp\left(\frac{i \pi}{8}(I-Z_{v_1})(I-Z_{v_2})(I-Z_{v_3})\right).
\end{equation}
The unitary $U_1$ given by the whole circuit commutes with the symmetry, $[U_1, S] = 0$, provided the lattice has no boundary. But importantly, each gate in the circuit is not symmetric on its own, $[C^{\otimes 2}Z_{t}, S] \neq 0$. It is shown in \cite{LevinGu} that $H_1$ cannot be adiabatically connected to the trivial paramagnet $H_0$ without closing the gap or breaking the symmetry, so it is impossible to approximate $U_1$ by a constant depth circuit comprised of symmetric gates. Therefore, $H_1$ has nontrivial SPT-order at zero temperature. 

Now let us show that the model $H_1$ becomes SPT-trivial at nonzero temperature. Similarly to Refs.~\cite{HastingsThermal, SY}, we associate a binary value $k_v \in \{0,1\}$ to each site $v \in \Delta_0$ to indicate the presence or absence of a term in an \textit{imperfect} Hamiltonian:
\begin{equation}
H(\textbf{\textit{k}}) = -\sum_{v\in \Delta_0} k_v h_v,
\end{equation}
where $\textbf{\textit{k}} \in \{0,1\}^N$. For a given imperfect Hamiltonian $H(\textbf{\textit{k}})$ we say a site $v$ is a \textit{sink} if $k_v = 0$, corresponding to a missing term. We now wish to express the Gibbs ensemble in terms of a convex sum of the ground spaces of imperfect Hamiltonians.
Let $\overline{\rho}(\textbf{\textit{k}})$ be the uniform mixture of symmetric ground states of $H(\textbf{\textit{k}})$. Then following~\cite{SY}, we define the free symmetric ensemble at $\beta=T^{-1}$ as
\begin{equation}\label{EqFreeGibbs}
\rho_{f}(\beta) = \sum_{\textbf{\textit{k}} \in \{0,1\}^{N}} \Pr(\textbf{\textit{k}})\overline{\rho}(\textbf{\textit{k}}),
\end{equation}
where $\Pr(\textbf{\textit{k}})$ is a probability distribution
\begin{equation}\label{DefPr}
\Pr(\textbf{\textit{k}}) = (1-p_{\beta})^{w(\textbf{\textit{k}})}p_{\beta}^{N-w(\textbf{\textit{k}})}, \qquad p_{\beta} = \frac{2}{e^{2\beta} +1},
\end{equation}
and $w(\textbf{\textit{k}})$ is the Hamming weight of the vector $\textbf{\textit{k}}$ (the number of nonzero entries).

\begin{lem}\label{SinkDist}
	Let $\rho(\beta)$ be the symmetric Gibbs ensemble of $H_1$ with $T \textgreater 0$, then 
	\begin{equation}
	\norm{\rho(\beta) - \rho_f(\beta)}_1 \leq \calO(e^{-\eta N})
	\end{equation}
	for some constant $\eta \textgreater 0$ (independent of system size). 
\end{lem}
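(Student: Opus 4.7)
The strategy is to exploit the commuting-projector structure of $H_1$: since the $h_v$ mutually commute and satisfy $h_v^2 = I$, the operator $e^{-\beta H_1}$ factorizes site by site and can be expanded in a basis of Bernoulli-weighted projectors that almost exactly reproduces the defining expansion of $\rho_f(\beta)$. As preparation, I would first establish the algebraic identity $S = \prod_{v \in \Delta_0} h_v$: each $CZ_e$ with $e \in \text{Link}_1(v)$ is shared by exactly the two vertices opposite $e$ in the adjacent triangles and $CZ_e^2 = I$, so only $\prod_v X_v = S$ survives the product. Consequently the commuting family $\{h_v\}$ satisfies exactly one global relation.

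The main computation begins with the one-site factorization
\begin{equation*}
e^{\beta h_v} = \cosh\beta\,\bigl[\,p_\beta\, I + 2(1-p_\beta)\,\Pi_v^+\,\bigr], \qquad \Pi_v^+ := \tfrac{1}{2}(I+h_v),
\end{equation*}
which follows by matching coefficients on the $\pm 1$-eigenspaces of $h_v$. Taking the product over $v$ and collecting terms gives
\begin{equation*}
e^{-\beta H_1} = (\cosh\beta)^N \sum_{\mathbf{k}\in\{0,1\}^N} \Pr(\mathbf{k})\, 2^{w(\mathbf{k})}\, \Pi(\mathbf{k}), \qquad \Pi(\mathbf{k}) := \prod_{v:\,k_v=1} \Pi_v^+.
\end{equation*}
I would then project onto the symmetric sector: $P_+ \Pi(\mathbf{k})$ is the projector onto the symmetric ground space of $H(\mathbf{k})$, so writing $P_+\Pi(\mathbf{k}) = d_{\mathbf{k}}\,\overline{\rho}(\mathbf{k})$ and using the single relation $S = \prod_v h_v$, a short dimension count yields $d_{\mathbf{k}} = 2^{N-w(\mathbf{k})-1}$ for $\mathbf{k}\neq\mathbf{1}$, while $d_{\mathbf{1}} = 1$ because enforcing $h_v = +1$ everywhere already implies $S = +1$. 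The combinatorial factor $2^{w(\mathbf{k})}$ therefore combines with $d_{\mathbf{k}}$ into the universal constant $2^{N-1}$ for every $\mathbf{k} \neq \mathbf{1}$, with $\mathbf{k} = \mathbf{1}$ contributing an extra factor of two.

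After normalising by $\calZ$, the above yields the compact identity
\begin{equation*}
\rho(\beta) = \frac{\rho_f(\beta) + (1-p_\beta)^N\,\overline{\rho}(\mathbf{1})}{1 + (1-p_\beta)^N},
\end{equation*}
and the triangle inequality then gives $\norm{\rho(\beta) - \rho_f(\beta)}_1 \leq 2(1-p_\beta)^N$. Since $1-p_\beta = (e^{2\beta}-1)/(e^{2\beta}+1) < 1$ for any $T > 0$, this is $\calO(e^{-\eta N})$ with $\eta := -\log(1-p_\beta) > 0$. The main obstacle will be the careful bookkeeping of the multiplicities $d_{\mathbf{k}}$ in the presence of the single global relation $\prod_v h_v = S$: this relation is precisely what forces the Gibbs weights to agree with the Bernoulli weights $\Pr(\mathbf{k})$ for every $\mathbf{k}$ except $\mathbf{k} = \mathbf{1}$, whose isolated excess produces the exponentially small defect.
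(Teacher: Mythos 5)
Your proposal is correct and takes essentially the same route as the paper's proof: both expand the commuting factor $e^{\beta h_v}$ into the Bernoulli mixture of $I$ and $\tfrac{1}{2}(I+h_v)$, reorganize $e^{-\beta H_1}$ as $\sum_{\mathbf{k}}\Pr(\mathbf{k})$ times ground-space projectors of the imperfect Hamiltonians $H(\mathbf{k})$, project onto the symmetric sector, and isolate $\mathbf{k}=(1,\ldots,1)$ as the sole configuration whose symmetric ground space is not half-dimensional, arriving at the same bound $2\Pr(\mathbf{k}_1)/(1+\Pr(\mathbf{k}_1))\leq 2(1-p_\beta)^N$. Your only refinements are cosmetic: you make the relation $S=\prod_v h_v$ explicit (the paper uses it implicitly, via $\Tr(P\rho(\mathbf{k})P)=\tfrac{1}{2}$ for $\mathbf{k}\neq\mathbf{k}_1$ together with uniqueness and symmetry of the ground state at $\mathbf{k}_1$, both of which follow from $h_v=U_1X_vU_1^{\dagger}$ and $[U_1,S]=0$), and you package the conclusion as an exact identity for $\rho(\beta)$ rather than triangle-inequality bookkeeping.
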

\begin{proof}
	The proof is similar to that in Ref.~\cite{SY}. Consider first the usual Gibbs ensemble $\rho'(\beta)$ of $H_1$ (without enforcing the symmetry). Because $H_1$ is a sum of commuting terms, we have
	\begin{align}
	\rho'(\beta) &= \frac{1}{\calZ'} \prod_{v \in \Delta_0} e^{\beta h_v}.
	\end{align}
	Since each term satisfies $h_v^2 = I$, we have $\exp(\beta h_v) = \cosh(\beta)I + \sinh(\beta) h_v$. Introducing a new normalization factor $\tilde{\calZ} = {(e^{ \beta}+e^{-\beta})^{N}}{\calZ'}$ we have
	\begin{align}
	\rho'(\beta) &= \frac{1}{\tilde{\calZ}}\prod_{v \in \Delta_0}  \left((1-p)\frac{I + h_v}{2} + p\frac{I}{2} \right),
	\end{align}
	where we have set $p = {2}/{(e^{2\beta} +1)}$. Expanding this out and introducing a dummy binary variable $k_v$ for each vertex $v \in \Delta_0$, we have 
	\begin{equation}
	\rho'(\beta) = \frac{1}{\tilde{\calZ}}\sum_{\textbf{\textit{k}} \in \{0,1\}^{N}} \left(\prod_{v \in \Delta_0}(1 - p)^{k_v}p^{1-k_v}\left( \frac{I + k_v h_v}{2}\right)\right),
	\end{equation}
	which we can rewrite as
	\begin{equation}
	\rho'(\beta) = \frac{1}{\tilde{\calZ}}\sum_{\textbf{\textit{k}} \in \{0,1\}^{N}} \Pr(\textbf{\textit{k}}){\rho}(\textbf{\textit{k}}), 
	\end{equation}
	where
	\begin{equation}
	{\rho}(\textbf{\textit{k}}) = \frac{1}{2^{N}}\prod_{v \in \Delta_0}\left(I + k_v h_v \right),
	\end{equation}
	and $\Pr(\textbf{\textit{k}})$ is given by Eq. (\ref{DefPr}). Note that ${\rho}(\textbf{\textit{k}})$ is a uniform mixture of all ground states of the imperfect Hamiltonian $H(\textbf{\textit{k}})$. Let us confirm that the normalization of ${\rho}(\textbf{\textit{k}})$ is correct. For any subset $M \subseteq \Delta_0$, we have 
	\begin{equation}
	\Tr\left(\prod_{v \in M} h_v\right) = \Tr\left(\prod_{v \in M}  U_1X_vU_1^{\dagger}\right) 
	= \Tr\left(\prod_{v \in M} X_v\right)
	= 0,
	\end{equation}
	and therefore $\Tr\left( {\rho}(\textbf{\textit{k}})\right) = 1$. Now notice that 
	\begin{equation}\label{EqSumProb}
	\sum_{\textbf{\textit{k}}\in \{0,1\}^N} \Pr(\textbf{\textbf{\textit{k}}}) =\sum_{l=0}^{N} \binom{N}{l}(1-p)^{l}p^{N-l} =1,
	\end{equation}
	and therefore $\tilde{\calZ} = 1$, which means we have 
	\begin{equation}
	\rho'(\beta) = \sum_{\textbf{\textit{k}} \in \{0,1\}^{N}} \Pr(\textbf{\textit{k}}) {\rho}(\textbf{\textit{k}}).
	\end{equation}
	Having considered the usual Gibbs ensemble without symmetries, we now consider the Gibbs ensemble with the symmetry enforced. Let $P= (I + S)/2$ be the projector onto the $+1$-eigenspace of $S$ (recall, $S$ is the symmetry operator defined in Eq. (\ref{EqSym})). The symmetric Gibbs ensemble $\rho(\beta)$ can be obtained by projecting $\rho'(\beta)$ into the symmetric sector and renormalizing
	\begin{equation}\label{EqSymEnsP}
	\rho(\beta) = \frac{P \rho'(\beta)P}{\Tr(P \rho'(\beta)P)}.
	\end{equation}	
	For $\textbf{\textit{k}}_1  := (1,\ldots,1)$, the Hamiltonian $H(\textbf{\textit{k}}_1)$ has a unique and symmetric ground state and therefore $\overline{\rho}(\textbf{\textit{k}}_1) = {\rho}(\textbf{\textit{k}}_1)$. For $\textbf{\textit{k}} \neq \textbf{\textit{k}}_1$, the imperfect Hamiltonian $H(\textbf{\textit{k}})$ has a $2^{N - w(\textbf{\textit{k}})}$-dimensional ground space, which is partitioned equally into the $+1$- and $-1$-eigenspaces of $S$. Therefore we have 
	\begin{equation}\label{EqTrSym}
	\Tr(P\rho(\textbf{\textit{k}})P) = \frac{1}{2} \Tr(\rho(\textbf{\textit{k}})) = \frac{1}{2} \qquad \forall \textbf{\textit{k}} \neq \textbf{\textit{k}}_1.
	\end{equation}
	The symmetric ground space projectors of the imperfect Hamiltonian $H(\textbf{\textit{k}})$ can be written
	\begin{equation}\label{EqSymGSP}
	\overline{\rho}(\textbf{\textit{k}}) = 
	\begin{cases}
	\rho(\textbf{\textit{k}}) \quad \text{if} \quad \textbf{\textit{k}} = \textbf{\textit{k}}_1, \\
	2 P\rho(\textbf{\textit{k}})P \quad \text{otherwise}
	\end{cases}
	\end{equation}
	Let us evaluate the normalization factor $\calZ = \Tr(P \rho'(\beta)P)$. We obtain 
	\begin{align}
	\calZ&= \sum_{\textbf{\textit{k}}\in \{0,1\}^N} \text{Pr}(\textbf{\textit{k}}) \Tr(P\rho(\textbf{\textit{k}})P) = \sum_{\textbf{\textit{k}}\neq \textbf{\textit{k}}_1} \frac{1}{2}\text{Pr}(\textbf{\textit{k}}) + \text{Pr}(\textbf{\textit{k}}_1) = \frac{1}{2}(1 + \text{Pr}(\textbf{\textit{k}}_1)).  \label{EqZPrk1}
	\end{align}
	In particular, notice that $\calZ \in [\frac{1}{2}, 1]$.
	Then the trace distance between $\rho_f(\beta)$ and $\rho(\beta)$ is given by
	\begin{align}
	\norm{\rho(\beta)- \rho_f(\beta)}_1 &= \norm{\calZ^{-1} \sum _{\textbf{\textit{k}} \in \{0,1\}^N} \Pr(\textbf{\textit{k}}) P \rho(\textbf{\textit{k}}) P - \sum _{\textbf{\textit{k}} \in \{0,1\}^N} \Pr(\textbf{\textit{k}}) \overline{\rho}(\textbf{\textit{k}})}_1
	\end{align}
	Using Eq. (\ref{EqSymGSP}), and the triangle inequality, we get 
	\begin{align}
	\norm{\rho(\beta)- \rho_f(\beta)}_1 &\leq \left(2-\calZ^{-1}\right)\sum_{\textbf{\textit{k}} \neq \textbf{\textit{k}}_1} \text{Pr}(\textbf{\textit{k}})\norm{P\rho(\textbf{\textit{k}})P}_1 + (1-\calZ^{-1})\text{Pr}(\textbf{\textit{k}}_1)\norm{\rho(\textbf{\textit{k}}_1)}_1 \\
	&= \left(2-\calZ^{-1} \right)\sum_{\textbf{\textit{k}} \neq \textbf{\textit{k}}_1} \text{Pr}(\textbf{\textit{k}})\frac{1}{2} + (\calZ^{-1} - 1)\text{Pr}(\textbf{\textit{k}}_1),
	\end{align}
	where we have used Eq. (\ref{EqTrSym}) in the second line. Then making use of Eqs. (\ref{EqSumProb}) and (\ref{EqZPrk1}), we get
	\begin{align}
	\norm{\rho(\beta)- \rho_f(\beta)}_1 &\leq \frac{1}{2}\left(2-\calZ^{-1}\right)(1- \Pr(\textbf{\textit{k}}_1)) + \left(\calZ^{-1} - 1\right) \text{Pr}(\textbf{\textit{k}}_1)\\
	& \leq \frac{2\Pr(\textbf{\textit{k}}_1)}{1+ \Pr(\textbf{\textit{k}}_1)}.
	\end{align}
	Since $\Pr(\textbf{\textit{k}}_1) = (1-p_{\beta})^N$ and $p_{\beta} \in (0,1]$ for $T \textgreater 0$, we therefore have 
	\begin{equation}
	\norm{\rho(\beta)- \rho_f(\beta)}_1 \leq 2 e^{-N \log(1-p_{\beta})}.
	\end{equation}
	Setting $\eta = -\log(1-p_{\beta}) \textgreater 0$, the claim follows.
\end{proof}

We now divide up the lattice into a square grid $\calP_l$ as in Fig.~\ref{figLatticeGrid}, with each square region having side-length $l=\left(c \log(L)\right)^\frac{1}{2}$ for some constant $c$. We will choose $c$ to be sufficiently large to ensure that, with high probability, there will be at least one sink within each square region.  A configuration $\textbf{\textit{k}}$ is called $l$-\textit{valid} if there is a sink in every square region and invalid otherwise. We want to show that the Gibbs state $\rho(\beta)$ at inverse temperature $\beta$ is well approximated by a distribution over $l$-valid configurations.

\begin{lem}\label{validProb}
	For a given grid $\calP_l$, let $\calV \subseteq \{0,1\}^N$ be the set of $l$-\textit{valid} configurations, and let
	\begin{equation}\label{imperfectState}
	\rho_{\calV}(\beta) = \sum_{\textbf{k} \in \calV} \Pr(\textbf{k}) \overline{\rho}(\textbf{k}).
	\end{equation}
	For any $T \textgreater 0$, there exists a constant $\delta\textgreater 0$ (independent of system size) such that $\rho_{\calV}(\beta)$ satisfies
	\begin{equation}
	\norm{\rho_{\calV}(\beta) - \rho(\beta)}_1 \leq \calO \left( L^{-\delta}\right).
	\end{equation}
\end{lem}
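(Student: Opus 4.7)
The plan is to use the triangle inequality with the free symmetric ensemble $\rho_f(\beta)$ from Lemma \ref{SinkDist} as an intermediate:
\begin{equation}
\norm{\rho_{\calV}(\beta) - \rho(\beta)}_1 \leq \norm{\rho_{\calV}(\beta) - \rho_f(\beta)}_1 + \norm{\rho_f(\beta) - \rho(\beta)}_1.
\end{equation}
The second term is already exponentially small in $N$ by Lemma \ref{SinkDist}. For the first term, note that $\rho_f(\beta) - \rho_{\calV}(\beta) = \sum_{\textbf{\textit{k}} \notin \calV} \text{Pr}(\textbf{\textit{k}})\, \overline{\rho}(\textbf{\textit{k}})$ is a sum of positive operators of unit trace, so its trace norm equals $\text{Pr}(\textbf{\textit{k}} \notin \calV)$. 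It therefore suffices to show that the probability of drawing an $l$-invalid configuration decays polynomially in $L$.

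To bound this probability, I would exploit the product structure of the distribution in Eq.~(\ref{DefPr}): each $k_v$ is an independent Bernoulli variable with $\text{Pr}(k_v = 0) = p_{\beta}$, and $p_{\beta} \in (0,1]$ for $T > 0$. For a fixed region $R \in \calP_l$ of $l^2 = c\log L$ sites, the probability that $R$ contains no sink is
\begin{equation}
(1-p_{\beta})^{c \log L} = L^{-c\eta}, \qquad \eta := -\log(1-p_{\beta}) > 0.
\end{equation}
Since $|\calP_l| = \calO(L^2/\log L)$, the union bound gives $\text{Pr}(\textbf{\textit{k}} \notin \calV) = \calO(L^{2 - c\eta})$.

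To finish, I would choose the constant $c$ in the definition of $\calP_l$ large enough that $c\eta > 2$, and set $\delta := c\eta - 2 > 0$; note that $c$ depends only on the fixed temperature $T$ and not on the system size. Combining this with Lemma \ref{SinkDist} yields
\begin{equation}
\norm{\rho_{\calV}(\beta) - \rho(\beta)}_1 \leq \calO(L^{-\delta}) + \calO(e^{-\eta N}) = \calO(L^{-\delta}),
\end{equation}
since $N = \Theta(L^2)$ makes the exponential term negligible compared to the polynomial one. There is no substantial obstacle: the argument is essentially a union bound over independent Bernoulli trials, and the only structural input needed is that sites are independent under the free symmetric measure, which is manifest from the product form of $\text{Pr}(\textbf{\textit{k}})$.
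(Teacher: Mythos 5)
Your proposal is correct and follows essentially the same route as the paper: reduce to bounding $\Pr(\textbf{\textit{k}} \notin \calV)$ using the positivity and unit trace of the $\overline{\rho}(\textbf{\textit{k}})$, then bound the no-sink probability per region by $(1-p_\beta)^{c\log L} = L^{-c\eta}$ and sum over the $\calO(L^2/\log L)$ regions, choosing $c$ so that $\delta = c\eta - 2 > 0$. The only cosmetic differences are that you make the appeal to Lemma~\ref{SinkDist} explicit via a triangle inequality (the paper absorbs this step, which arguably makes your write-up cleaner) and that you use a union bound where the paper computes the exact product $(1-q_B)^n$ and applies Bernoulli's inequality --- these yield the identical estimate.
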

\begin{proof}
	Recall that $\Pr(\textbf{\textit{k}})= \prod_{v \in \Delta_0} (1-p)^{k_v}p^{1-k_v}$, such that $1-p$ is the probability of having a sink at a given vertex. Let $\text{P}_{\calV}:=\sum_{\textbf{\textit{k}}\in \calV} \Pr(\textbf{\textit{k}})$, then from Lemma \ref{SinkDist}, we have the following	
	\begin{align}
	\norm{\rho_{\calV}(\beta) - \rho(\beta)}_1 &= \norm{\sum_{\textbf{\textit{k}}\not\in \calV} \Pr(\textbf{\textit{k}}) \bar{\rho}(\textbf{\textit{\textbf{\textit{k}}}})}_1 \\
	&\leq \sum_{\textbf{\textit{k}}\not\in \calV} \Pr(\textbf{\textit{k}}) \\
	&= 1-\text{P}_{\calV} \label{EqValidBound}
	\end{align}	
	Let $B$ be the set of vertices within a square region of the grid $\calP_l$. The contribution of configurations containing at least one sink in each square region is given by 
	\begin{equation}
	\text{P}_{\calV} = \prod_{\text{squares } B} (1 - q_B),
	\end{equation}
	where $q_B  = (1-p)^{|B|}$ is the probability that square region $B$ contains no sink.	  Since the probability of each square having a sink is independent, and there are $n = L^2/c\log(L)$ squares in the grid $\calP_l$, using Bernoulli's inequality, we have
	\begin{equation}
	\text{P}_{\calV} = (1- q_B)^n \geq 1 - nq_B.
	\end{equation}
	Since $|B| = c\log(L)$, we have $q_B = L^{c \log(1-p)}$, and Eq. (\ref{EqValidBound}) becomes 
	\begin{align}
	\norm{\rho_{\calV}(\beta) - \rho(\beta)}_1 &\leq \frac{L^{2+c\log(1-p)}}{c\log(L)}\\
	&\leq \frac{1}{c}L^{-\delta},
	\end{align}
	where we have defined $\delta = -2 - c\log(1-p)$. Notice that for $T\textgreater 0$, we have $\log(1-p) \textless 0$. Therefore, choosing $c\textgreater -2/\log(1-p)$ gives $\delta \textgreater 0$.
\end{proof}

We can now show that the symmetric Gibbs ensemble $\rho(\beta)$ is SPT-trivial by constructing a symmetric disentangling circuit that maps $\overline{\rho}(\textbf{\textit{k}})$ to a product state, for each valid configuration $\textbf{\textit{k}}$. Then since $\rho(\beta)$ is approximated by a sum of SPT-trivial states, it follows that $\rho(\beta)$ is SPT-trivial. We note that the following theorem also applies if we replace the symmetric Gibbs ensemble by the usual Gibbs ensemble.

\begin{thm}\label{thm2dTriv}
	For any $T \textgreater 0$, the symmetric Gibbs ensemble $\rho(\beta)$ of $H_1$ is $(r, \epsilon)$ SPT-trivial, where $r=\calO(\log^{\frac{1}{2}}(L))$, and $\epsilon=\calO\left(L^{-\delta}\right)$.
\end{thm}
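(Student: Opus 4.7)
The plan is to combine Lemma~\ref{SinkDist} and Lemma~\ref{validProb} with a parallel symmetric disentangling circuit that acts independently on each square region $B_\alpha\in\calP_l$, exploiting the presence of a sink in each region to locally restore symmetry. By Lemma~\ref{validProb} it suffices to produce a symmetric circuit of depth $r=\calO(l)=\calO(\log^{1/2}(L))$ that prepares $\rho_{\calV}(\beta) = \sum_{\textbf{\textit{k}}\in\calV} \Pr(\textbf{\textit{k}})\overline{\rho}(\textbf{\textit{k}})$ from a classical symmetric Gibbs ensemble (after tracing out ancillas); the triangle inequality then upgrades this to $(r,\epsilon)$-SPT-triviality of $\rho(\beta)$ with $\epsilon=\calO(L^{-\delta})$. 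Since $(r,0)$-SPT-triviality is preserved under taking mixtures produced by the \emph{same} symmetric circuit from a classical Gibbs ensemble (with the weights recorded on the ancilla), the problem reduces to constructing, for each individual $\textbf{\textit{k}}\in\calV$, a depth-$l$ symmetric circuit that disentangles $\overline{\rho}(\textbf{\textit{k}})$.

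For fixed $\textbf{\textit{k}}\in\calV$, pick a sink $v_\alpha$ in each region $B_\alpha$ and aim to construct a symmetric unitary $V_\alpha$ supported on $B_\alpha$ (hence of range $\leq l$) such that $V_\alpha \overline{\rho}(\textbf{\textit{k}})|_{B_\alpha} V_\alpha^\dagger$ is a classical product-basis state. I would start with the truncated global disentangler $U_\alpha = \prod_{t\subseteq B_\alpha} C^{\otimes 2}Z_t$, i.e., the restriction of $U_1$ to triangles fully contained in $B_\alpha$. For $v$ whose $1$-link lies inside $B_\alpha$, conjugation by $U_\alpha$ already maps $h_v$ to $X_v$; however $U_\alpha$ fails to commute with $S|_{B_\alpha}=\prod_{v\in B_\alpha}X_v$ because of the omitted boundary triangles, and conjugation produces a symmetry-violating operator $R_\alpha$ localized near $\partial B_\alpha$. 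Physically, $R_\alpha$ represents a string of $\zz_2$ symmetry charges that would have to terminate in pairs in a closed region, but the sink $v_\alpha$ provides a free endpoint at which a single such charge can be absorbed, because $h_{v_\alpha}$ is absent in $H(\textbf{\textit{k}})$ and the ground space has extra freedom at $v_\alpha$. This allows one to build a compensating string operator $W_\alpha$ of range $\leq l$ running from $\partial B_\alpha$ to $v_\alpha$ so that $V_\alpha := W_\alpha U_\alpha$ is exactly symmetric under $S|_{B_\alpha}$ and still disentangles $\overline{\rho}(\textbf{\textit{k}})|_{B_\alpha}$.

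Because the $V_\alpha$ act on disjoint regions, the assembled disentangler $\calV_{\textbf{\textit{k}}} = \bigotimes_\alpha V_\alpha$ is a single layer of symmetric gates of range $l$, giving depth $r=\calO(\log^{1/2}(L))$, and it maps $\overline{\rho}(\textbf{\textit{k}})$ to the uniform symmetric mixture of ground states of a classical paramagnet with missing terms at the sinks. To match Definition~1, I would enlarge the Hilbert space to $\calH'=\calH\otimes\calK$, with $\calK$ carrying a classical register storing a configuration $\textbf{\textit{k}}$, and take $H_{cl}$ to be a nondegenerate classical Hamiltonian on $\calH'$ that is symmetric under $U(g)=S(g)\otimes I_{\calK}$ and whose symmetric Gibbs ensemble at inverse temperature $\beta$ samples $\textbf{\textit{k}}$ on $\calK$ with weight $\Pr(\textbf{\textit{k}})$ while preparing the corresponding disentangled classical state on $\calH$. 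A depth-$l$ symmetric circuit $\calV$ that applies $\calV_{\textbf{\textit{k}}}$ coherently controlled on the register then yields $\norm{\rho(\beta)-\Tr_{\calK}(\calV\rho_{cl}\calV^\dagger)}_1 = \calO(L^{-\delta})$.

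The hard step is to verify existence of the symmetric corrections $W_\alpha$: one must check that the boundary obstruction $R_\alpha$ is exactly of the form that can be removed by a local string terminating at a single sink, rather than merely shifted to another boundary location. This is precisely where the sink is essential: in a region with no sink, the $\zz_2$ conservation law inside $B_\alpha$ forces any local string compensating $R_\alpha$ to have two endpoints on $\partial B_\alpha$, which merely redistributes the symmetry violation and obstructs the existence of a local symmetric disentangler. Lemma~\ref{validProb} guarantees such a sink exists in every region with probability $1-\calO(L^{-\delta})$ for $l=\calO(\log^{1/2}(L))$, which is exactly the input needed to close the argument. This is also the physical reason, anticipated in the text, why the same strategy will \emph{fail} for higher-form symmetries in the next section, where missing terms cannot act as sinks for the higher-dimensional charged objects.
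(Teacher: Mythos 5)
Your overall architecture---approximating $\rho(\beta)$ by the valid-configuration mixture via Lemmas~\ref{SinkDist} and \ref{validProb}, symmetrically disentangling each $\overline{\rho}(\textbf{\textit{k}})$ at depth $\calO(\log^{1/2}(L))$, and bundling the mixture with a classical register---matches the paper's, and your depth accounting (one layer of range-$l$ gates, depth $rd = \calO(\log^{1/2}(L))$) is consistent with the paper's definition. But the core construction has a genuine gap: a single layer of unitaries $V_\alpha$ supported on \emph{disjoint} grid squares cannot disentangle $\overline{\rho}(\textbf{\textit{k}})$. The terms $h_v = X_v \prod_{e\in \text{Link}_1(v)} CZ_e$ for $v$ near $\partial B_\alpha$ are supported partly in neighbouring squares, so $\overline{\rho}(\textbf{\textit{k}})$ is entangled across every block boundary; the reduced state $\overline{\rho}(\textbf{\textit{k}})|_{B_\alpha}$ is mixed, no unitary on $B_\alpha$ alone can map it to a pure product-basis state, and $\bigotimes_\alpha V_\alpha$ leaves all cross-boundary couplings intact. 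The paper's circuit $\calD_{\textbf{\textit{k}}}$ is deliberately \emph{not} block-local: it consists of constant-range gates applied in $d=(c'\log(L))^{1/2}$ layers sweeping outward from the sinks, with gates $U_{(v,h_{j-1}(v))}$ freely crossing square boundaries; the grid $\calP_l$ serves only to guarantee that every vertex lies within distance $d$ of some sink, so that $V(d)=\Delta_0$.

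The step you yourself flag as hard---symmetrizing the truncated disentangler $U_\alpha$ by a string $W_\alpha$ ending at the sink---is also not established, and the sketched mechanism looks wrong. The obstruction $R_\alpha$ arising from the omitted boundary triangles is a product of $CZ$-type factors supported along the entire \emph{closed loop} $\partial B_\alpha$ (a 1D anomalous, cluster-like entangler), not a string of point charges terminating in pairs, so ``absorbing a single $\zz_2$ charge at the sink'' is not obviously the right picture; if such a local repair existed with no more input than ground-space degeneracy at one site, it would sit uncomfortably with the anomaly argument that makes this model a nontrivial SPT at $T=0$. The paper sidesteps symmetrization entirely: its elementary gates $U_{(v,w)} = \exp\left(\frac{\pi}{4} h_v Z_vZ_w\right)$ and $W_{(v,w)} = \exp\left(\frac{\pi}{4} X_v Z_vZ_w\right)$ are each exactly symmetric by construction, and the sink enters only through the algebraic relations $\{h_v, Z_vZ_h\}=0$ and $[h_w, Z_vZ_h]=0$ for $w\neq v$---because $h_h$ is absent, $Z_vZ_h$ creates a \emph{single} excitation, circumventing the parity conservation law, and the layered circuit rotates each $h_v$ to $-Z_vZ_w$ and then to $X_v$. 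To salvage your one-shot block strategy you would at minimum need overlapping blocks (hence several layers) to handle the boundary entanglement, plus an actual existence proof for the symmetric $W_\alpha$; at that point you would essentially have reconstructed the paper's sink-directed sweep.
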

\begin{proof}
	Let $\textbf{\textit{k}}$ be a valid configuration. To construct a disentangling circuit $\calD_{\textbf{\textit{k}}}$ for $\overline{\rho}(\textbf{\textit{k}})$ we define the elementary gates of the circuit
	\begin{equation}\label{disentangler}
	U_{(v,w)} = \exp\left(\frac{\pi}{4} h_v Z_vZ_{w}\right), \quad W_{(v,w)} = \exp\left(\frac{\pi}{4}X_v Z_vZ_{w} \right), \quad v,w \in \Delta_0.
	\end{equation}
	Notice that both $U_{(v,w)}$ and $W_{(v,w)}$ are symmetric.	Moreover, for any vertex $v$, and any sink $h$, the operator $Z_v Z_h$ has the following commutation and anti-commutation relations
	\begin{equation}\label{ExcitCom}
	\qquad \{ h_v, Z_vZ_{h} \} = 0, \quad [h_w , Z_vZ_{h}] = 0 \quad \forall w\neq v.
	\end{equation}
	Because of the above relations, we can interpret $Z_vZ_{h}$ as an operator which creates an excitation at vertex $v$ in the imperfect Hamiltonian $H(\textbf{\textit{k}})$.
	
	The disentangling circuit $\calD_{\textbf{\textit{k}}}$ is composed of a number of layers $\calD_{\textbf{\textit{k}}} = \prod_{j=1}^d\calD_j$, such that each $\calD_j$ is comprised of gates with constant range, and $d= (c'\log(L))^{\frac{1}{2}}$ for some constant $c'$. The goal is to first disentangle terms near each sink, and then then inductively the next nearest neighbours, moving outwards as depicted in Fig.~\ref{figDisentangler}. We define sets of vertices which determine the order that we perform the gates. Let the initial set of vertices $V(0)$ contain exactly one sink in each square region (if there are many in each square region, choose any of them). Then for $j \geq 1$, let 
	\begin{equation}
	V(j) = \{v \in \Delta_0 ~|~ \text{dist}(v,w) \leq j \text{ for some } w \in V(0) \},
	\end{equation}
	be the union of balls of radius $j$ around each sink, where $\text{dist}(v,w)$ is the shortest path between vertices $v$ and $w$. We also define 
	\begin{equation}
	\overline{V}(j) = V(j) \setminus V(j-1),
	\end{equation}
	to be the set of vertices in $V(j)$ that are not in $V(j-1)$. Notice that for increasing $j$, $V(j)$ defines neighbourhoods of increasing size around each of the sinks, and that $\overline{V}(j)$ can be considered the boundary set of vertices of $V(j)$. 

	\begin{figure}[htb!]
		\centering
		\includegraphics[width=0.32\linewidth]{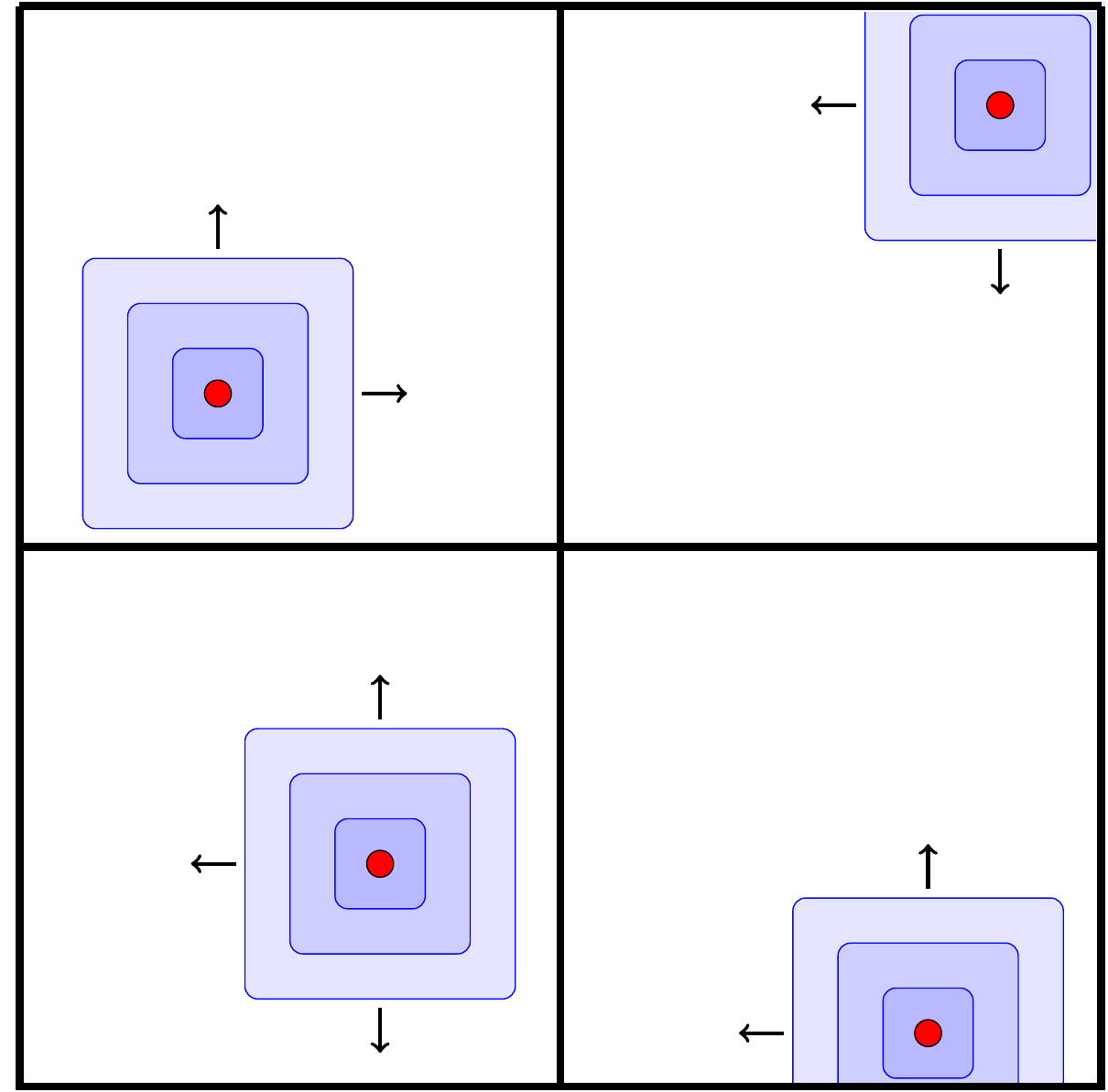}
		\caption{(Color online) The disentangler acts first on terms in $H(\textbf{\textit{k}})$ neighbouring the sinks (large dots), then moves outward. The set $V(0)$ consists of the sinks, depicted as large dots, and the successive shaded (blue) discs represent the sets $V(1)$, $V(2)$ and $V(3)$.
		} 
		\label{figDisentangler}
	\end{figure}
	
	For any vertex $v$, let $h_j(v)\in V(j)$ be the nearest vertex to $v$ that belongs to $V(j)$ (if there are multiple, choose any of them). Then the $j$'th layer of the circuit is defined by 
	\begin{equation}
	\calD_j' = \prod_{v \in \overline{V}(j)} U_{(v,h_{j-1}(v))} .
	\end{equation}
	Now $\calD_j'$ has constant depth for each $j$, because it is comprised of gates supported on a small neighbourhood of $\overline{V}(j)$. The gates can be divided into non-overlapping sets, each of which can be performed simultaneously (for example, the lattice is 3-colorable, and all gates $U_{(v,h_{j-1}(v))}$ with $v$ a fixed colour can be performed in parallel).  	

	Each gate $U_{(v,w)}$ has the following action under conjugation:
	\begin{align}\label{mapU}
	h_v \mapsto -Z_vZ_{w}, 
	\end{align}
	and commutes with $h_l$ for all $l\neq v,w$, and $Z_xZ_{y}$ for all $x,y\neq v$. Notice that for the first layer, $\calD_1$ conjugates all the terms $h_v$ sharing an edge with a sink into terms $-Z_vZ_{h_0(v)}$, where $h_0(v)$ is the sink adjacent to the vertex $v$. Subsequent layers $\calD_j$ map all the terms $h_v$ inside $V(j)$ to terms of the form $Z_vZ_w$. Let the constant $c'$ be chosen such that $d=(c'\log(L))^{\frac{1}{2}}$ is the diameter of each square region. Since each square region has a sink in it, we have $V(d)=\Delta_0$. Therefore, after at most $d$ layers, the circuit $\calD_{\textbf{\textit{k}}}' = \prod_{j=1}^d \calD_j'$ conjugates the imperfect Hamiltonian $H(\textbf{\textit{k}})$ into a sum of terms of the form $Z_vZ_w$. 
	
	Next, we make use of the gates $W_{(v,w)}$. In a similar way, we define the $j$'th layer of a second circuit by 
	\begin{equation}
	\calD_j'' = \prod_{v \in \overline{V}(j)} W_{(v,h_{j-1}(v))}.
	\end{equation}	
	The circuit $\calD_{\textbf{\textit{k}}}'' = \prod_{j=1}^d \calD_j''$ has depth $d$, as can be shown by the same argument given for $\calD_{\textbf{\textit{k}}}'$. Each gate $W_{(v,w)}$ has the following action under conjugation:	
	\begin{align}\label{mapU}
	Z_vZ_{w} \mapsto X_v, 
	\end{align}	
	and commutes with $Z_lZ_m$ for all $l,m \neq v$. Defining $\calD_{\textbf{\textit{k}}} = \calD_{\textbf{\textit{k}}}'' \circ\calD_{\textbf{\textit{k}}}'$, the circuit $\calD_{\textbf{\textit{k}}}$ applied to the imperfect Hamiltonian has the following action
\begin{equation}\label{aferro}
\calD_{\textbf{\textit{k}}} H(\textbf{\textit{k}}) \calD_{\textbf{\textit{k}}}^{\dagger} = \sum_{v \in \Delta_0} k_vX_v := H_0(\textbf{\textit{k}}).
\end{equation}	
Therefore the circuit $\calD_{\textbf{\textit{k}}}$ maps $\overline{\rho}(\textbf{\textit{k}}) \mapsto \overline{\rho}_0(\textbf{\textit{k}})$, where $\overline{\rho}_0(\textbf{\textit{k}})$ is the (normalised) symmetric ground space projector of $H_{0}(\textbf{\textit{k}})$, which is a product state. This holds for each valid configuration $\textbf{\textit{k}}\in \calV$ and therefore each $\overline{\rho}(\textbf{\textit{k}})$ is a $(2d,0)$-trivial state, where $d=(c'\log(L))^{\frac{1}{2}}$ for some constant $c'$.

A state $\rho(\beta)$ is $(r,\epsilon)$ SPT-trivial if and only if it can be approximated up to error $\epsilon$ (in trace norm) by a convex combination of $(r,0)$ SPT-trivial states. Since from Lemma~\ref{validProb} we have that $\rho(\beta)$ is approximated to within $\epsilon = \calO \left( L^{-\delta}\right)$ error by the imperfect state in Eq.~(\ref{imperfectState}), and the imperfect state is a convex combination of $(2d,0)$ SPT-trivial states, the result then follows.
\end{proof}	

The existence of a symmetric unitary $\calD$ that disentangles terms is closely related to the existence of sinks at some sites $k_v=0$, where point-like excitations can be locally created and destroyed. The existence of such excitations is a generic feature of Hamiltonians describing SRE phases, which suggests how the proof can be generalized to arbitrary dimension. In the next subsection, we sketch the more general proof for any SPT models based on group cohomology, using the tools developed in this section.  

\subsection{Thermal instability of SPT for group cohomology models}

We now prove the more general formulation of Theorem \ref{thm2dTriv}: that SPTs in arbitrary dimension, protected by onsite symmetries are trivial at nonzero temperature. We prove this statement for a class of models based on the group cohomology formalism~\cite{ChenGuLiuWen}. This class captures many of the known SPT phases protected by onsite symmetries, and we believe the arguments presented here can be generalised to models with onsite symmetries outside of the formalism. The construction of these models involves some technical details which we briefly review. 

The models are constructed in terms of special functions known as cocycles of the group $G$. A $d$-cochain of the group $G$ over $U(1)$ is a function $\nu_{d}: G^{d+1} \rightarrow U(1)$ that satisfies
\begin{equation}
\nu_d(g_0,g_1,\ldots,g_d)  = \nu_d(gg_0,gg_1,\ldots,gg_d) \qquad \forall g, g_k \in G.
\end{equation}
An important set of $d$-cochains are the $d$-cocycles, which satisfy the additional cocycle condition for any $d+2$ elements $g_0,\ldots,g_{d+1}$ of $G$, namely
\begin{equation}
\prod_{j=0}^{d+1} \nu_d (g_0, \ldots, g_{j-1}, g_{j+1} , \ldots , g_{d+1} )^{(-1)^j} = 1 \qquad \forall g_k \in G.
\end{equation}
An equivalence relation on the set of $d$-cocycles is given by multiplication by a $d$-coboundary. A $d$-coboundary $\lambda_d$ is a $d$-cochain that can be expressed as 
\begin{equation}
\lambda_d(g_0, g_1, \ldots, g_d) = \prod_{j=0}^d \mu_{d-1}(g_0,\ldots,g_{j-1},g_{j+1},\ldots, g_{d})^{(-1)^j},
\end{equation}
for some $(d-1)$-cochain $\mu_{d-1}$. Note that every $d$-coboundary is a $d$-cocycle, but not necessarily the other way around. The equivalence classes of $d$-cocycles are labelled by elements of the $d$-cohomology group $\calH^{d}(G, U(1))$.

For a system with symmetry group $G$ in $d$ spatial dimensions, consider a triangulation $T^{\Delta}$ of a $d$-dimensional manifold.  We label the $k$-simplexes of the triangulation by $\sigma_k$, and the set of all $k$-simplexes by $\Delta_k$. We assume that $T^{\Delta}$ has a bounded degree (the number of edges containing any given vertex must be constant). Additionally, we require that the triangulation has a branching structure (an orientation on each edge such that there is no oriented loop on any triangle) which allows us to give a parity $P(\sigma_d) = \pm 1$ to each $d$-simplex. To each vertex $v \in \Delta_0$, we associate a $|G|$-dimensional Hilbert space, a basis for which is given by $\{ \ket{g} , g \in G \}$. Let $N = |\Delta_0|$ be the number of spins. The symmetry action is given by the left regular representation
\begin{equation}
S(g) \ket{g_1,\ldots,g_N} = \ket{gg_1,\ldots,gg_N}.
\end{equation}
Consider first the trivial product state
\begin{equation}
\ket{\psi_0} =  \ket{+}^{\otimes N}, \quad \text{where} \quad \ket{+} = \frac{1}{\sqrt{|G|}} \sum_{g\in G} \ket{g},
\end{equation}
which is the ground state of the trivial Hamiltonian
\begin{equation}\label{trivcohoSPT}
H_0 = \sum_{v \in \Delta_0}\left(I - 2 |{+}\rangle\langle{+}|_v\right),
\end{equation}
where the notation $|{+}\rangle\langle{+}|_v$ means the projector $|{+}\rangle\langle{+}|$ at site $v$, and identity elsewhere. Notice that $\left(I - 2 |{+}\rangle\langle{+}|_v\right)^2 = I$. For any $(d+1)$-cocycle $\nu_{d+1}$, one can construct the unitary 
\begin{align}
U = \prod_{\sigma_d \in \Delta_d} {(U^{\sigma_d}_{\nu_{d+1}})}^{P(\sigma_d)}, \qquad P(\sigma_d)=\pm 1,
\end{align}
where $U^{\sigma_d}_{\nu_{d+1}}$ acts on spins that are vertices of $\sigma_d$ and $P(\sigma_d)$ represents the orientation of $\sigma_d$. Here $U_{\nu_{d+1}}$ is a $(d+1)$-body diagonal phase operator that acts as
\begin{align}
U_{\nu_{d+1}}|g_{1},\ldots,g_{d+1}\rangle = \nu_{d+1}(1,g_{1},\ldots,g_{d+1})|g_{1},\ldots,g_{d+1}\rangle.
\end{align}
Consider the Hamiltonian $H(\nu_{d+1}) = UH_0U^{\dagger}$ with ground state $\ket{\psi(\nu_{d+1})} = U \ket{\psi_{0}}$. Two important results in Ref.~\cite{ChenGuLiuWen} are the following:
\begin{enumerate}
	\item The unitary $U$, Hamiltonian $H(\nu_{d+1})$, and state $\ket{\psi(\nu_{d+1})}$ are symmetric under the onsite symmetry of $G$.
	\item If $\nu_{d+1}$ is nontrivial (i.e. it is not equivalent to the constant $\nu_{d+1}' = 1$), then $\ket{\psi(\nu_{d+1})}$ belongs to a nontrivial SPT phase.
\end{enumerate}
An important consequence of the cocycle functions that will be used in our proof, is their invariance under the so-called Pachner moves. Pachner moves are local operations that convert one triangulation into another one. Any two triangulations of a (piecewise linear) manifold can be related by a sequence of Pachner moves. In two-dimensions, the two basic Pachner moves are shown in Fig.~\ref{figPachner}. If two triangulations are related by a sequence of Pachner moves, then the SPT wavefunctions on these triangulations are related by a symmetric unitary combined with the addition/removal of ancillas in the $\ket{+}$ state. Since a sequence of Pachner moves corresponds to a symmetric unitary, we can define the depth of this sequence. Namely, we define a parallel Pachner move as any sequence of Pachner moves performed on disjoint $d$-simplexes. Then the depth of a Pachner sequence is the number of parallel Pachner moves, multiplied by the (max) diameter of the $d$-simplexes that are acted upon (this equals the depth of the corresponding symmetric unitary). 

\begin{figure}
	\centering
	\includegraphics[width=0.38\linewidth]{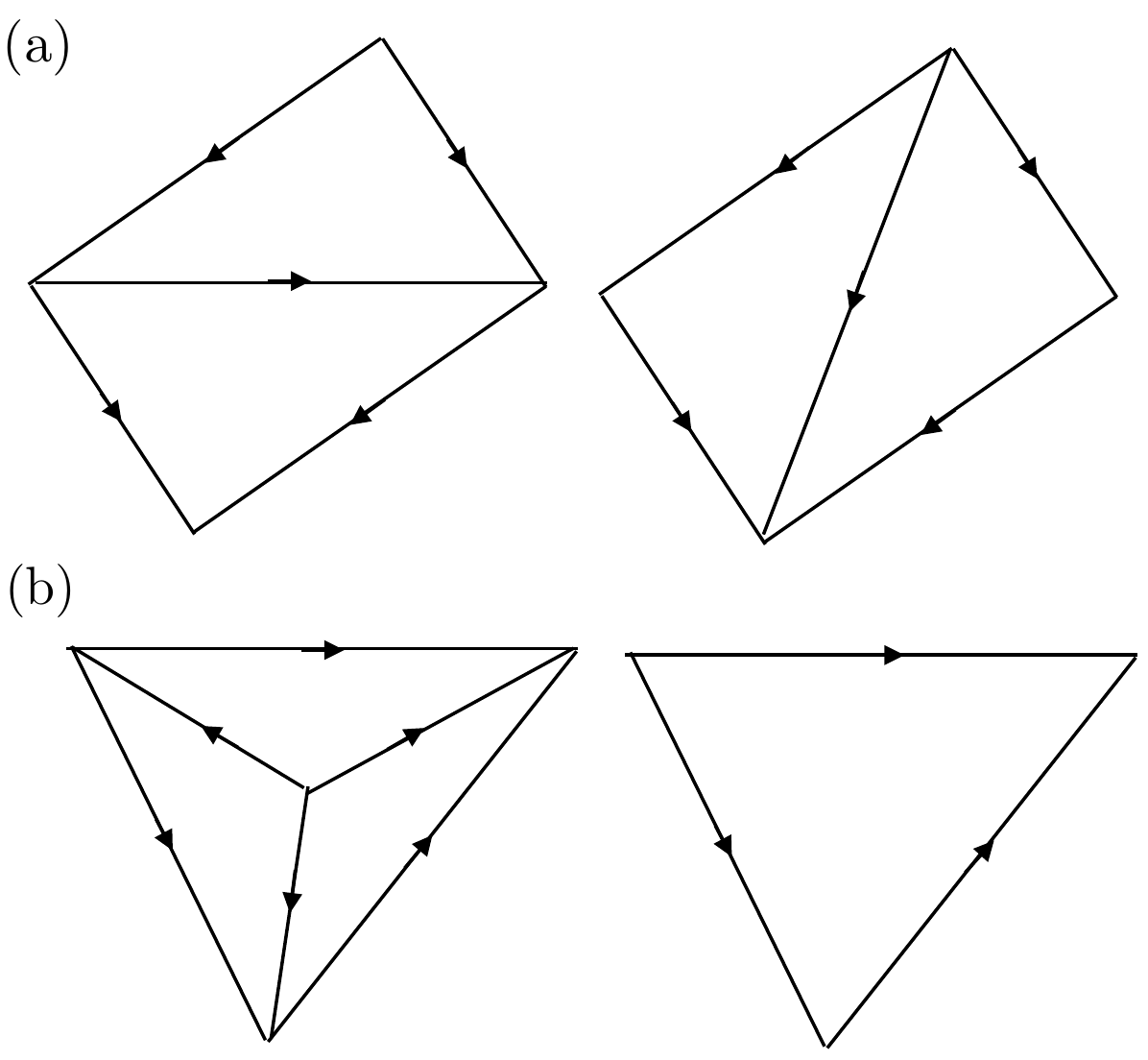}
	\caption{In two dimensions, there are two distinct Pachner moves: (a) two triangles are replaced by two triangles and (b) three triangles are replaced by one triangle and the number of vertices changes by 1. The arrows represent the orientation of each edge. Notice that there are no oriented loops on any triangle.
	} 
	\label{figPachner}
\end{figure}

We now prove that for any group $G$ and in any dimension $d$, the above Hamiltonian must have trivial SPT-order at nonzero temperature. The proof proceeds in a similar way to Theorem \ref{thm2dTriv}, where we first approximate the Gibbs ensemble $\rho(\beta)$ by a convex combination of valid configurations, and then show that each valid configuration is low-depth equivalent to a classical ensemble. Since a combination of trivial ensembles is trivial, the result follows.
\begin{thm}\label{onsiteTrivial}
	For any $T \textgreater 0$, the symmetric Gibbs state $\rho(\beta)$ of $H(\nu_{d+1})$ is $(r, \epsilon)$ SPT-trivial, where $r= \calO(\log(L) \log\log(L))$, and $\epsilon =  1/ \text{poly}(L)$, where $poly(L)$ is a polynomial in the lattice linear size $L$.
\end{thm}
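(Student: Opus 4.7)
The plan is to adapt the proof of Theorem \ref{thm2dTriv} to the group cohomology setting by following the same three-step structure: decompose the symmetric Gibbs ensemble as a mixture over symmetric ground spaces of imperfect Hamiltonians, show that valid configurations (those with a sink in every block of a coarse-graining $\calP_l$) carry all but inverse-polynomial weight, and then construct for each valid configuration a symmetric low-depth disentangler mapping $\overline{\rho}(\textbf{\textit{k}})$ to a product state.

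First, define the imperfect Hamiltonians $H(\textbf{\textit{k}}) = -\sum_v k_v h_v$ with $h_v = U(I - 2|{+}\rangle\langle{+}|_v)U^\dagger$. The statements analogous to Lemmas \ref{SinkDist} and \ref{validProb} should go through with only cosmetic changes, since the only properties used are $h_v^2 = I$ and $\Tr(\prod_{v\in M} h_v) = 0$ for any nonempty $M$ (both inherited from the corresponding identities for $I - 2|{+}\rangle\langle{+}|$), together with the fact that for $\textbf{\textit{k}} \neq \textbf{\textit{k}}_1$ the symmetric projector $P$ onto the trivial $G$-sector captures a $1/|G|$ fraction of $\overline{\rho}(\textbf{\textit{k}})$: at any sink $v$ the local operators $u_v(g)$ commute with all $h_w$ for $w \neq v$ and permute the symmetry sectors of the imperfect ground space transitively. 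Choosing blocks of $\Theta(\log L)$ vertices gives the $1/\text{poly}(L)$ bound on the invalid contribution.

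The main new ingredient is the disentangler $\calD_{\textbf{\textit{k}}}$, and here I would exploit the Pachner invariance of the cocycle wavefunction: every Pachner move that removes a vertex from the triangulation is implemented by a symmetric unitary of constant range combined with discarding the removed vertex in the state $\ket{+}$. For each valid $\textbf{\textit{k}}$, pick one distinguished sink per block and organise the non-sink vertices of that block into concentric shells $\overline{V}(1), \overline{V}(2), \ldots$ around the sink, mirroring the $V(j)$ construction of Theorem \ref{thm2dTriv}. Shell-by-shell, I apply symmetric Pachner-move unitaries that absorb each shell into the interior, using the bounded chromatic number of the simplicial complex to execute non-overlapping moves in parallel. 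After processing a block, all terms $h_v$ on non-sink vertices have been conjugated away and those vertices are in $\ket{+}$, while the state on the sink is unconstrained and contributes to a classical symmetric ensemble. The serial cost per block is $\mathcal{O}(\log L)$ Pachner rounds, each carrying an $\mathcal{O}(\log\log L)$ overhead to decompose the constant-body Pachner-move unitary into two-body symmetric gates and to schedule parallel moves without interference between adjacent shells, yielding total depth $\mathcal{O}(\log L \log\log L)$.

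The main obstacle is verifying the symmetry and parallelisability of this Pachner-move disentangler. One must confirm that each Pachner-move unitary commutes with the onsite $G$-action (this follows from the cocycle condition but requires careful bookkeeping of the branching structure), that after each shell is evacuated the residual Hamiltonian on the interior remains a valid cohomological SPT model on a retriangulated region with the distinguished sink still acting as a sink so that the induction can proceed, and that the depth overhead for parallel scheduling is indeed $\mathcal{O}(\log\log L)$ rather than a larger power of $\log L$. Once these are established, a convex combination of $(r,0)$-trivial states of total weight $1-\epsilon$ is $(r,\epsilon)$-SPT-trivial, so combining with the two lemmas completes the proof.
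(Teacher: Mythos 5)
Your first two steps coincide with the paper's: it likewise decomposes the symmetric Gibbs state over imperfect Hamiltonians $H(\textbf{\textit{k}})$, keeps only configurations with a sink in each hypercubic block of side $l=(c\log L)^{1/d}$ via a straightforward generalization of Lemma~\ref{validProb} (your $1/|G|$ sector bookkeeping correctly fills in a detail the paper leaves implicit), and builds the disentangler from Pachner invariance of the cocycle wavefunction. The gap is in the disentangler itself. The paper does \emph{not} contract each block shell-by-shell onto its sink; it first performs $\calO(\log\log L)$ rounds of uniform lattice \emph{renormalization} --- each doubling of the edge length costs only a constant number of parallel Pachner moves because the triangulation stays self-similar with bounded degree --- and only afterwards does a short ``vertex shifting'' stage to move the single remaining vertex per block onto the chosen sink. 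Each move acts on simplexes of size at most $l^d=c\log L$, giving $r=\calO(\log L\log\log L)$; the paper explicitly remarks that the key ingredient is that the degree remains bounded at \emph{every} intermediate stage.

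Your shell contraction violates exactly that. Once shells $\overline V(1),\dots,\overline V(j)$ of a block have been absorbed, the sink is joined by (long) edges to essentially all $\Omega(j^{d-1})$ vertices of shell $j+1$; every Pachner move evacuating that shell then involves a simplex containing the sink, and any two such moves act on a common qubit, so they cannot sit in the same layer --- the ``bounded chromatic number'' scheduling argument fails because the complex does not retain bounded degree (sweeping outermost-shell-first produces the same pathology at the interface with neighbouring blocks). The arithmetic also does not support the stated bound even granting parallelism: a block of side $l=(c\log L)^{1/d}$ has $\calO((\log L)^{1/d})$ shells, not $\calO(\log L)$, and the $\calO(\log\log L)$ ``overhead to decompose the constant-body Pachner unitary into two-body symmetric gates'' is both unnecessary (the circuit model permits symmetric gates of constant range on more than two qubits) and generally unavailable (a symmetric few-body unitary need not factor into symmetric two-body gates --- that obstruction is precisely what makes these phases nontrivial; the paper's $\log\log L$ instead counts renormalization scales). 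An honest serial accounting of your contraction, with range-$j$ gates at shell $j$, gives depth of order $\sum_{j\leq l} j^{d-1}\cdot j \sim l^{d+1} = \calO\big((\log L)^{(d+1)/d}\big)$, which exceeds $\calO(\log L\log\log L)$ in every dimension; that would still yield a polylogarithmic-depth triviality statement, but it does not establish the theorem as stated, and the renormalization sequence is the missing idea needed to do so.
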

\begin{skproof} 
	Let $T^{\Delta}$ be the triangulation upon which $H(\nu_{d+1})$ is defined. For simplicity of presentation, we assume the triangulation is translationally invariant on some scale (although non-essential, this allows us to use a lattice renormalization argument). We assume that for each hypercubic region of side-length $l$, there is a constant number $N_c = \calO(l^d)$ of vertices in $T^{\Delta}$. Since $T^{\Delta}$ has bounded degree (by assumption), we have that each vertex belongs to a constant number of $d$-simplexes. 
	
	Similarly to the two-dimensional case, we divide up the lattice into a hypercubic grid $\calP_l$ such that each hypercubic region has side-length $l=\left(c \log(L)\right)^\frac{1}{d}$ for some constant $c$. For each $\textbf{\textit{k}} \in \{0,1\}^N$, we can define an imperfect Hamiltonian $H(\textbf{\textit{k}})$. Let $\overline{\rho}(\textbf{\textit{k}})$ be the normalized, symmetric ground-space projector of $H(\textbf{\textit{k}})$. Any configuration that has at least one sink in every hypercubic region is called valid. By a straightforward generalization of Lemma \ref{validProb}, we can approximate $\rho(\beta)$, up to an error that is an inverse polynomial in the system size $L$ by a weighted combination of valid configurations $\overline{\rho}(\textbf{\textit{k}})$.
	
	Fix a valid configuration $\textbf{\textit{k}}$, and let $S$ be a subset of vertices containing precisely one sink in each hypercubic region. The goal is to find a sequence of Pachner moves taking $T^{\Delta}$ to a different triangulation $T^S$, whose vertex set is the chosen set of sinks $S$ (note that $T^S$ is not uniquely determined, but any choice will suffice). This sequence of Pachner moves gives a corresponding symmetric unitary $\calD_{\textbf{\textit{k}}}$, taking the imperfect Hamiltonian to a trivial Hamiltonian. In particular, let $T^{S}$ be any triangulation with vertices that are sinks and whose set of $k$-simplexes is labelled by $\Delta_k^{S}$ (see Fig. \ref{figPachnerShift2}). Then 
	\begin{equation}
	U_{\textbf{\textit{k}}}^{(S)}=\prod_{\sigma_d\in \Delta_d^S} {(U^{\sigma_d}_{\nu_{d+1}})}^{P(\sigma_d)}
	\end{equation}
	is a symmetric unitary that is supported entirely on the set of sinks and therefore the imperfect Hamiltonian $H_0(\textbf{\textit{k}})$ of the trivial model $H_0$ in Eq.~(\ref{trivcohoSPT}) is invariant under $U_S$. Then since $T^S$ and $T^{\Delta}$ are Pachner equivalent, there exists a symmetric unitary $\calD_{\textbf{\textit{k}}}$ such that
	\begin{equation}
	\calD_{\textbf{\textit{k}}} H(\textbf{\textit{k}}) \calD_{\textbf{\textit{k}}}^{\dagger} = U_{\textbf{\textit{k}}}^{(S)}H_0(\textbf{\textit{k}})U_{\textbf{\textit{k}}}^{(S)\dagger} = H_0(\textbf{\textit{k}}),
	\end{equation}
	from which it follows that $\calD_{\textbf{\textit{k}}}\overline{\rho}(\textbf{\textit{k}}) \calD_{\textbf{\textit{k}}}^{\dagger}$ is a trivial product state. 
	
	Now it only remains to determine an upper bound on the depth of the circuit $\calD_{\textbf{\textit{k}}}$ corresponding to this sequence of Pachner moves. We now describe a sequence of Pachner moves taking $T^{\Delta}$ to $T^S$ that upper bounds the depth of $\calD_{\textbf{\textit{k}}}$ by $\calO(\log(L) \log\log(L))$.
	
	The sequence of Pachner moves taking $T^{\Delta}$ to $T^S$ can be divided into two steps: a renormalization sequence, followed by a small vertex shifting. We present the argument in 2 dimensions, as the case for higher dimensions works analogously, where the 2 dimensional Pachner moves are replaced with the corresponding higher dimensional Pachner moves. The steps are depicted in Fig. \ref{figPachnerShift}. Note that we will keep track of the original vertices throughout, which we refer to as \textit{ambient vertices} (as they correspond to the original degrees of freedom). Any ambient vertices that are not part of new triangulations correspond to spins in the $\ket{+}$ state. 
	
	\begin{figure}[htb!]%
		\centering
		\subfloat[]{{\includegraphics[width=0.26\linewidth]{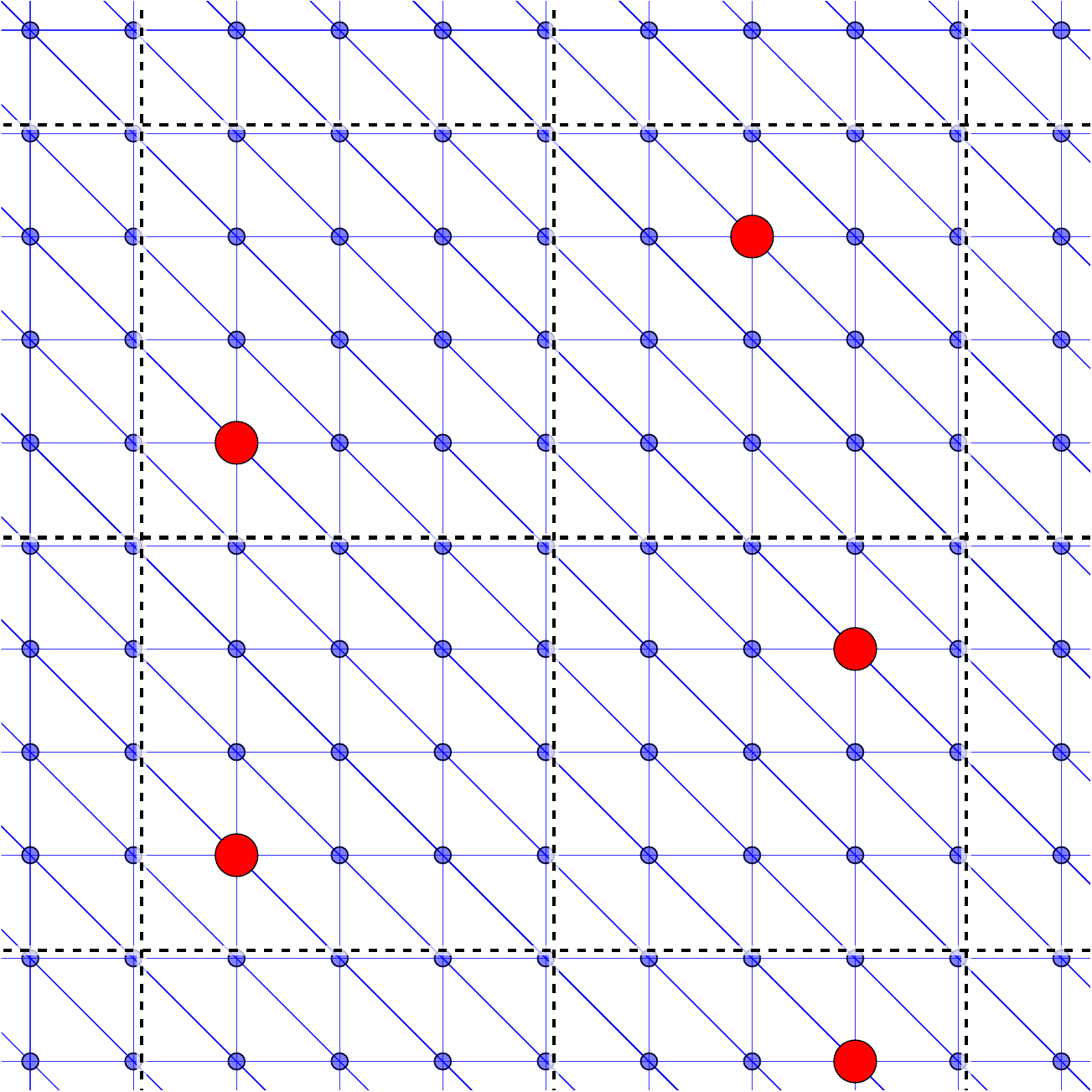} }\label{figPachnerShift0}}%
		\qquad
		\subfloat[]{{\includegraphics[width=0.26\linewidth]{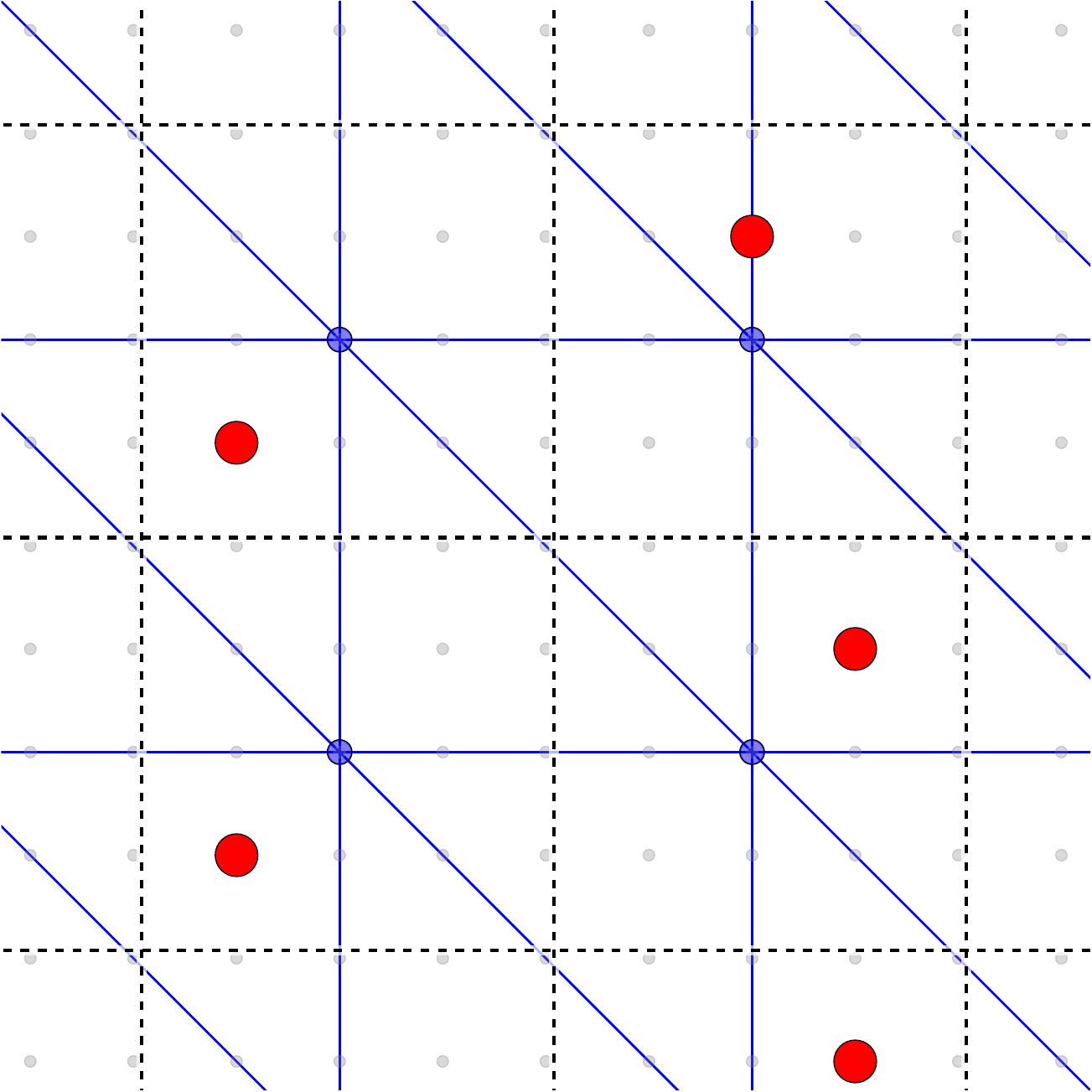} }\label{figPachnerShift1}}%
		\qquad
		\subfloat[]{{\includegraphics[width=0.26\linewidth]{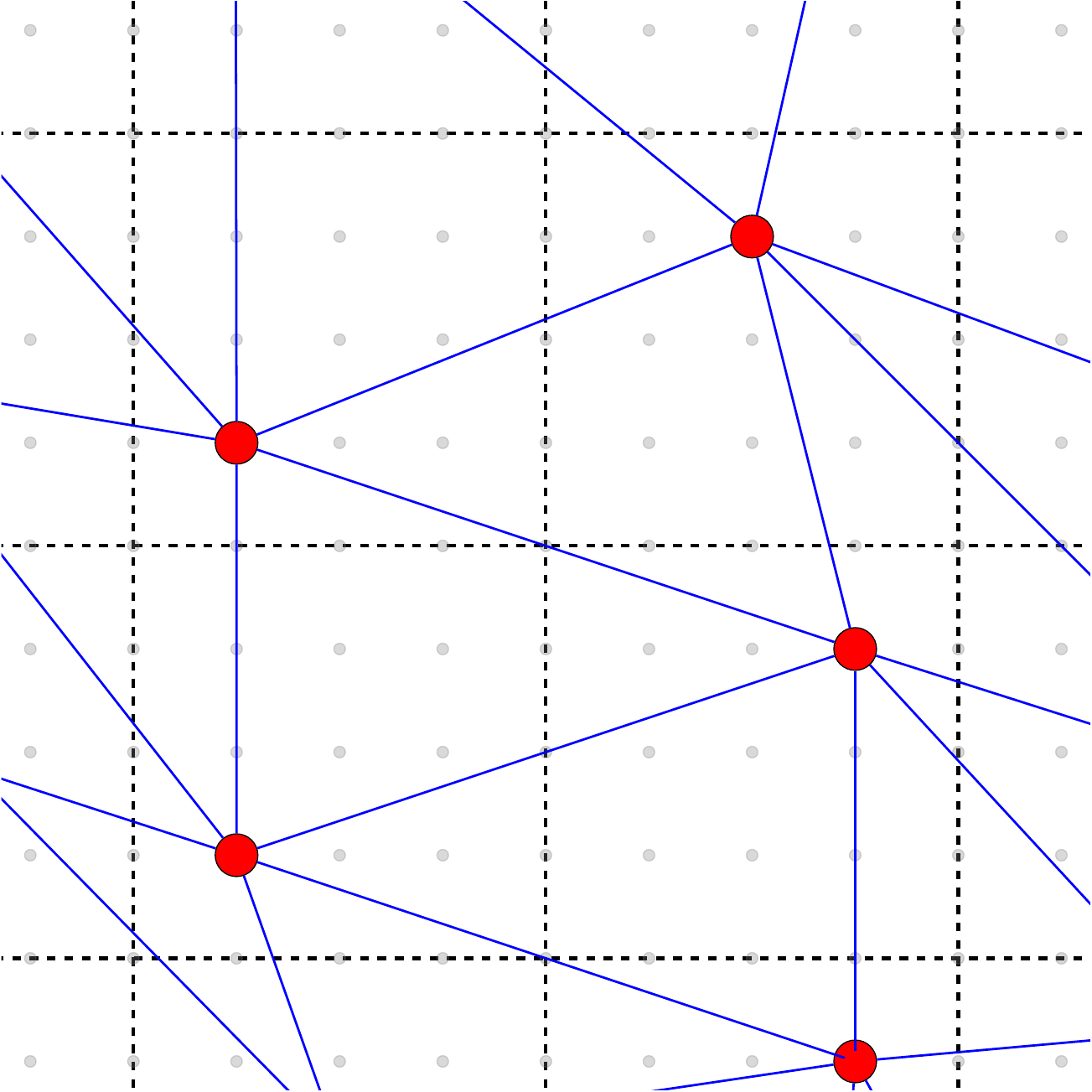} }\label{figPachnerShift2}}%
		\caption{(Color online) The two principle steps to take the original triangulation $T^{\Delta}$ in (a) to the triangulation $T^S$ in (c), whose vertices are all sinks (large dots). The first step is to renormalize the $T^{\Delta}$, resulting in the triangulation in (b). The second step is a vertex shifting, resulting in the triangulation in (c). The grid $\calP_l$ with side-lengths $(c\log(L))^{\frac{1}{d}}$ is displayed by the dashed lines. Faded grey nodes denote ambient vertices no longer part of the triangulation, which correspond to decoupled spins in the $\ket{+}$ state after the circuit $\calD$ has been applied. 
		}%
		\label{figPachnerShift}%
	\end{figure}
	
	Firstly, we perform a sequence of renormalization steps, which increases the original length of the edges in $T^{\Delta}$ from $\calO(1)$ to $l$, and in doing so reduces the number of vertices down to one per cubic region. Firstly, we claim that to renormalize the length of all edges by a factor of 2 takes a constant number of parallel Pachner moves. Indeed for a triangular lattice, it takes 12 parallel Pachner moves to scale the lattice by a factor of 2, as depicted in Fig. \ref{figPachnerRenormalization}. In general, the number of moves will be proportional to the maximum degree of a vertex. Since we wish to rescale the edge length to $l = (c\log(L))^{\frac{1}{d}}$, we need to do $\calO(\log \log(L))$ renormalization steps. Each Pachner move acts on a simplex of size at most $l^d = c\log(L)$, and therefore the depth of this Pachner sequence is $\calO(\log(L) \log \log(L))$. 
	
	Secondly, we need to transform the renormalized triangulation (depicted in Fig. \ref{figPachnerShift1}) to $T^{S}$. Since there is only one vertex per cubic region in the renormalized triangulation, this  process can be considered as a shifting of the vertices. This can be achieved by firstly reintroducing the sinks as vertices using the second Pachner move in Fig. \ref{figPachner}, then removing the remaining ambient vertices using a combination of Pachner moves\footnote{Note that we assume that we have a sufficiently large system such that there exists enough sinks to perform the required Pachner moves. This is without loss of generality as we are concerned with the scaling rather than small system details.}. Since Pachner moves in disjoint simplexes can be performed in parallel, the depth of this sequence is proportional to the degree which (by assumption) is bounded in the original triangulation, and therefore also the renormalized triangulation. Then as each move acts on a simplex of size $l^d = c\log(L)$ the depth of this sequence is $\calO(\log(L))$. 
	
	\begin{figure}
		\centering
		\subfloat[]{{\includegraphics[width=0.22\linewidth]{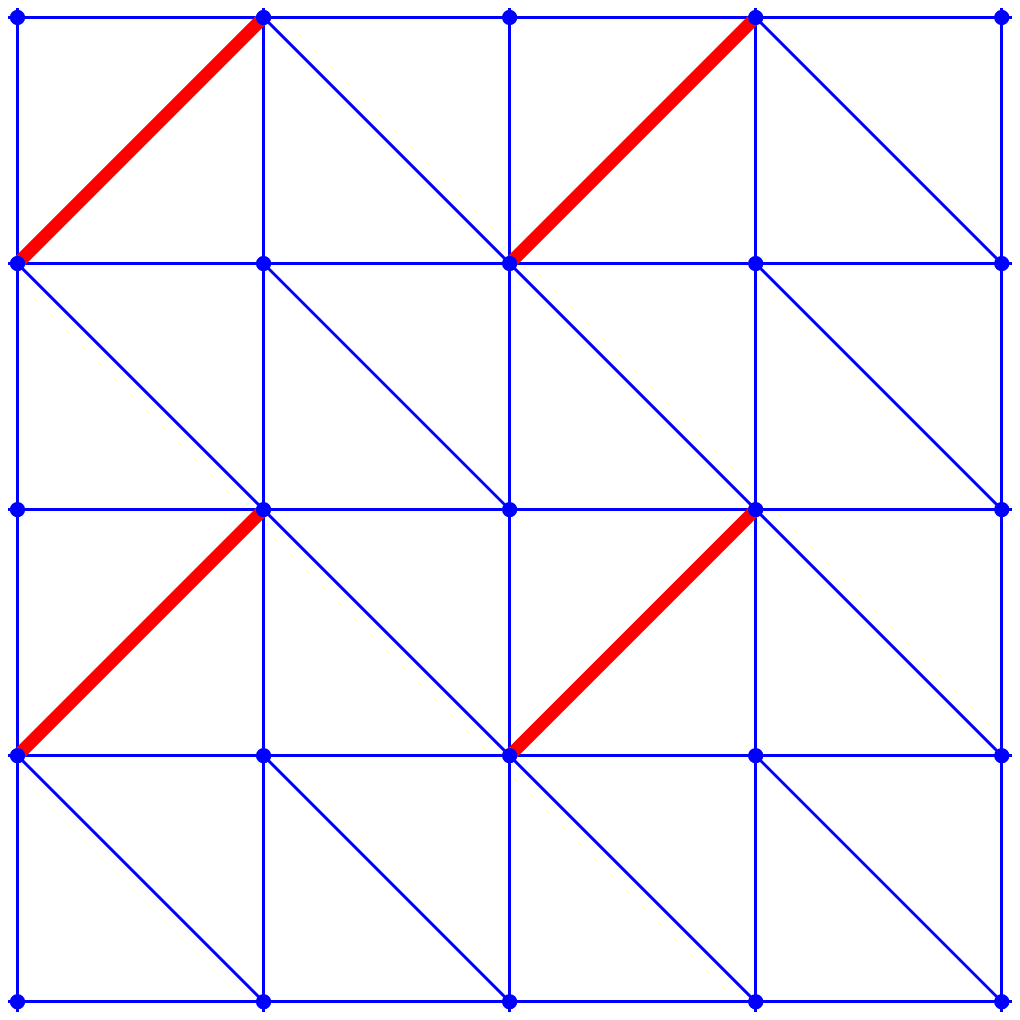} }\label{figPachnerSequence2}}%
		\qquad
		\subfloat[]{{\includegraphics[width=0.22\linewidth]{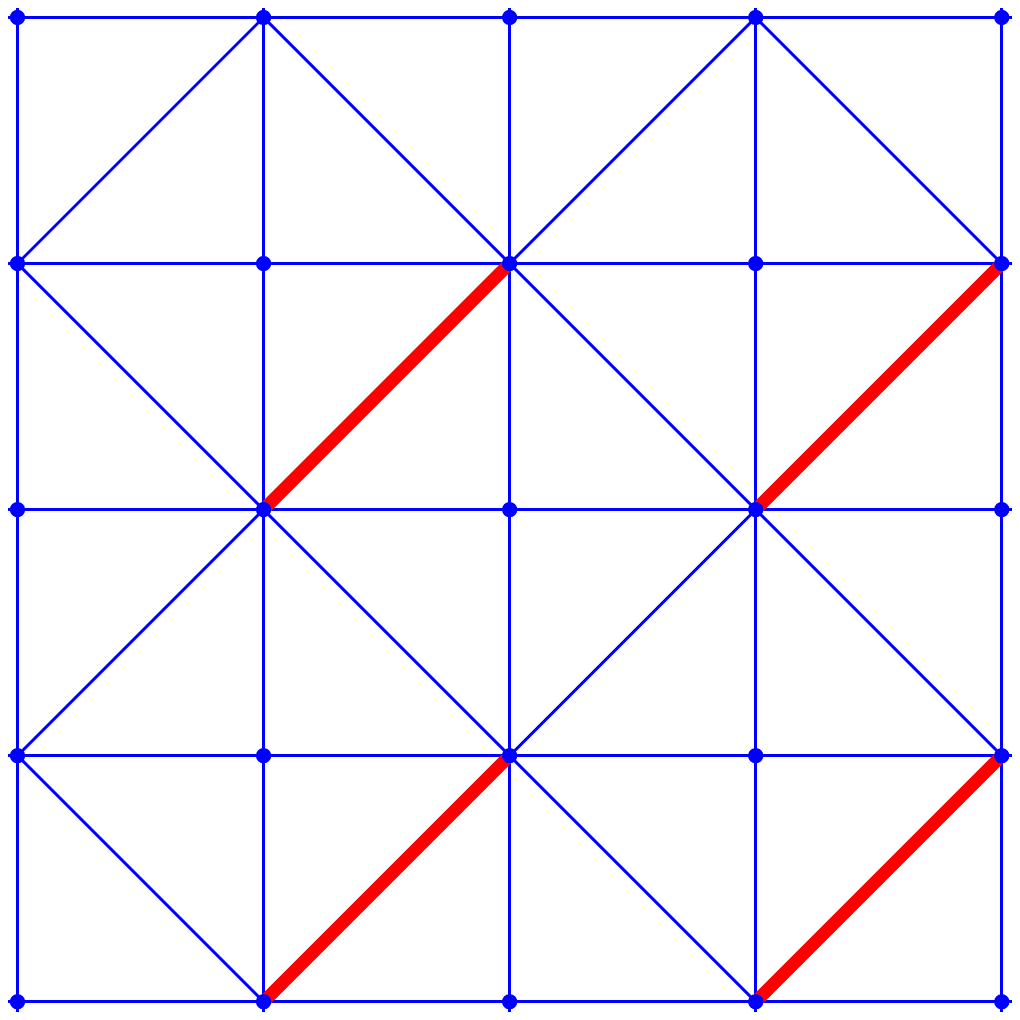} }\label{figPachnerSequence3}}%
		\qquad
		\subfloat[]{{\includegraphics[width=0.22\linewidth]{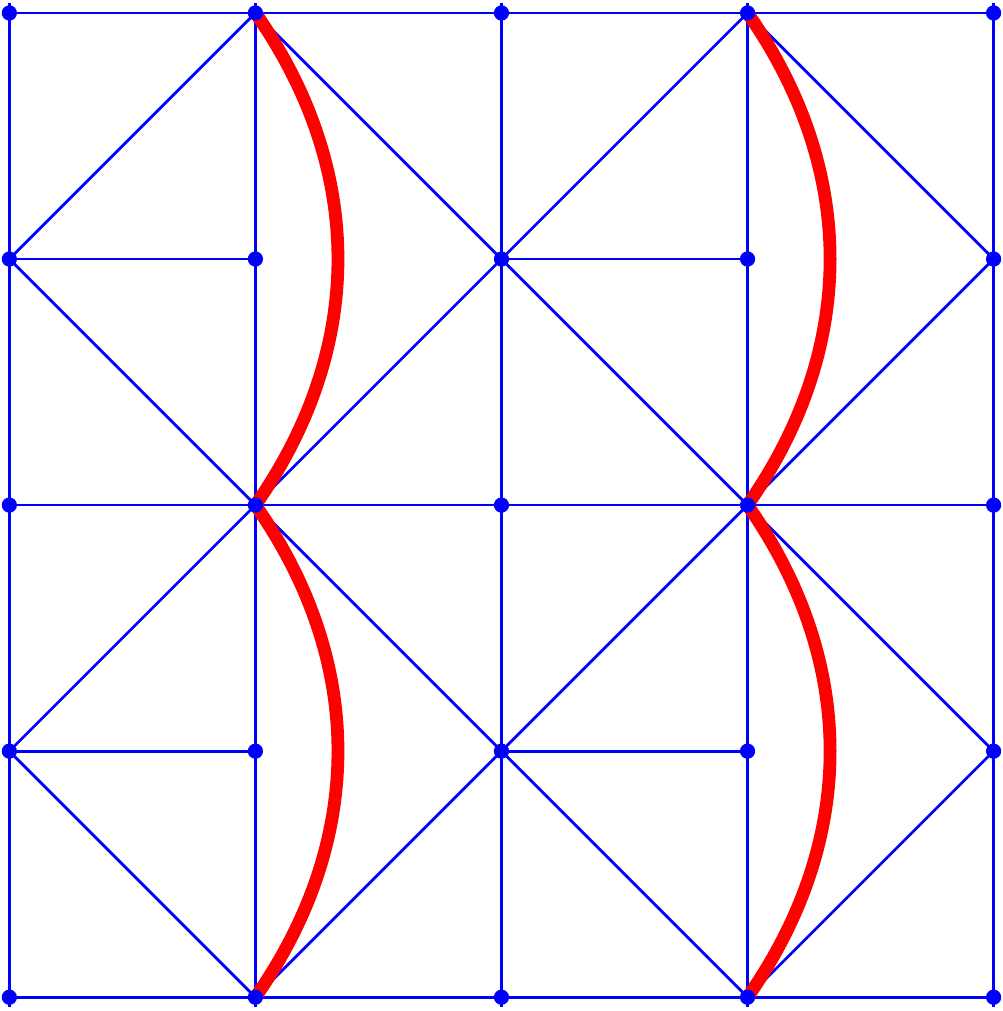} }\label{figPachnerSequence4}}%
		\qquad
		\subfloat[]{{\includegraphics[width=0.22\linewidth]{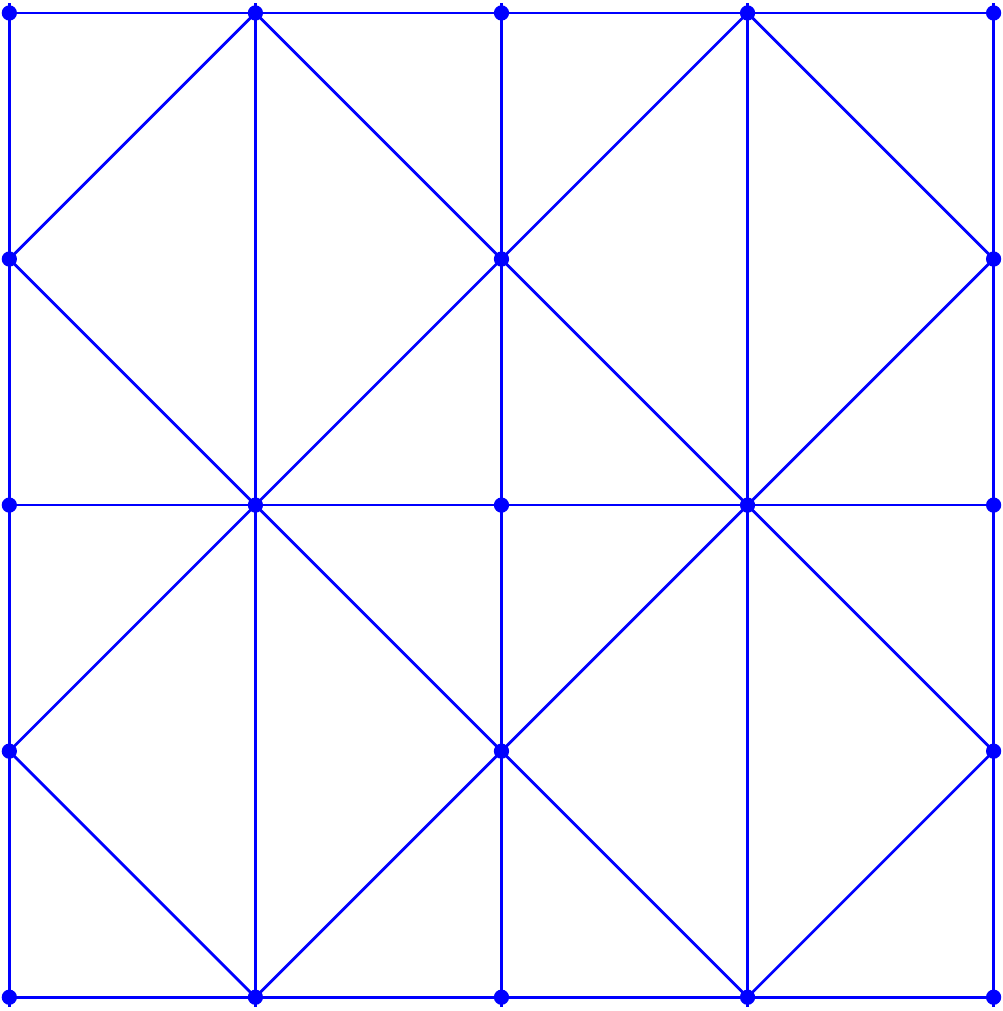} }\label{figPachnerSequence5}}%
		\qquad	
		\subfloat[]{{\includegraphics[width=0.22\linewidth]{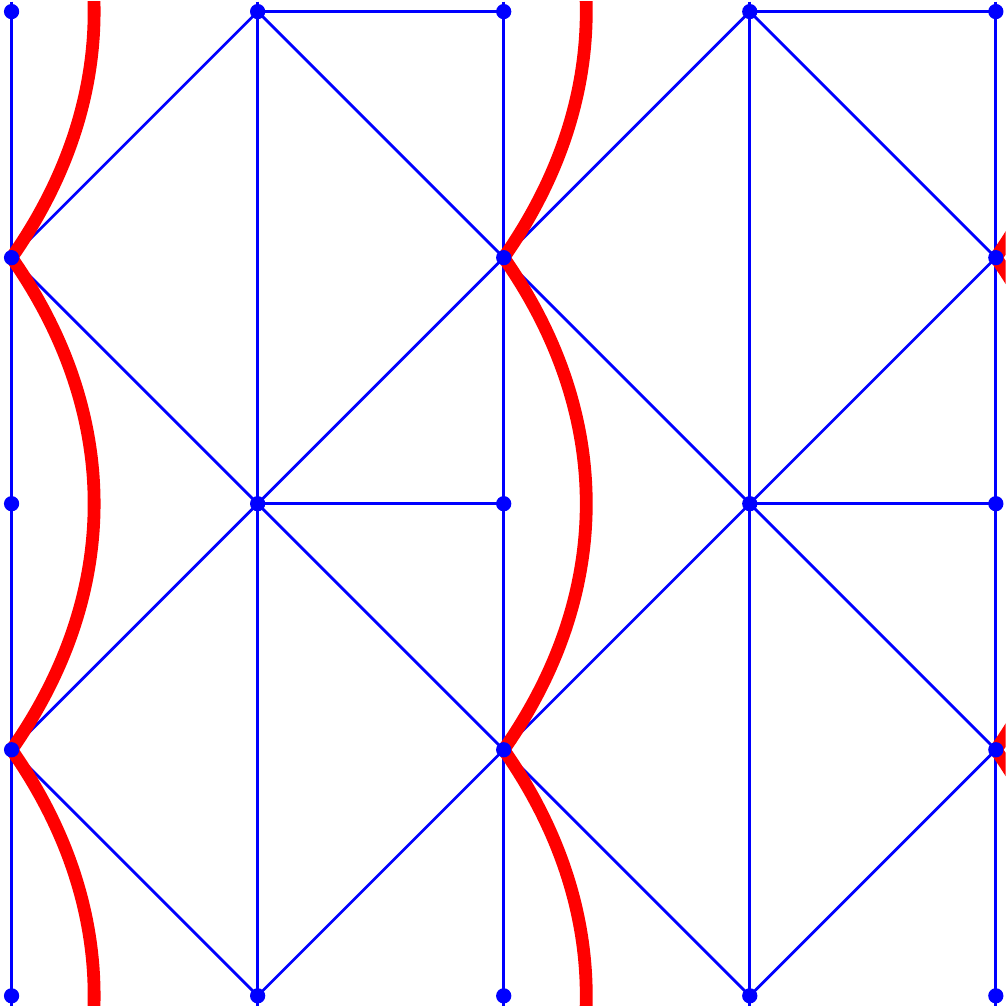} }\label{figPachnerSequence6}}%
		\qquad	
		\subfloat[]{{\includegraphics[width=0.22\linewidth]{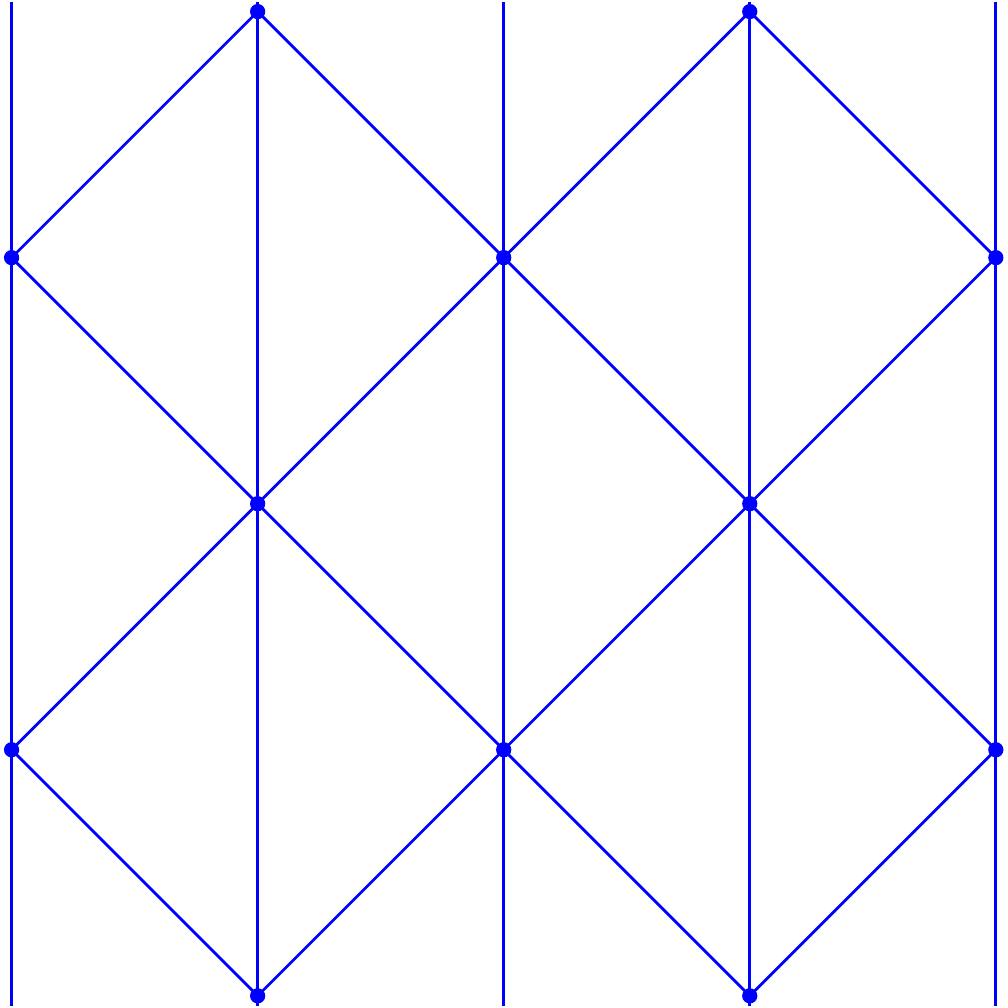} }\label{figPachnerSequence7}}%
		\caption{(Color online) The first six parallel Pachner moves for a single renormalization step that scales the edge lengths of the triangular lattice by a factor of two. New edges arising from Pachner moves are depicted by thick (red) lines. Notice that some of the edges are unchanged (namely the diagonal ones), but this process can be repeated to rescale them too. Sinks are not displayed in this figure as they do not yet play a role. This process can be repeated $\calO(\log\log(L))$ times to rescale $T^{\Delta}$ to the renormalized triangulation in Fig. \ref{figPachnerShift1}. 
		}%
		\label{figPachnerRenormalization}%
	\end{figure}
	
	Putting these two steps together we have the depth of the sequence of Pachner moves taking $T^{\Delta}$ to $T^S$ is $\calO(\log(L) \log\log(L))$. This sequence of Pachner moves gives rise to a symmetric circuit $\calD_{\textbf{\textit{k}}}$ taking $\overline{\rho}(\textbf{\textit{k}})$ to a trivial state.  Since this argument works for every valid configuration, we have that $\rho(\beta)$ is polynomially approximated by a sum of $(\calO(\log(L) \log\log(L)),0)$ SPT-trivial states, and therefore $\rho(\beta)$ is $(r,\epsilon)$  SPT-trivial, with $r=\calO(\log(L) \log\log(L))$, and $\epsilon = 1/poly(L)$. Note that the key ingredient in this sequence of Pachner moves is that the degree remains bounded at all stages, and therefore to disentangle any spins requires only a constant number of Pachner moves. The exponent of the $\log$ may be improved for example by keeping the sinks in the triangulation during the renormalization steps. 
\end{skproof}

\section{A model with a thermal SPT phase}\label{sec3}
Despite proving that a thermal SPT phase is impossible in models with only onsite symmetries, we now provide an example of a model with thermal SPT-order by enforcing a stronger, higher-form symmetry. The model we consider is the cluster model on a particular three-dimensional lattice, first introduced by Raussendorf, Bravyi and Harrington (RBH) \cite{RBH}, protected by a $\zz_2 \times \zz_2$ 1-form symmetry. While the discussion here is specific to the RBH model, the tools developed and the analysis is quite general, and can be extended to other higher form models.

Cluster states are well known within the quantum information community for their importance as a resource for measurement based quantum computation (MBQC) \cite{ClusterMBQC}. They can be defined on any graph or lattice, and their usefulness for computation is strongly dependent upon the underlying graph or lattice dimension \cite{QuantumWire, ClusterMBQC}. In the context of SPT phases, the 1D cluster model is known to belong to a nontrivial phase with a $\zz_2 \times \zz_2$ onsite symmetry \cite{1dCluster}, and states within this phase have been shown to be useful as quantum computational wires \cite{Delse1}. Additionally, certain states in 2D possessing SPT-order protected by onsite symmetries have been shown to be universal resources for MBQC \cite{MMMBQC, NWMBQC}.

From an information processing standpoint, the RBH model is very compelling. The model forms a basis for the topological MBQC scheme, a universal model of quantum computation with a very high threshold arising from topological considerations \cite{Rau06,TopoClusterComp}. We wish to understand the physical origin and underlying quantum order that underpins the high threshold of this scheme. We begin by reviewing the RBH model.

\subsection{The RBH model}

In order to present the RBH model, it will be helpful to review some homological terminology, which will allow us to specify all relevant operators and make the following analysis simpler. 
The lattice we consider is a cubic lattice $\calC$ of linear size $d$. For simplicity, we consider periodic boundary conditions in each direction such that $\calC$ has topology of a 3-torus. We label by $\Delta_3$, $\Delta_2$, $\Delta_1$, and $\Delta_0$ the set of all cubes, faces, edges, and vertices of $\calC$, respectively. Elements of $\Delta_k$ are called $k$-cells and denoted by $\sigma_k$ for $k \in \{0,1,2,3\}$. 

\subsubsection{Some homological notation}

The lattice $\calC$ naturally gives rise to a chain complex
\begin{equation}\label{eqChCo}
\xymatrix{
	C_3 \ar@<0.5ex>[r]^{\partial_3} & C_2 \ar@<+0.5ex>[r]^{\partial_2} & C_1 \ar@<+0.5ex>[r]^{\partial_1} &  C_0,
}
\end{equation}
which is a set of vector spaces $C_k$ and linear maps $\partial_k: C_k\rightarrow C_{k-1}$ between them called boundary maps, which we now define. Each vector space $C_k \equiv C_k(\calC;\zz_2)$  has elements consisting of formal sums of the basis elements $\sigma_k \in \Delta_k$ with coefficients from the field  $\zz_2$. A general vector $c_k$ in $C_k$ is called a $k$-chain, and can be uniquely written as $c_k = \sum_{\sigma_k \in \Delta_k} a(\sigma_k) \sigma_k,$ with $a(\sigma_k) \in \zz_2.$ Intuitively, a $k$-chain can be one-to-one identified with a subset of $k$-cells of $\Delta_k$, so a 3-chain $c_3\in C_3$ represents a subset of volumes (i.e. $c_3\subset \Delta_3$), a 2-chain represents a subset of surfaces, and so on. Between vector spaces $C_k$ we have the boundary map $\partial_k: C_k \rightarrow C_{k-1}$, defined on each basis element as
\begin{equation}
\partial_k (\sigma_k) = \sum_{\substack{\sigma_{k-1} \in \Delta_k \\ \sigma_{k-1} \subset \sigma_k}} \sigma_{k-1}
\end{equation}
and extended to an arbitrary $k$-chain by linearity. Here, the sum is over all $(k-1)$-cells $\sigma_{k-1}$ that are contained in $\sigma_k$. 

There are two important classes of chains known as cycles and boundaries. The \textit{$k$-cycle group} $Z_k = \text{ker}(\partial_k)$ is the vector space (which can be regarded as a group) consisting of $k$-chains that have no boundary. Elements of $Z_k$ are known as $k$-cycles. Similarly, the \textit{$k$-boundary group} $B_k = \text{im}(\partial_{k+1})$ is the vector space consisting of $k$-chains that are the boundary of a $(k+1)$-chain. Elements of $B_k$ are known as $k$-boundaries. Importantly, the boundary maps satisfy $\partial_{k-1} \circ \partial_k = 0$, which implies that every boundary is a cycle, but in general not every cycle is a boundary. A cycle that is not a boundary is referred to as nontrivial or noncontractible.

One can define the dual lattice $\calC^*$ of the cubic lattice $\calC$, which is obtained by replacing volumes by vertices, faces by edges, edges by faces, and vertices by volumes. The dual lattice $\calC^*$ is also a cubic lattice, but shifted with respect to the primal (initial) lattice. We can define a chain complex associated with the dual lattice in a similar way to Eq.~(\ref{eqChCo}), where $C_k^*$ are vector spaces with $k$-cells of the dual lattice as basis vectors, and corresponding boundary maps $\partial_k^*$. We denote the dual cycle groups by $Z_k^{*}$, and dual boundary groups by $B_k^{*}$. 

Since each $k$-chain corresponds to a unique dual-$(3-k)$-chain, the dual boundary map $\partial_k^* : C_k^* \rightarrow C_{k-1}^*$ can be thought of as a map $\partial_k^*: C_{3-k} \rightarrow C_{4-k}$. Namely, since any $(3-k)$-chain $c_{3-k}$ is dual to a unique dual-$k$-chain $c_{k}'$, we define $\partial_k^* c_{3-k}$ to be the unique $(4-k)$-chain dual to $\partial_k^* c_{k}'$. In the following, we suppress the subscript on the boundary and dual boundary maps, and we will freely apply the dual boundary map on both chains and dual chains using the previous correspondence. This allows us to regard 1-cycles and dual-1-cycles as closed loop-like subsets of the lattice $\calC$, and 2-cycles and dual-2-cycles as closed surface-like subsets of the lattice $\calC$.

\subsubsection{The RBH Hamiltonian}
With this terminology, we can now present the RBH model in a useful homological formulation. The Hilbert space can be constructed by placing a qubit on every 2-cell $\sigma_2 \in \Delta_2$ and every 1-cell ${\sigma}_1 \in \Delta_1$, which we will refer to as the primal and dual qubits respectively (we think of dual qubits as residing on the 2-cells of the dual lattice). The Hilbert space is given by $\calH = \calH_1 \otimes \calH_2$, where $\calH_1$ is the Hilbert space of the dual qubits, and $\calH_2$ is the Hilbert space of the primal qubits.

For a given $2$-chain $c_2 = \sum_{\sigma_2 \in \Delta_2} a(\sigma_2) \sigma_2,$ with $a(\sigma_2) \in \zz_2$, define the Pauli operator 
\begin{equation}
X(c_2) = \prod_{\sigma_2 \in c_2} X_{\sigma_2},
\end{equation}
where $X_{\sigma_2}$ is the Pauli $X$ supported on the qubit at $\sigma_2$. One can similarly define operators for Pauli $Z$ as well as for the dual qubits. A general Pauli operator $P$ then has the following decomposition
\begin{equation}
P = i^{\alpha} X(c_2)Z(c_2')X(c_1) Z(c_1'),
\end{equation}
for some $\alpha \in \{0,1,2,3\}$, 2-chains $c_2$, $c_2'$ and 1-chains $c_1$ and $c_1'$. One could equivalently decompose the operator $P$ in terms of dual chains.

In this notation we can now describe the RBH Hamiltonian on this lattice. The Hamiltonian is given by a sum of local, commuting (5-body) terms
\begin{equation}
H_{\calC} = -\sum_{\sigma_1 \in \Delta_1} K(\sigma_1) -\sum_{\sigma_2 \in \Delta_2} K(\sigma_2),
\end{equation}
where 
\begin{equation}\label{clusterTerms}
K(\sigma_1) = X(\sigma_1) Z(\partial^{*} \sigma_1), \quad \text{and} \quad K(\sigma_2) = X(\sigma_2) Z(\partial \sigma_2),
\end{equation}
as depicted in Fig.~\ref{figclusterlattice3}. We note that $K(\sigma_1)$ and $K(\sigma_2)$ are the standard cluster state stabilizer generators. The \textit{cluster state} $\ket{\psi_{\calC}}$ is the unique ground state of $H_{\calC}$ which is the $+1$-eigenstate of each of the cluster terms $ K(\sigma_1)$ and $K(\sigma_2)$. 

An alternative description in terms of a circuit description shows that the cluster state is short-range entangled. Consider the circuit $\calU_{CZ}$ comprised of controlled-$Z$ gates between every neighbouring primal and dual qubit
\begin{equation}
\calU_{CZ} = \prod_{\substack{\sigma_1 \in \Delta_1 \\ \sigma_2 \in \Delta_2}} \Bigg(\prod_{\sigma_1'\in\partial\sigma_2}CZ_{(\sigma_2,\sigma_1')}\Bigg) \Bigg(\prod_{\sigma_2'\in\partial^{*}\sigma_1}CZ_{(\sigma_1,\sigma_2')}\Bigg),
\end{equation}
where the controlled-$Z$ operator is defined in Eq. (\ref{EqCZ}). One can confirm that 
\begin{equation}\label{hamTriv}
 \calU_{CZ}^{\dagger} H_{\calC} \calU_{CZ} = -\sum_{\substack{\sigma_1 \in \Delta_1}}X(\sigma_1)-\sum_{\substack{\sigma_2 \in \Delta_2  }}X(\sigma_2)=:H_X.
\end{equation}
From this relation we see that the cluster state can be prepared from a product state by the circuit $\calU_{CZ}$, as 
\begin{equation}
\ket{\psi_{\calC}} = \calU_{CZ} \ket{+}^{\otimes |\Delta_2 \cup {\Delta}_1|},
\end{equation}
where $\ket{+}$ is the $+1$-eigenstate of Pauli $X$. Since $\calU_{CZ}$ can be represented by a constant depth quantum circuit, the cluster state is short-range entangled. We now proceed to identify a 1-form $\zz_2 \times \zz_2$ symmetry of the model and show that $\ket{\psi_{\calC}}$ resides in a nontrivial SPT phase at zero temperature when this symmetry is enforced. 

\subsubsection{1-form symmetry}
The cluster state is a short-range entangled state and so in the absence of a symmetry it belongs to the SPT-trivial phase. One can show that with only an onsite symmetry, this model remains in the SPT-trivial phase\footnote{Indeed, for a cluster state in any dimension $D\geq 2$, with onsite symmetry, one can generalise the two-dimensional result of \cite{MMMBQC} and construct a disentangling circuit involving symmetric gates comprised of controlled-$Z$ operations.}. We introduce a $\zz_2 \times \zz_2$ 1-form symmetry of the model and show that the cluster state is in a nontrivial SPT phase when this symmetry is enforced. Formally, we have two copies of a $\zz_2$ 1-form symmetry: one for each lattice (primal and dual).  The symmetry actions are given by a unitary representation $S$ of the 2-boundary and dual-2-boundary groups as
\begin{align}\label{1formsym}
S(b_2) := X(b_2), \quad S(b_2') := X(b_2'),
\end{align}
for any 2-boundary $b_2\in B_2$ and dual-2-boundary $b_2'\in B_2^*$. Any 2-boundary or dual-2-boundary corresponds to a closed, two-dimensional surface $\calM$ of the primal or dual lattice, respectively. The 1-form symmetry can therefore be viewed as being imposed by symmetry operators supported on qubits residing on closed, contractible two-dimensional submanifolds of $\calC$.

A local, generating set of symmetry operators is given by the following elementary operators 
\begin{equation}\label{genset}
\tilde{G} = \{S(\partial \sigma_3), S(\partial^{*}\sigma_0) ~|~ \sigma_3 \in \Delta_3, \sigma_0 \in \Delta_0 \},
\end{equation}
which are all 6-body. For example, an elementary 1-form operator $S(\partial \sigma_3)$ is supported on the surface of a single cube as depicted in Fig.~\ref{fig1FormSym}. Multiplying two neighbouring symmetry operators $S(b_2)S(\overline{b}_2) = S(b_2 + \overline{b}_2)$ can be viewed as gluing together the pair of surfaces that they correspond to. We conclude that the symmetry is a representation of the boundary groups $B_2 \times {B}_2^{*}$.

\begin{figure}[htb!]%
	\centering
	\subfloat[]{{\includegraphics[width=0.33\linewidth]{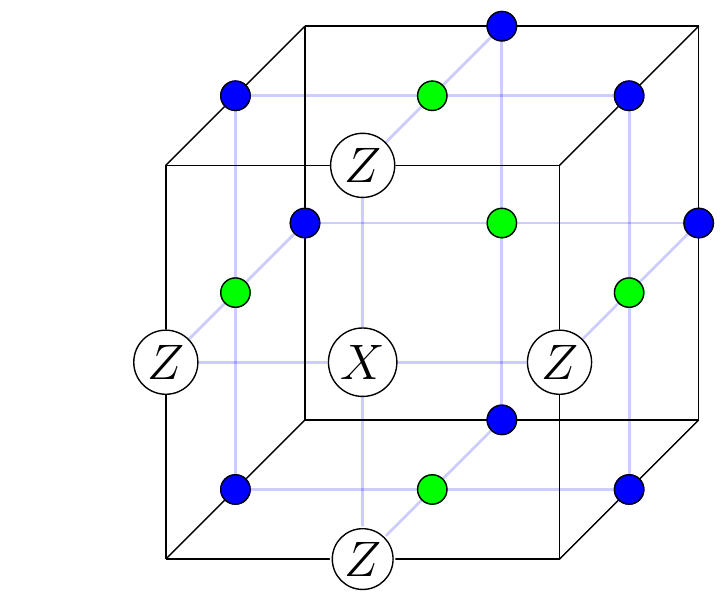} }\label{figclusterlattice3}}%
	\qquad
	\subfloat[]{{\includegraphics[width=0.33\linewidth]{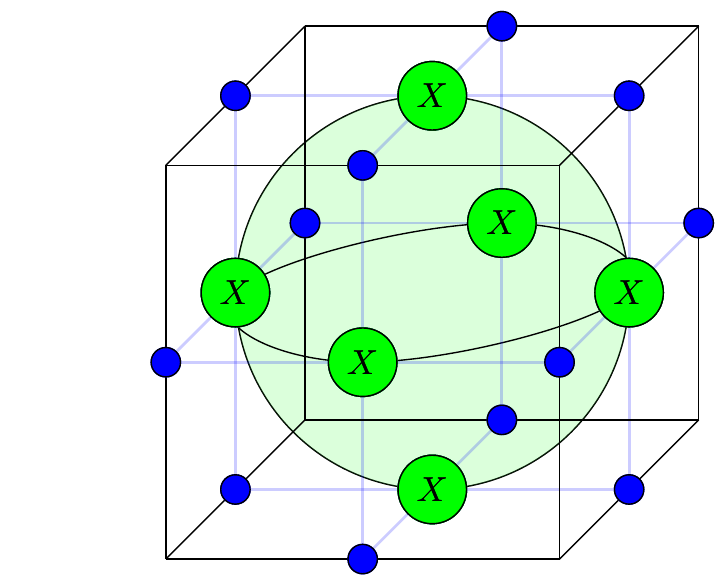} }\label{fig1FormSym}}%
	\caption{(Color online) (a) A unit cell of the cluster lattice $\calC$ with a single cluster term $K(\sigma_2)$. (b) An elementary 1-form operator $S(\partial \sigma_3)$. The primal qubits are depicted as light (green) circles and the dual qubits are depicted in dark (blue) circles. }%
	\label{figcellsym}%
\end{figure}

An important feature of the 1-form symmetry operators is that they can be expressed as products of cluster terms
\begin{equation}
S(b_2) = \prod_{\sigma_2 \in b_2} K(\sigma_2)\quad \text{and} \quad S(b_2') =  \prod_{\sigma_1^* \in b_2'} K(\sigma_1),
\end{equation}
where the second product is over all 1-cells $\sigma_1$ whose dual belong to $b_2'$. For example, this is easily verified for the elementary 1-form operator in Fig.~\ref{figcellsym}. It follows that these operators commute with $H_{\calC}$, and thus are symmetries of the cluster model. Additionally, the cluster state is a $+1$-eigenstate of these symmetry operators. Interestingly, such operators arise naturally in the context of topological MBQC and error correction \cite{TopoClusterComp,RBH} and we will return to this connection in the following section. 

\subsubsection{Thermal state of the 1-form symmetric RBH model}

We now consider the symmetric Gibbs state of the RBH model Hamiltonian $H_{\calC}$. In the presence of the 1-form symmetry, excitations in the RBH model take the form of one-dimensional, loop-like objects, which can be seen as follows. Excitation operators can be constructed out of Pauli-$Z$ operators, but the 1-form symmetry demands they form closed loops in the following way. Consider the operator $Z(c_1)$ for any 1-chain $c_1\in C_1$. This operator anti-commutes with cluster terms along the cycle
\begin{equation}
\{K({\sigma}_1), Z(c_1)\} = 0 \quad \iff \quad {\sigma}_1 \in c_1,
\end{equation} 
and will commute with the 1-form symmetry operators if and only if it has no boundary $\partial c_1 = 0$. Therefore excitation operators on the dual lattice are given by $Z(\gamma)$ where $\gamma \in Z_1$ is a 1-cycle. Similarly, excitation operators on the primal lattice are given by $Z(c_1')$, for any dual-1-cycle $c_1' \in Z_1^{*}$. Recall, 1-cycles and dual-1-cycles look like loop-like objects, and example excitation operators are shown in Fig.~\ref{figExcitations}.

A general symmetric excitation is given by $\ket{\psi(\gamma, \gamma')} = Z(\gamma)Z(\gamma') \ket{\psi_{\calC}}$ with $\gamma \in Z_1$, $\gamma' \in Z_1^*$, and the energy cost of introducing this excitation is $E{(\gamma, \gamma')} = 2 (|\gamma|+|\gamma'|)$. Notice that excitations created by Pauli $X$ operators can be converted into the above form, since they are equivalent up to products of cluster terms (of which the cluster state is a $+1$-eigenstate). As such, excited states are in one-to-one correspondence with elements of the 1-cycle and dual-1-cycle groups $Z_1 \times {Z}_1^{*}$.

The symmetric Gibbs state under this 1-form symmetry is given by a distribution over loop configurations 
\begin{equation}\label{symgibbscluster}
\rho_{\calC}(\beta) = \sum_{(\gamma, \gamma')\in Z_1 \times {Z}_1^{*}} \text{Pr}_{\beta}(\gamma, \gamma') \ket{\psi(\gamma, \gamma')}\bra{\psi(\gamma, \gamma')},
\end{equation}
where the sum is over all primal and dual 1-cycles, and 
\begin{equation}\label{Pgamma}
\text{Pr}_{\beta}(\gamma, \gamma') = \frac{1}{\calZ}e^{{-\beta E{(\gamma, \gamma')}}}, \quad \text{where} \quad \calZ = \sum_{(\gamma, \gamma') \in Z_1 \times Z_1^*} \text{Pr}_{\beta}(\gamma, \gamma').
\end{equation} 
Here, $\text{Pr}_{\beta}(\gamma, \gamma')$ is a probability distribution over loop-like configurations. In the following subsection, we show that this ensemble has nontrivial SPT-order under the 1-form symmetry, using a duality map known as gauging. Then in the subsequent section, we will provide a proof of the nontrivial SPT-ordering of the thermal state using a set of non-local order parameters as witnesses of the SPT-order.  

\begin{figure}%
	\centering
	\includegraphics[width=0.27\linewidth]{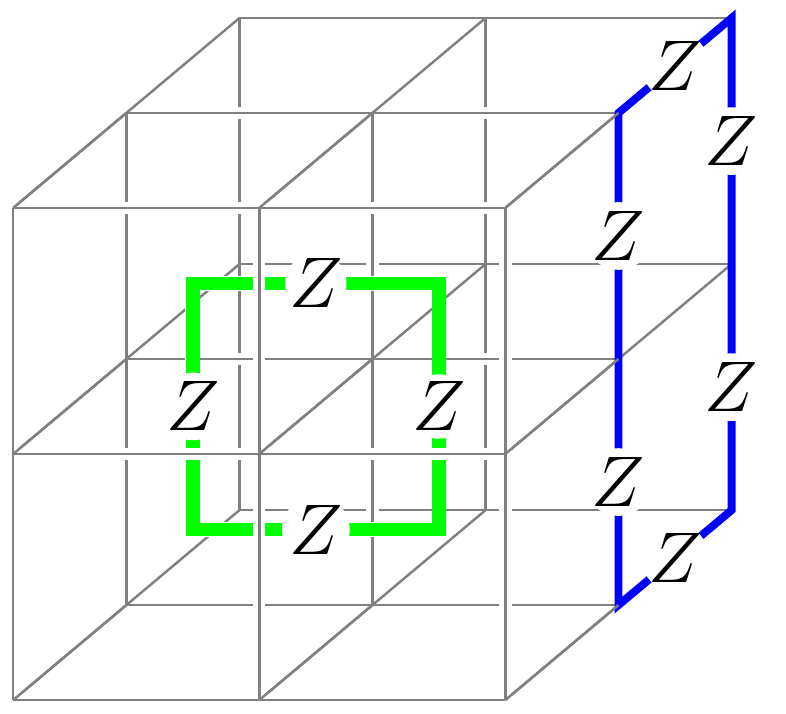} 
	\caption{(Color online) Examples of excitation operators. A 1-cycle is depicted by the thiner dark (blue) line, while a dual-1-cycle is depicted by the thicker light (green) line.}%
	\label{figExcitations}%
\end{figure}

\subsection{SPT-order of the RBH model}
We now show that the RBH model possesses nontrivial SPT-order under the 1-form symmetry by means of a duality map known as \textit{gauging}. Gauging is a procedure widely used throughout the study of many-body physics \cite{LevinGu, HWgauge, JHgauge, DWgauge}, and has recently found application in the study of fault-tolerant logical gates in topological quantum codes \cite{BYhigher, BYFT}. Gauging is the process of transforming a global symmetry $G$ into a local symmetry by minimally coupling the system to gauge fields. We will use an argument originally proposed by Levin and Gu \cite{LevinGu} that two Hamiltonians must belong to distinct SPT phases if the gauged versions belong to distinct topological phases. 

We will take the approach of \cite{BYhigher,BYFT} and view the gauging procedure as a duality map between SPT-ordered Hamiltonians and topologically ordered Hamiltonians, a correspondence known to hold for many models \cite{HWgauge}. By showing that the gauged RBH model belongs to a different phase than the gauged trivial model, we can deduce that the RBH model belongs to a nontrivial SPT phase. Furthermore, thermal stability of the SPT-order can be demonstrated by showing that the RBH cluster state corresponds to a nontrivial gapped domain wall in the 4D toric code, which is known to have thermally stable topological order~\cite{dennisTopo}.

\subsubsection{Gauging the 1-form symmetry}\label{secGauging}

We now outline the procedure of gauging the $\zz_2 \times \zz_2$ 1-form symmetry. More details of gauging models possessing higher-form symmetries can be found in \cite{BYFT}. We start with a basis for the primal and dual Hilbert spaces $\calH_1$ and $\calH_2$ given by vectors of the 1-chain and 2-chain groups respectively. For any 1-chain $c_1 \in C_1$, we can uniquely specify a computational basis state
\begin{equation}
c_1 = \sum_{\sigma_1\in \Delta_1}a(\sigma_1) \sigma_1, \quad \implies \quad \ket{c_1} = \ket{ \{a(\sigma_1)\}},
\end{equation}
where $a(\sigma_1) \in \zz_2$. A similar identification holds for the computational basis states in $\calH_2$ and the 2-chain group. The gauging map $\calG$ on the level of states takes states in $\calH_1$ to $\calH_2$, and states in $\calH_2$ to $\calH_1$ and can be concisely defined by the boundary and dual boundary maps, as follows. On the computational basis, the map $\calG: \calH_1 \otimes {\calH_2} \rightarrow \calH_2 \otimes {\calH_1}$ is defined by
\begin{align}
\calG(\ket{c_1} \otimes \ket{c_2}) = \ket{\partial^{*} c_1} \otimes \ket{\partial{c_2}},
\end{align}
and extended to $\calH=\calH_1 \otimes \calH_2$ by linearity. For example, on a computational basis state, $\calG$ is depicted in Fig.~\ref{figGaugeMap}. 
\begin{figure}[h]%
	\centering
	\includegraphics[width=0.6\linewidth]{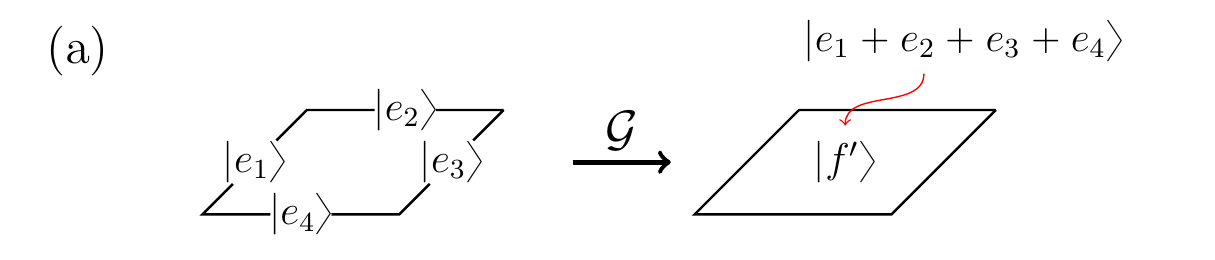}
	\qquad
	\includegraphics[width=0.6\linewidth]{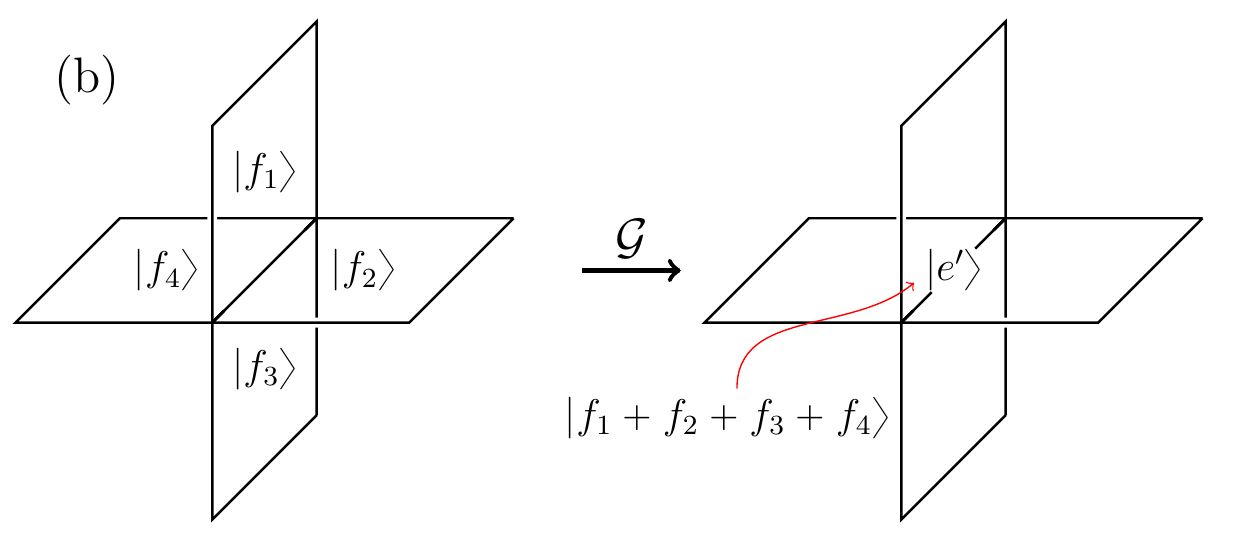} 
	\caption{(Color online) The gauging map on computational basis states. (a) States on the dual sublattice map to states on the primal sublattice. (b) States on the primal lattice map to states on the dual sublattice. The sums are performed mod 2.}%
	\label{figGaugeMap}%
\end{figure}

Importantly, any state $\ket{\psi_{\calG}}$ in the image of $\calG$ satisfies the gauge symmetry condition
\begin{equation}\label{gaugeSym}
Z({z}_2) Z(z_2') \ket{\psi_{\calG}} =  \ket{\psi_{\calG}},
\end{equation}
for any 2-cycle ${z}_2$ and dual-2-cycle $z_2'$. These gauge symmetry operators are similar to the 1-form operators in the RBH model, only they are now in the Pauli-$Z$ basis, and there are additional gauge symmetry operators for nontrivial and 2-cycles dual-2-cycles. Since higher-form symmetries can be viewed as gauge symmetries in a dual description, the distinction between the two types of symmetries is not a definitive one. In this paper, we treat higher form symmetries as symmetries which exist before the gauging map, and gauge symmetries as those which emerge after the gauging map. 

The gauging map $\calG$ can be extended to a map on symmetry respecting operators. For any symmetric operator $A$, the gauged operator $A'$ is defined implicitly by the following equation
\begin{equation}\label{gaugeIdentity}
\calG (A \ket{\psi}) = A' \calG(\ket{\psi}).
\end{equation}
Importantly, the 1-form symmetry operators are mapped to the identity. Note that $A'$ is only defined up to gauge symmetry operators in Eq. (\ref{gaugeSym}). One can use Eq. (\ref{gaugeIdentity}) to verify that gauging the trivial Hamiltonian $H_X$ of Eq. (\ref{hamTriv}) gives the following Hamiltonian
\begin{equation}
H_{X}^{(\calG)} = -\sum_{\sigma_1 \in \Delta_1} X(\partial^{*} \sigma_1) -\sum_{\sigma_2 \in \Delta_2} X(\partial \sigma_2).
\end{equation}
Since the gauged Hilbert space satisfies the gauge symmetry condition in Eq. (\ref{gaugeSym}), one can add $Z$-type terms $Z(\partial \sigma_3)$ and $Z(\partial^*\sigma_0)$ to the gauged Hamiltonian $H_{X}^{(\calG)}$ to fix out the gauge invariant ground space. Therefore, gauging the trivial Hamiltonian gives rise to two decoupled three-dimensional toric code Hamiltonians with qubits on faces and edges, respectively. Each of the toric codes belong to a nontrivial (intrinsic) topologically ordered phase at zero temperature. 

On the other hand, gauging the RBH Hamiltonian gives
\begin{equation}
H_{\calC}^{(\calG)} = -\sum_{\sigma_1 \in \Delta_1} K^{(\calG)}(\sigma_1) -\sum_{\sigma_2 \in \Delta_2} K^{(\calG)}(\sigma_2),
\end{equation}
where 
\begin{equation}
K^{(\calG)}(\sigma_1) = Z(\sigma_1) X(\partial^{*} \sigma_1), \quad K^{(\calG)}(\sigma_2) = Z(\sigma_2) X(\partial \sigma_2).
\end{equation}
This is equivalent to the original RBH Hamiltonian up to a Hadamard transformation $H^{\otimes |\Delta_1 \cup \Delta_2|}$, where $H$ is the Hadamard gate, exchanging the Pauli $X$ and $Z$ operators. Therefore the ground state of $H_{\calC}^{(\calG)}$ remains short-range entangled. As $\calG$ is locality preserving and gap preserving, the inequivalence of the two gauged models shows that the RBH model belongs to a nontrivial SPT phase under the 1-form symmetry. 

\subsubsection{Gapped domain wall at nonzero temperature}\label{secGappedWall}

An interesting and perhaps surprising application of the classification of SPT-ordered phases is in the construction of gapped domain walls in topological models \cite{BYFT}.  Here, we show that the RBH model with $\zz_2 \times \zz_2$ 1-form symmetry can be used to construct a nontrivial domain wall in two copies of the four-dimensional toric code. The fact that the domain wall implements a nontrivial automorphism of the excitation labels in the 4D toric codes demonstrates that the RBH model has nontrivial SPT-order at zero temperature~\cite{BYCCSPT,BYhigher}. We will in addition use this argument to demonstrate that the RBH model with 1-form symmetry retains its SPT-order at nonzero temperature, by leveraging the thermal stability of the 4D toric code. 

To illustrate this procedure, let us first consider the simpler case of a two-dimensional system with $\zz_2 \times \zz_2$ $0$-form symmetry. Namely, consider a square lattice $\Lambda$ with boundary and place qubits on vertices of $\Lambda$. Qubits can be labelled by one of two colors in such a way that neighbouring qubits are of different colors. We consider a system consisting of a trivial Hamiltonian in the bulk and the cluster state Hamiltonian on the boundary:
\begin{align}
H_{0} = - \sum_{u \in \text{bulk}(\Lambda)} X_{u} + H^{1D}_{\text{cluster}},
\end{align}
where $H^{1D}_{\text{cluster}}$ consists of terms supported on the boundary of $\Lambda$ in the following way,
\begin{equation}
H^{1D}_{\text{cluster}} = -\sum_{j \in \partial(\Lambda)} Z_{j-1}X_jZ_{j+1},
\end{equation} 
and the sum is over qubits on the boundary (which have been given a linear ordering). 

The whole Hamiltonian has a $\zz_2 \times \zz_2$ 0-form symmetry, generated by tensor product of Pauli $X$ on each sublattice of a given color. One can apply the gauging map to obtain a gauged Hamiltonian which possesses intrinsic topological order with gapped boundary. In this example, we will have two copies of the toric code with twisted gapped boundaries, where the two copies of the toric code are coupled by terms acting on the boundary. On this gapped boundary, pairs of point-like excitations $e_{1}m_{2}$ and $e_{2}m_{1}$ may condense, where $e_{i}$ and $m_{i}$ ($i=1,2$) represent electric charges and magnetic fluxes from each copy of the toric code. The $e_{i}$ and $m_{i}$ excitations correspond to violated $X$-type and $Z$-type stabilisers respectively, and occur at the end of strings of $Z$-type and $X$-type operators respectively.

By unfolding the lattice (see Fig.~\ref{figGaugeWall}, also Ref.~\cite{Kubica15}) one can view this gapped boundary as a gapped domain wall connecting two copies of the toric code. Upon crossing this domain wall, anyonic excitations are exchanged in the following manner:
\begin{align}
e_1 \leftrightarrow m_2, \qquad m_1 \leftrightarrow e_2.
\end{align}
Since this is a nontrivial automorphism of excitation labels, the cluster state cannot be prepared by a low depth quantum circuit as detailed in~\cite{BYCCSPT}. Gapped domain walls in higher-dimensional topological phases of matter can be also constructed from $0$-form SPT phases, leading to explicit construction of gapped domain walls in the higher-dimensional generalizations of the quantum double model.

\begin{figure}[htb!]%
	\centering
	\includegraphics[width=0.85\linewidth]{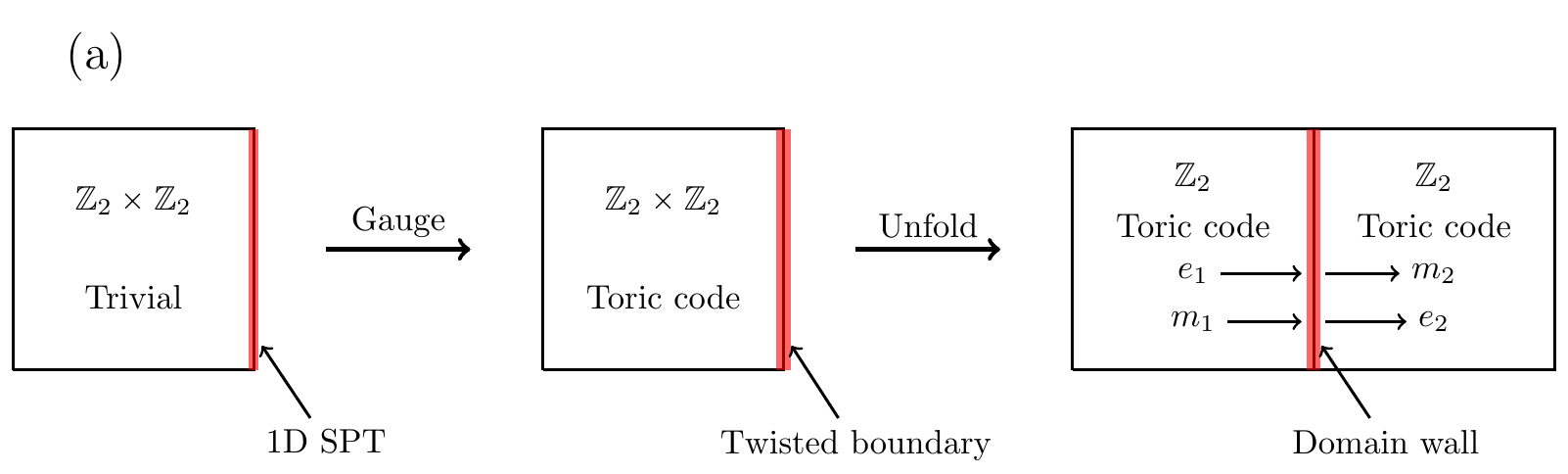}
	\qquad
	\includegraphics[width=0.8\linewidth]{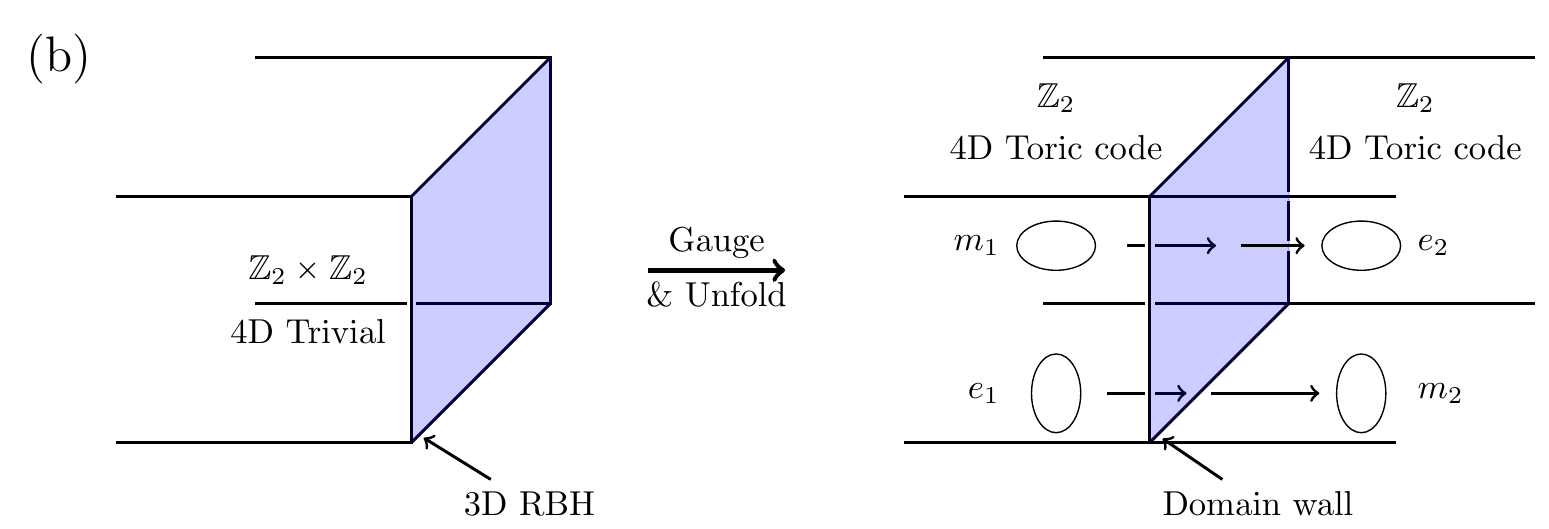} 
	\caption{(Color online) (a) Gauging $\zz_2 \times \zz_2$ symmetry of the two-dimensional model (which has the 1D cluster model on its boundary) leads to a twisted gapped boundary where point-like $e_1m_2$ and $e_2m_1$ particles may condense. This can be viewed as a nontrivial domain wall in the two-dimensional toric code. (b) Gauging the $\zz_2 \times \zz_2$ 1-form symmetry of the four-dimensional model (which has the three-dimensional RBH model on its boundary) leads to a nontrivial domain wall in the four-dimensional toric code, which exchanges electric and magnetic loop-like excitations. 
	} 
	\label{figGaugeWall}%
\end{figure}

Now let us turn to a construction of gapped domain walls from $1$-form SPT phases. Consider a four-dimensional system with $\zz_2 \times \zz_2$ $1$-form symmetry, defined on a lattice $\Lambda'$ with the cubic lattice~$\mathcal{C}$ (described in the previous section) as its boundary. We will consider the following Hamiltonian:
\begin{align}
H_{1} = - \sum_{v\in \text{bulk}(\Lambda')} X_{v} - H_{\text{RBH}}^{\calC},
\end{align}
where $H_{\text{RBH}}^{\calC}$ is the RBH Hamiltonian supported on qubits living on the three-dimensional boundary $\partial\Lambda' = \mathcal{C}$ of the lattice $\Lambda'$. We can gauge the above Hamiltonian to obtain two copies of the four-dimensional toric code with twisted gapped boundaries. On the boundary, loop-like excitations $e_{1}m_{2}$ and $e_{2}m_{1}$ may condense. Here, $e_i$ and $m_i$ ($i =1,2$) correspond to loop-like electric and magnetic excitations (i.e. violated $X$-type and $Z$-type stabilisers of the four-dimensional toric code, respectively). The $e_i$ and $m_i$ excitations occur on the one-dimensional boundary of a two-dimensional membrane of $Z$-type and $X$-type operators, respectively. One can consider this gapped boundary as a gapped domain wall connecting the two copies of the four-dimensional toric code. Upon crossing the domain wall (see Fig.~\ref{figGaugeWall}), the following exchange between electric and magnetic loop-like excitations is implemented
\begin{equation}
e_1 \leftrightarrow m_2, \qquad m_1 \leftrightarrow e_2.
\end{equation} 

This observation already provides an argument that the RBH model is an example of a nontrivial $1$-form SPT phase. To address the thermal stability of the SPT-order of the RBH model, one may appeal to the thermal stability of the four-dimensional toric code where the nontrivial braiding statistics between electric and magnetic loop-like excitations survive even at nonzero temperature. The fact that the gapped domain wall implements an exchange of loop-like excitations with nontrivial braiding properties at nonzero temperature is an indication that the underlying RBH Hamiltonian with 1-form symmetry is thermally stable. 

\subsection{Order parameters for detecting SPT-order of the thermal RBH model}\label{secMembrane}

We now give a direct proof of the nontrivial SPT-order of the thermal RBH model when the 1-form symmetry is enforced. The proof is based on a set of membrane operators that serve as order parameters. In addition to serving as witnesses of SPT-order, these membrane operators can be used to demonstrate the ability to perform gate teleportation in the MBQC scheme, as explored in section \ref{sec4}. These operators can be viewed as generalisations of the string order parameters used to detect SPT-order in one dimension \cite{KT1,KT2,HiddenSymBreak} and similar constructions can be made for other higher form SPT-ordered models. Such operators can be specified a two-dimensional surfaces as follows. For any dual-2-chain $\Gamma_1\in C_2^*$ and any 2-chain $\Gamma_2\in C_2$ (which will be thought of as surfaces in the primal and dual lattices respectively), we define a membrane operator
\begin{align}
M(\Gamma_1) := \prod_{\sigma_1^* \in\Gamma_1}K(\sigma_1), \qquad M(\Gamma_2) := \prod_{\sigma_2 \in\Gamma_2}K(\sigma_2), 
\end{align}
where the first product is over all 1-cells $\sigma_1$ whose dual belongs to $\Gamma_1$. By definition of the cluster terms in Eq. (\ref{clusterTerms}), the membrane operators can be written as follows
\begin{align}\label{memcluster2}
M(\Gamma_1)= X(\Gamma_1) \cdot Z(\partial^{*} \Gamma_1), \qquad M(\Gamma_2)= X(\Gamma_2) \cdot Z(\partial \Gamma_2).
\end{align}
Since the cluster terms are commuting, the membrane operators for any 2-chain and dual-2-chain will also commute with each other and the cluster Hamiltonian. Additionally, at zero temperature the cluster state will be a $+1$-eigenstate of these operators for any choice of $\Gamma_1$ and $\Gamma_2$ (as the cluster state is a $+1$-eigenstate of the cluster terms).  

We now specify a class of membrane operators that we will be interested in. First, let $(\hat{x}, \hat{y} , \hat{z})$ be a coordinate system of the cubic lattice $\calC$ (with opposite boundaries identified). We choose two two-dimensional slices $L \subseteq \calC$ and $R\subseteq \calC$ that are separated in the $\hat{z}$ direction by a distance of at least $d/4$ (where $d$ is the linear size of the lattice $\calC$). These two regions are required to be extensive in both the $\hat{x}$ and $\hat{y}$ directions (i.e. each region has the topology of a torus) as depicted in Fig.~\ref{figSheetOrder}.

We choose $\Gamma_1$ to be a nontrivial dual-2-cycle in the $\hat{x} - \hat{z}$ plane, which can be regarded as a noncontractible surface (see Fig. \ref{figHorMembrane}). Let $\Gamma_2$ be a 2-chain in the $\hat{y} - \hat{z}$ plane, with boundary 
\begin{equation}
\partial \Gamma_2 = S_2^L +  S_2^R,
\end{equation}
such that $S_2^L \subseteq L$ and $S_2^R \subseteq R$ are both nontrivial 1-cycles winding in the $\hat{y}$ direction. The membrane operators corresponding to $\Gamma_12$ and $\Gamma_2$ are illustrated in Fig.~\ref{figSheetOrder}. We note that the precise form of $\Gamma_1$ and $\Gamma_2$ is not important, any membranes that differ by a 2-boundary or a dual-2-boundary may be considered equivalent. We note that the distance between the left and right boundaries $S_2^L$ and $S_2^R$ is lower bounded by $d/4$.

\begin{figure}[htb!]%
	\centering
	\subfloat[]{{\includegraphics[width=0.49\linewidth]{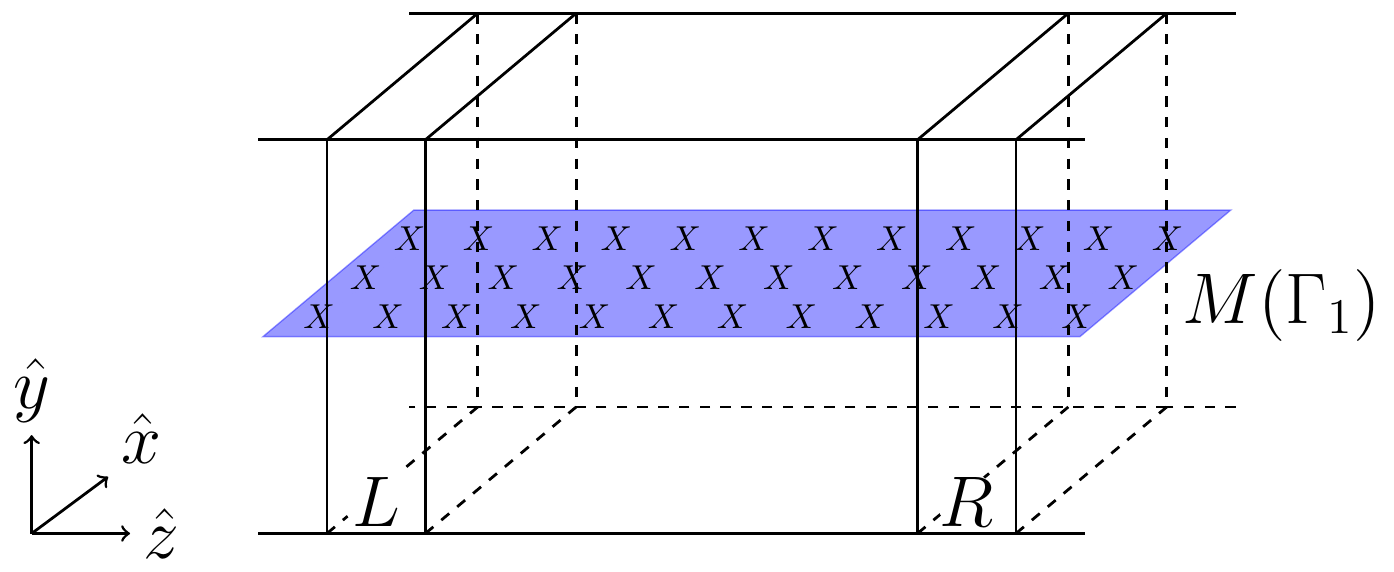} }\label{figHorMembrane}}%
	\qquad
	\subfloat[]{{\includegraphics[width=0.43\linewidth]{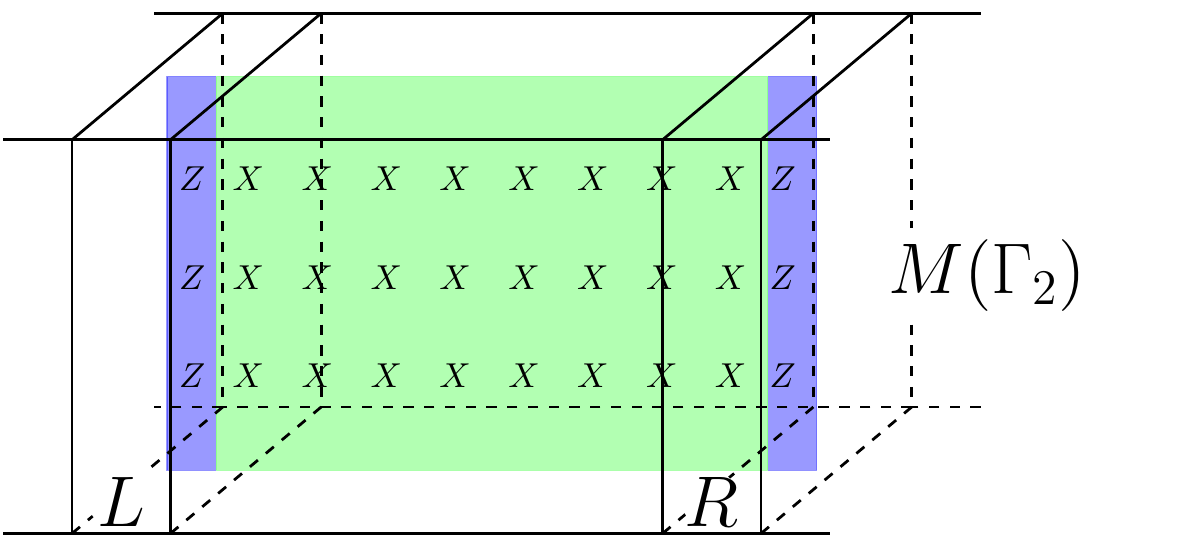} }\label{figVertMembrane}}%
	\caption{(Color online) (a) The membrane operator $M(\Gamma_1)$ and (b) the membrane operator $M(\Gamma_2)$. The top and bottom boundaries are identified, as are the front and back boundaries. The primal qubits lie on the darker (blue) sheets, and the dual qubits on the lighter (green) sheet.  The restrictions of these membrane operators to either $L$ or $R$ anti-commute. Note that length in the $\hat{z}$ direction has been exaggerated.}%
	\label{figSheetOrder}%
\end{figure}

These membrane operators are constructed to have nontrivial algebraic relations on the regions $L$ and $R$. Namely, let $M^{{L}}(\Gamma_1)$ and $M^{{L}}(\Gamma_2)$ be the restriction of $M(\Gamma_1)$ and $M(\Gamma_2)$ to the region ${L}$, respectively. Then this restriction gives an anti-commuting pair of operators 
\begin{equation}\label{EqAntiCom}
\{ M^{{L}}(\Gamma_1) , M^{{L}}(\Gamma_2) \} = 0.
\end{equation}
This is because the boundary of $M(\Gamma_2)$ consists of a string of Pauli $Z$ operators, which intersects the sheet of Pauli $X$ operators of $M(\Gamma_1)$ at a single site, as depicted in Fig.~\ref{figSheetOrder}. Similarly, the restriction of the membrane operators to ${R}$ gives a pair of anti-commuting operators.  By analogy to one-dimensional SPT phases, the membrane operators $M(\Gamma_1)$ and $M(\Gamma_2)$ generate a $\zz_2\times \zz_2$ group, while their restriction to the boundaries gives a nontrivial projective representation of the $\zz_2\times \zz_2$ group \cite{Else2012,Delse1,HCComplexity}.

For these choices, let $\Gamma=(\Gamma_{1}, \Gamma_{2})$ denote the pair of membranes and let $M_1 = M(\Gamma_1)$ and $M_2 = M(\Gamma_2)$. To define the order parameter, we must also allow for the ability to perform local error correction within a neighbourhood of each region $L$ and $R$. As we will see, this error correction will be a necessary ingredient to detect SPT-order in the RBH thermal state. In particular, let $\overline{L}$ and $\overline{R}$ be non-intersecting neighbourhoods of $L$ and $R$ respectively, and let $\calE_{\overline{L}} \otimes \calE_{\overline{R}}$ be any operation local to $\overline{L}$ and $\overline{R}$. Namely, $\calE_{\overline{L}} \otimes \calE_{\overline{R}}$ consists of measurements, followed by an outcome dependent local unitary, which will be thought of as an error correction map. For a state $\rho$, the order parameter is defined as the expectation value of the membrane operators, maximized over all locally error corrected states $\overline{\rho} = \calE_{\overline{L}} \otimes \calE_{\overline{R}}(\rho)$, 
\begin{equation}\label{eqorderparameter}
O_{\Gamma}(\rho): =  \max_{\overline{\rho} = \calE_{\overline{L}} \otimes \calE_{\overline{R}}(\rho)} \frac{1}{2}\Tr\left(\overline{\rho} (M_1 + M_2)\right).
\end{equation}
For our purposes it will be sufficient to consider error correction within neighbourhoods of $L$ and $R$ that have radius $\calO(\log(d))$. One can impose the additional restriction that the measurements and unitaries of $\calE_{\overline{L}} \otimes \calE_{\overline{R}}$ be symmetric, although this is not required to distinguish phases. 

In Lemma \ref{lemMembrane} we will derive an upper bound on the value of $O_{\Gamma}(\rho)$ for thermal states with trivial SPT-order. Then in Lemma \ref{lemClusterNontriv} we show that there exists a nonzero critical temperature $T_c$, such that $O_{\Gamma}(\rho_{\calC}(\beta)) \approx 1$ for the symmetric thermal state of the RBH model at $0 \leq T \leq T_c$. 

\begin{lem}\label{lemMembrane}
	For any $\zz_2 \times \zz_2$ symmetric ensemble $\rho_0$ that is $(r,\epsilon)$ SPT-trivial with $r$ sub-linear in the lattice size $d$, there exist sufficiently large $d$ such that $O_{\Gamma}(\rho_0) \leq1/2 + \epsilon$.
\end{lem}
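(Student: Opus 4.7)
The plan is to replace $\rho_0$ by its trivializing form $\tilde\rho := \Tr_\calK(\calU\rho_{cl}\calU^\dagger)$ at cost $\epsilon$, and then to bound the order parameter on $\tilde\rho$ by promoting the error-correction channels to unitaries via Stinespring dilation and exploiting the anticommutation of the restricted membranes in Eq.~(\ref{EqAntiCom}). Because $\norm{\rho_0-\tilde\rho}_1 \leq \epsilon$, $\norm{M_1+M_2}_\infty \leq 2$, and the trace norm is monotone under CPTP maps, H\"older's inequality gives $|\tfrac{1}{2}\Tr(\calE(\rho_0-\tilde\rho)(M_1+M_2))| \leq \epsilon$ for every $\calE = \calE_{\overline L}\otimes\calE_{\overline R}$, so it suffices to show $\tfrac{1}{2}\Tr(\calE(\tilde\rho)(M_1+M_2)) \leq \tfrac{1}{2}$ uniformly in $\calE$.

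Using Stinespring, write $\calE_{\overline L}(\sigma) = \Tr_{E_L}\!\bigl(V_L(\sigma\otimes\ket{0}\bra{0}_{E_L})V_L^\dagger\bigr)$ with $V_L$ supported on $\overline L\cup E_L$, and analogously for $R$; set $W := V_L V_R \calU$. Then
\begin{equation*}
\Tr(\calE(\tilde\rho)M_i) = \Tr\!\left[(\rho_{cl}\otimes\ket{0}\bra{0}_{E_LE_R})\,W^\dagger(M_i\otimes I)W\right].
\end{equation*}
Choose three disjoint slab regions $A\supseteq\overline L$, $B\supseteq\overline R$ and $C=\Lambda\setminus(A\cup B)$, with $A,B$ of thickness $\calO(\log d)$, and write $M_i = M_i|_A \cdot M_i|_C \cdot M_i|_B$ as a product of commuting Pauli factors on disjoint regions. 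A direct Pauli computation (the single $\Gamma_1$-$S_2^L$ intersection in $L\subseteq A$, and the analogous intersection near $R$) gives $\{M_1|_A,M_2|_A\}=\{M_1|_B,M_2|_B\}=0$. Since $V_L$ ($V_R$) only acts on $\overline L\cup E_L\subseteq A\cup E_L$ (resp.~$\overline R\cup E_R\subseteq B\cup E_R$), each of $V_L,V_R$ acts nontrivially on at most one of the three factors; the depth-$r$ circuit $\calU$ then spreads each block by at most $r$ sites on $\calH'$. For $r$ sub-linear in $d$, $A,B$ chosen with buffers so that the $r$-thickenings remain pairwise separated (possible for sufficiently large $d$ thanks to the $d/4$ separation of $L$ and $R$), the conjugated blocks $\tilde M_i^{(A)}, \tilde M_i^{(C)}, \tilde M_i^{(B)}$ are mutually commuting Hermitian unitaries on pairwise disjoint regions, and $\{\tilde M_1^{(A)},\tilde M_2^{(A)}\} = \{\tilde M_1^{(B)},\tilde M_2^{(B)}\} = 0$ by conjugation by the unitary $W$.

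Expanding $\rho_{cl}\otimes\ket{0}\bra{0}_{E_LE_R} = \sum_s p_s\,\ket{s,0}\bra{s,0}$ in the local product basis of $\calH'\otimes E_L\otimes E_R$ guaranteed by the SPT-trivial definition and using the disjointness of the three supports,
\begin{equation*}
\Tr(\calE(\tilde\rho)M_i) = \sum_s p_s\,\ell_i(s)\,m_i(s)\,r_i(s),
\end{equation*}
with $\ell_i(s), m_i(s), r_i(s) \in [-1,1]$ the expectations of $\tilde M_i^{(A)},\tilde M_i^{(C)},\tilde M_i^{(B)}$ in the product state $\ket{s,0}$. The anticommutation uncertainty relation for any pair of anticommuting Hermitian unitaries gives $\ell_1(s)^2 + \ell_2(s)^2 \leq 1$ and $r_1(s)^2 + r_2(s)^2 \leq 1$, and Cauchy--Schwarz then yields
\begin{equation*}
|\ell_1 m_1 r_1 + \ell_2 m_2 r_2| \leq |\ell_1||r_1| + |\ell_2||r_2| \leq \sqrt{(\ell_1^2+\ell_2^2)(r_1^2+r_2^2)} \leq 1.
\end{equation*}
Averaging over $s$ delivers $|\Tr(\calE(\tilde\rho)(M_1+M_2))|\leq 1$, and combining with the initial reduction yields $O_\Gamma(\rho_0) \leq \tfrac{1}{2}+\epsilon$.

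The main obstacle is that error-correction channels are CP rather than unitary, so their adjoints do not preserve anticommutation of operators; the Stinespring dilation is essential to lift them to unitaries that carry $\{M_1|_A,M_2|_A\} = 0$ through verbatim. A secondary technical point is arranging the three conjugated blocks to have pairwise disjoint supports after the depth-$r$ spread, which is where the sub-linearity of $r$ against the $d/4$ separation between $L$ and $R$ is invoked.
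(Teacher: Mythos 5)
Your outer scaffolding is sound and matches the paper's: the H\"older reduction from $\rho_0$ to the trivialized state, the Stinespring dilation of $\calE_{\overline{L}}\otimes\calE_{\overline{R}}$ (the paper handles the error-correction map more casually, but your dilation is a clean way to do it), and the final step combining the anticommuting-pair bound $\ell_1^2+\ell_2^2\leq 1$ with Cauchy--Schwarz are all correct. The fatal gap is the claim that the three conjugated blocks $\tilde{M}_i^{(A)},\tilde{M}_i^{(C)},\tilde{M}_i^{(B)}$ have pairwise disjoint supports. Since $A$, $B$ and $C=\Lambda\setminus(A\cup B)$ tile the lattice, $A$ and $C$ are adjacent, so their $r$-thickenings necessarily overlap in a collar around the shared interface --- no choice of ``buffers'' can prevent this; the sub-linearity of $r$ and the $d/4$ separation only keep the $L$-side away from the $R$-side, not $A$ away from $C$. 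With overlapping supports, the product-state expectation does not factorize as $\ell_i(s)\,m_i(s)\,r_i(s)$, and the bound collapses. The tell is that your argument nowhere uses the \emph{symmetry} of the gates of $\calU$, only its depth --- and the lemma is false for unconstrained low-depth circuits: the cluster state itself is $\calU_{CZ}\ket{+}^{\otimes N}$ with constant-depth, non-symmetric gates, and satisfies $\frac{1}{2}\Tr\left(\rho\,(M_1+M_2)\right)=1$. Concretely, in that example $\calU_{CZ}^{\dagger}\, X(\Gamma_1\cap A)\,\calU_{CZ}=X(\Gamma_1\cap A)\,Z\!\left(\partial^{*}(\Gamma_1\cap A)\right)$ carries Pauli-$Z$ operators along the artificial cut at the $A$--$C$ interface; these sit in the overlap with the conjugated $C$-block and cancel against matching $Z$'s there, so the full expectation is $1$ even though the individual block expectation $\ell_1$ vanishes --- exactly the failure mode your factorization ignores.

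The missing idea, which is the heart of the paper's proof, is that symmetric gates commute with the membranes \emph{exactly}, not merely up to lightcone spreading: for any local region $\chi$ there is a dual-2-boundary $b_2'$ with $M_1^{\chi}=S(b_2')^{\chi}$, so every symmetric gate commutes with $M_1$ outright, and likewise with $M_2$ except for gates within distance $\sim r$ of the boundary strings $S_2^L,S_2^R$. Hence conjugation leaves the bulk factor $M_i^{bulk}$ untouched (no cut operators are ever created at the $\overline{L}$/bulk interface), confines all disturbance to neighbourhoods of $L$ and $R$, and --- since the symmetric product state is a $+1$-eigenstate of $M_i^{bulk}$ --- the expectation factorizes into just two factors $\langle M_{i,a}^{\overline{L}}\rangle\langle M_{i,a}^{\overline{R}}\rangle$ on genuinely disjoint regions, to which the anticommutation bound applies. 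To repair your proof you would need to replace the lightcone step with this symmetry argument. A secondary, fixable point: your expansion of $\rho_{cl}$ as a mixture over a local product basis needs justification --- it follows because a nondegenerate classical Hamiltonian commuting with the symmetry forces each product eigenstate to be an eigenstate of every symmetry operator --- but this is minor next to the overlap issue.
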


\begin{proof}
	Since $\rho_0$ is $(r,\epsilon)$ SPT-trivial, we can approximate it by $\rho' = \sum_{a}p(a) \ket{\psi_{a}} \bra{\psi_{a}}$ up to error-$\epsilon$ in trace norm, where each state $\ket{\psi_{a}}$ is an $(r,0)$ SPT-trivial state, and $p(a)$ is a probability distribution. For each $\ket{\psi_a}$, we have $\bra{\psi_a} M_i \ket{\psi_a} = \bra{\phi} U_a^{\dagger} M_i U_a \ket{\phi}$ for some symmetric circuit $U_a$ of depth $r$, where $\ket{\phi}$ is a product state. 
	
	Let $w$ be the largest value out of $\calO(\log(d))$ and $r$. Take enlarged regions $\overline{L}$ of $L$ and $\overline{R}$ of $R$ obtained by taking $w$-neighbourhoods around $L$ and $R$ respectively. Since $r$ is sub-linear in $d$, then we can take $d$ sufficiently large such that $\overline{L} \cap \overline{R} = \emptyset$. For a transversal operator $A$ (meaning it is a tensor product of single-qubit operators), and a subregion $\chi$ of the lattice $\calC$, let $A^{\chi}$ denote the restriction of $A$ to $\chi$. Since the membranes $M_i$ are transversal, we can decompose them across the regions, $M_i = M_i^{\overline{L}} \otimes M_i^{bulk} \otimes M_i^{\overline{R}}$, where the bulk region is the complement of $\overline{L}\cup \overline{R}$. We now claim that the regions $\overline{L}$ and $\overline{R}$ are large enough so that $[M_i^{bulk}, U_a] = 0$ for $i=1,2$ and all $a$. 
	
	Firstly, we must have $[M_1 , U_a] = 0$ for all $a$. This is because for any local region $\chi$, there exists a dual-2-boundary $b_2'$ such that $M_1^{\chi} = S(b_2')^{\chi}$. Since each gate in $U_a$ has to be symmetric, it must commute with $S(b_2')$ for any dual-2-boundary $b_2'$. It follows that each gate must also commute with $M_1$. Now consider $M_2$, for any region $\chi$ away from the boundary of $\Gamma_2$, similarly we can always find a 2-boundary $b_2$ such that $M_2^{\chi} = S(b_2)^{\chi}$. Similarly, each gate in $U_a$ must commute with $S(b_2)$ for any 2-boundary $b_2$ and therefore also with $M_2^{bulk}$. This is not satisfied in general near the boundaries of $\Gamma_2$. But provided $M_2^{bulk}$ is supported a distance greater than the circuit depth $r$ away from the boundaries $S_2^L$ and $S_2^R$, then we have $[M_2^{bulk}, U_a] = 0$, $\forall a$.
	
	We can therefore write $U_a^{\dagger} M_i U_a =  M_{i,a}^{\overline{L}} \otimes M_{i}^{bulk} \otimes M_{i,a}^{\overline{R}}$, where $M_{i,a}^{\overline{L}}= U_a M_{i}^{\overline{L}} U_a^{\dagger}$ and similarly for $M_{i,a}^{\overline{R}}$. From Eq. (\ref{EqAntiCom}), the restriction of membrane operators $M_1$ and $M_2$ to the either of the disjoint regions $\overline{L}$ or $\overline{R}$, give rise to the following anti-commutation relations 
	\begin{equation}\label{EqAntiCom2}
	\{ M_{1}^{\overline{L}} , M_{2}^{\overline{L}} \} = \{ M_{1}^{\overline{R}} , M_{2}^{\overline{R}} \} = 0. \quad \textcolor{white}{\forall a}
	\end{equation}
	Because of unitary equivalence between the operators, we also have 	
	\begin{equation}\label{anticommute}
	\{M_{1,a}^{\overline{L}},M_{2,a}^{\overline{L}}\} = \{M_{1,a}^{\overline{R}},M_{2,a}^{\overline{R}}\} = 0  \quad \forall a,
	\end{equation}
	where these operators also have eigenvalues $\pm 1$. Since $\ket{\phi}$ is a symmetric product state, it is a $+1$-eigenstate of $M_{i}^{bulk} $. However, $\ket{\phi}$ cannot be a simultaneous eigenstate of both $M_{1,a}^{\overline{L}}$ and $M_{2,a}^{\overline{L}}$, nor of $M_{1,a}^{\overline{R}}$ and $M_{2,a}^{\overline{R}}$ due to the anti-commutation relations of Eq.~(\ref{anticommute}) and since $\overline{L}$ and $\overline{R}$ are disjoint. In particular, since $\ket{\phi}$ is a tensor product $\ket{\phi} = \ket{\phi}_{\overline{L}}\otimes \ket{\phi}_{bulk} \otimes \ket{\phi}_{\overline{R}}$ then 
	\begin{align}
	\bra{\psi_a} M_i \ket{\psi_a} &=	\bra{\phi} U_a^{\dagger} M_i U_a \ket{\phi} \\
	&= \bra{\phi}_{\overline{L}} M_{i,a}^{\overline{L}}\ket{\phi}_{\overline{L}} \cdot \bra{\phi}_{bulk} M_{i}^{bulk}\ket{\phi}_{bulk} \cdot \bra{\phi}_{\overline{R}} M_{i,a}^{\overline{R}}\ket{\phi}_{\overline{R}}\\
	&= \langle M_{i,a}^{\overline{L}}\rangle \cdot \langle M_{i,a}^{\overline{R}}\rangle \label{EqantiProd}
	\end{align}
	for $i = 1,2$, where $\langle M_{i,a}^{\overline{L}}\rangle = \bra{\phi}_{\overline{L}} M_{i,a}^{\overline{L}}\ket{\phi}_{\overline{L}} $ and $\langle M_{i,a}^{\overline{R}}\rangle = \bra{\phi}_{\overline{R}} M_{i,a}^{\overline{R}}\ket{\phi}_{\overline{R}}$. It is shown in \cite{AntiComBound} that for $k$ mutually anti-commuting operators $\{A_i\}$, each with eigenvalues $\pm1$, any state $\ket{\psi}$ satisfies the following inequality
	\begin{equation}\label{AntiBoundEq}
	\sum_{i = 1}^k \langle A_i \rangle_{\ket{\psi}} \leq \sum_{i = 1}^k \langle A_i \rangle_{\ket{\psi}}^2 \leq 1,
	\end{equation}
	where the expectation value is taken with respect to the state $\ket{\psi}$. Using Eq. (\ref{EqantiProd}), we have for the approximate state
	\begin{align}
	\Tr\left(\rho'(M_1 +M_2)\right) &= \sum_a p(a)\left(\langle M_{1,a}^{\overline{L}}\rangle  \cdot \langle M_{1,a}^{\overline{R}}\rangle  + \langle M_{2,a}^{\overline{L}}\rangle  \cdot \langle M_{2,a}^{\overline{R}}\rangle \right) \\
	&\leq \left( \langle M_{1,a}^{\overline{L}}\rangle^2 + \langle M_{2,a}^{\overline{L}}\rangle^2 \right)^{\frac{1}{2}} \left( \langle M_{1,a}^{\overline{R}}\rangle^2 + \langle M_{2,a}^{\overline{R}}\rangle^2 \right)^{\frac{1}{2}} \\
	&\leq 1,
	\end{align}
	where the first inequality is the Cauchy-Schwarz inequality and the second inequality is using Eq.~(\ref{AntiBoundEq}). Now for any error correction map $\calE_{\overline{L}} \otimes \calE_{\overline{R}}$ that is localized to the non-intersecting neighbourhoods $\overline{L}$ and $\overline{R}$ of $L$ and $R$ respectively, we have by the same argument
	\begin{equation}
	\Tr\left(\calE_{\overline{L}} \otimes \calE_{\overline{R}}(\rho')(M_1 +M_2)\right) \leq 1.
	\end{equation}
	Then since $\rho'$ and $\rho_0$ are close in trace norm, they have similar expectation values of bounded observables, in the following way. Assume $\calE =\calE_{\overline{L}} \otimes \calE_{\overline{R}}$ is the map which maximizes $O_{\Gamma}(\rho_0)$, then
	\begin{align}
	|O_{\Gamma}(\rho_0) - \frac{1}{2}\Tr\left(\calE(\rho')(M_1 +M_2)\right)| &= \frac{1}{2} |\Tr\left((M_1 + M_2)(\calE(\rho_0) - \calE(\rho'))\right)| \\
	&\leq \frac{1}{2}\norm{(M_1 + M_2)(\calE(\rho_0) -\calE(\rho'))}_1 \\
	&\leq \frac{1}{2}\norm{M_1 + M_2}_{\infty}\cdot \norm{\calE(\rho_0 )- \calE(\rho')}_1 \\
	&\leq \norm{\rho_0 - \rho'}_1 \\
	&\leq \epsilon,
	\end{align}
	where the second inequality follows from H{\"o}lder's inequality. The claim then follows. 
\end{proof}

One could define more complicated order parameters so that the bound on $O_{\Gamma}(\rho_0)$ in Lemma \ref{lemMembrane} can be made arbitrarily small. However, our choice and the above bound will be sufficient to show the RBH model has nontrivial SPT-order. Next we show that the thermal RBH model with 1-form symmetry has a high expectation value of the membrane operators provided the temperature is below some critical temperature $T_c$. We do this by showing that large loop excitations are confined in the low temperature phase. In subsection \ref{secIG} we will show that $T_c$ is the critical temperature of the three-dimensional $\zz_2$ Ising gauge model.
\begin{lem}\label{lemClusterNontriv}
	For the symmetric thermal Gibbs ensemble $\rho_{\calC}(\beta)$ of the RBH model with $\zz_2 \times \zz_2$ 1-form symmetry with $0 \leq T \leq {2}/{\log (5)}$, there exists a constant $\delta \textgreater 0$ (independent of systems size) such that for sufficiently large $d$ we have
	\begin{equation}
	O_{\Gamma}(\rho_{\calC}(\beta)) \geq 1 - \calO(d^{-\delta}).
	\end{equation} 
\end{lem}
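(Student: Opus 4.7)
The plan is to use the explicit classical-mixture form of the symmetric Gibbs ensemble given in Eq.~\eqref{symgibbscluster} to reduce the bare expectation of each membrane to an intersection-parity average over a $\zz_2$ loop gas on the three-torus, and then to control those averages by a Peierls-type counting argument on closed walks in $\zz^3$. Because the optimisation in Eq.~\eqref{eqorderparameter} can only increase the value of $O_\Gamma(\rho_\calC(\beta))$, it is enough to exhibit one local error-correcting map $\calE_{\overline L}\otimes \calE_{\overline R}$ for which both $\langle M_1\rangle$ and $\langle M_2\rangle$ land within $\calO(d^{-\delta})$ of $1$.

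First I would compute how the membrane operators act on the basis $\{\ket{\psi(\gamma,\gamma')}\}$. Writing $M_i$ as a product of cluster terms $K(\sigma)$ and using the fact that $Z(\gamma)$ and $Z(\gamma')$ anti-commute with $K(\sigma)$ exactly when $\sigma$ lies in the corresponding cycle, a short computation gives
\begin{equation*}
M_1\ket{\psi(\gamma,\gamma')} = (-1)^{|\Gamma_1\cap\gamma|}\ket{\psi(\gamma,\gamma')},\qquad M_2\ket{\psi(\gamma,\gamma')} = (-1)^{|\Gamma_2\cap\gamma'|}\ket{\psi(\gamma,\gamma')}.
\end{equation*}
Because $\Pr_\beta(\gamma,\gamma')$ factorises across the primal and dual sectors with marginals proportional to $e^{-2\beta|\gamma|}$, the bare expectations reduce to two independent loop-gas averages over $\gamma\in Z_1$ and $\gamma'\in Z_1^*$. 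For $\langle M_1\rangle$, the key observation is that $\Gamma_1$ is a $2$-cycle, so $(-1)^{|\Gamma_1\cap\gamma|}$ depends on $\gamma$ only through the class $[\gamma]\in H_1(\calC;\zz_2)$ via the mod-$2$ intersection pairing; hence $\langle M_1\rangle = \sum_{h}(-1)^{I(\Gamma_1,h)}\Pr([\gamma]=h)$, and it is enough to show that the nontrivial classes have total weight $\calO(d^{-\delta})$. Any representative of such a class wraps the torus and so has length at least $d$, and the bijection $\gamma\mapsto\gamma+\gamma_h^{\min}$ combined with the standard bound that the number of closed-walk configurations on $\zz^3$ of total length $\ell$ is at most $\mu^\ell$ with connectivity constant $\mu\leq 5$ gives $\calZ_h/\calZ_0\leq \mathrm{poly}(d)\,(5e^{-2\beta})^d$, which is exponentially small in $d$ whenever $\beta\geq\tfrac12\log 5$.

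For $\langle M_2\rangle$ the obstacle is that $\partial\Gamma_2=S_2^L+S_2^R\neq 0$, so $(-1)^{|\Gamma_2\cap\gamma'|}$ is not a homological invariant: a contractible dual $1$-cycle that links $S_2^L$ or $S_2^R$ already contributes $-1$, and such short linking loops occur with uniform density at any $T>0$, which would drive the raw expectation down to $e^{-\Omega(d)}$. This is where the error-correction freedom in Eq.~\eqref{eqorderparameter} is essential. I would take $\calE_{\overline L}\otimes\calE_{\overline R}$ to be a local matching decoder on $\calO(\log d)$-thick slabs $\overline L,\overline R$: measure the $K(\sigma_1)$ syndromes inside each slab, pair neighbouring violated plaquettes by minimum-weight matching, and apply the corresponding $Z$-type local unitary to annihilate them. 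After decoding, any $\gamma'$ still producing $|\Gamma_2\cap\gamma'|\neq 0$ must either (i) be nontrivial in relative homology, which is Peierls-suppressed as above, or (ii) contain a connected loop of length $\Omega(\log d)$, which occurs with probability $\mathrm{poly}(d)\,(5e^{-2\beta})^{\Omega(\log d)}=\calO(d^{-\delta})$ whenever $T\leq 2/\log 5$. Crucially, the decoder's correction is itself a product of cluster-term factors inside the slabs, so it commutes with $M_1$ and does not undo the bound from the previous paragraph.

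The main obstacle is making the boundary step rigorous: one must verify that the local matching decoder on $\overline L\cup\overline R$ removes all short linking excitations with failure probability only $\calO(d^{-\delta})$, uniformly in $d$. This requires a Peierls-type count of connected clusters of $Z$-type excitations intersecting a log-sized neighbourhood of $\partial\Gamma_2$, together with an argument that minimum-weight matching inside a slab misidentifies excitations only when their total size exceeds the slab thickness. The remaining pieces — the commutation calculation for the membrane eigenvalues and the homological reduction for $\langle M_1\rangle$ — are essentially routine once the connectivity constant of closed walks on $\zz^3$ is invoked.
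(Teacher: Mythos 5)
Your proposal is correct and follows essentially the same route as the paper's proof: expand the symmetric Gibbs state as a loop gas, read off the membrane eigenvalues as intersection parities, suppress large and homologically nontrivial loops via a Peierls count with connectivity constant $5$ (yielding the $\beta > \tfrac{1}{2}\log 5$ threshold), and repair the boundary-linking errors affecting $M_2$ with a local error-correcting map on $\calO(\log d)$ neighbourhoods of $S_2^L$ and $S_2^R$, concluding by trace-norm continuity. The only cosmetic difference is that the paper truncates to an approximate state with all loops shorter than $c\log d$ and then corrects $M_2$ by measuring cluster terms near $\partial\Gamma_2$ and applying a single parity-fixing linking cycle $Z(\gamma')$, whereas you annihilate the detected loops by matching; both realize the same mechanism, and your flagged slab-boundary obstacle is exactly what the paper's parity-based correction sidesteps.
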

\begin{proof} 
	Consider first the expectation value of $M_2$. Since $M_2$ can be constructed from a product of cluster terms (as in Eq. (\ref{memcluster2})), we have at zero temperature $\Tr(M_2\rho) = 1$. Using the symmetric Gibbs ensemble $\rho_{\calC}(\beta)$ in Eq. (\ref{symgibbscluster}), the expectation value of a membrane operator is given by 
	\begin{equation}\label{loopExp}
	\text{Tr}(\rho_{\calC}(\beta) M_2) = \sum_{(\gamma, \gamma')\in Z_1 \times {Z}_1^{*}}\text{Pr}_{\beta}((\gamma, \gamma')) \langle M_2 \rangle_{\ket{\psi(\gamma, \gamma')}}, 
	\end{equation}	
	where the expectation value is with respect to the excited state $\ket{\psi(\gamma, \gamma')} = Z(\gamma)Z(\gamma')\ket{\psi_{\calC}}$.	Let $|\Gamma_2 \cap \gamma'|$ denote the number of times $\gamma'$ intersects $\Gamma_2$. Since $\ket{\psi(\gamma, \gamma')}$ is a $\pm1$ eigenstate of $M(\Gamma_2)$, we have 
	\begin{equation}\label{errorCycle}
	\langle M(\Gamma_2) \rangle_{\ket{\psi(\gamma, \gamma')}} = \begin{cases}
	+1 \text{ if } |\Gamma_2 \cap \gamma'| = 0 \text{ mod } 2  \\
	-1 \text{ if } |\Gamma_2 \cap \gamma'| = 1 \text{ mod } 2. \
	\end{cases}
	\end{equation}
	The right-hand side of Eq. (\ref{errorCycle}) is independent of the 1-cycle $\gamma$ since it is supported on the dual lattice and therefore $Z(\gamma)$ commutes with $M_2$. Notice that a similar expression holds for $M_1$. We call $\gamma'$ an error cycle if $|\Gamma_2 \cap \gamma'| = 1$ mod 2 (and similarly for $\Gamma_1$). We will show that there exists a critical temperature $T_c$, below which, large error cycles are suppressed and that error correction on the boundaries can account for the remaining errors. First we define an approximate state, where large loop-like excitations have been removed. 
	
	We say $\gamma \in Z_1$ is a \textit{loop} if any proper subset $\gamma' \subsetneq \gamma$, is not a cycle. We can partition the set of 1-cycles according to the size of the largest loop they contain. Specifically, let $Z_1^{\alpha} \subseteq Z_1$ consist of the set of 1-cycles whose largest loops are of length smaller than $\alpha$ (a similar definition holds for $Z_1^{*{\alpha}} \subseteq Z_1^{*}$). Then define the approximate state
	\begin{equation}
	\rho_{\text{ap}}^{\alpha}(\beta) = \sum_{(\gamma, \gamma')\in Z_1^{\alpha}  \times Z_1^{*{\alpha}}}{\Pr}_{\beta}(\gamma, \gamma')\ket{\psi(\gamma, \gamma')}  \bra{\psi(\gamma, \gamma')}. 
	\end{equation}
	We claim that for a fixed $0 \leq T \textless T_c = 2/\log(5)$, there exists a constant $c$ such that for $\alpha = c\log(d)$, we have
	\begin{equation}\label{approxCluster}
	\norm{\rho_{\text{ap}}^{\alpha}(\beta) - \rho(\beta)}_1 \leq \calO(d^{-\delta}),
	\end{equation}
	for some constant $\delta \textgreater 0$. To see this, fix $\alpha = c\log(d)$ and let $V =\left( Z_1^{} \times Z_1^{*} \right) \setminus \left(Z_1^{ \alpha} \times Z_1^{* \alpha}\right)$, be set of (dual-)cycles containing a loop of size at least $\alpha$ (note that a loop may refer to a subset of a 1-cycle or a dual-1-cycle). Then we have 
	\begin{equation}
	\norm{\rho_{\text{ap}}^{\alpha}(\beta) - \rho(\beta)}_1 = \sum_{(\gamma, \gamma') \in V} {\Pr}_{\beta}(\gamma, \gamma').
	\end{equation}	
	We can bound the above equation using the following relation
	\begin{align}\label{loopSigma}
	\sum_{(\gamma, \gamma') \in V} {\Pr}_{\beta}(\gamma, \gamma') &\leq \sum_{\substack{\text{loops } l \in Z_1 \cup Z_1^* \\ |l| \geq \alpha}} ~ \sum_{\substack{(c_1, c_1') \in Z_1 \times Z_1^* \\  l \subseteq c_1
			\text{ or } l \subseteq c_1' }} \text{Pr}_{\beta}(c_1, c_1'), \\
	&\leq \sum_{\substack{\text{loops } l \in Z_1 \cup Z_1^* \\ |l| \geq \alpha}} ~e^{-2\beta |l|}\cdot \sum_{\substack{(c_1, c_1') \in Z_1 \times Z_1^* \\  l \nsubseteq c_1 \text{ and } l \nsubseteq c_1' }} \text{Pr}_{\beta}(c_1, c_1'), \\
	&\leq \sum_{\substack{\text{loops } l \in Z_1 \cup Z_1^* \\ |l| \geq \alpha}} ~e^{-2\beta |l|} \\
	&\leq \sum_{k \geq \alpha} N(k) e^{-2\beta k},
	\end{align}
	where $N(k)$ is the number of loops in $Z_1 \cup Z_1^{*}$ of size $k$. For the cubic lattice $\calC$, the number of loops $N(k)$ of size $k$ can be bounded by $N(k)\leq 2\frac{6}{5}|\Delta_0|5^k$ (we can upper bound the number of possible loops by counting the number of non-backtracking walks: a non-backtracking walk can begin at any vertex and can move in at most 5 independent directions). Therefore, provided $\beta \textgreater \log(5)/2$, we have 
	\begin{align}
	\sum_{(\gamma, \gamma') \in V} {\Pr}_{\beta}(\gamma, \gamma') &\leq \frac{12}{5} |\Delta_0|\sum_{k=\alpha}^{\infty} e^{-k(2\beta - \log(5))} \\
	&= c'|\Delta_0| e^{-\alpha(2\beta- \log(5))},\label{sigmabound}
	\end{align}
	where $c' = 12/5(1 - e^{(\log(5) - 2\beta)})$ is independent of $d$. Since $|\Delta_0| = (d+1)^3$, the error in Eq. (\ref{approxCluster}) is exponentially small in ${\alpha}$, provided the temperature is below a critical temperature $T_c$. Here, we have given a lower bound on $T_c$ by $2/\ln(5)$. In terms of the lattice size $d$ we have
	\begin{equation}\label{sigmaBound}
	\sum_{(\gamma, \gamma') \in V} {\Pr}_{\beta}(\gamma, \gamma') \leq \calO( d^{-c(2\beta-\log(5))+3}).
	\end{equation}
	Choosing $c\geq 3/(2\beta-\log(5))$, we have $\delta = c(2\beta-\log(5))-3 \textgreater 0$ and the claim follows. Notice that this argument shows that large loop excitations in the RBH thermal state are suppressed, and is similar to Peierls' argument for spontaneous magnetization in the two-dimensional Ising model~\cite{peierlsIsing}.
	
	Now we show that for these values of $T$ and $\alpha$, there exists an error correction map $\calE$ such that 
	\begin{equation}\label{EqapproxCorrected}
	\Tr \left(\calE(\rho_{\text{ap}}^{\alpha}(\beta))(M_1 + M_2)\right) \geq 2-\calO(d^{-\delta}).
	\end{equation}
	Indeed, notice that if $d$ is large enough, the approximate state contains no homologically nontrivial excitations, as they must have length at least $d$. These are the only types of errors that reduce the expectation value of $M_1$, and so the approximate state satisfies 
	\begin{align}
	\Tr(\rho_{\text{ap}}^{\alpha}(\beta)M_1) &= \Tr(\rho_{\text{ap}}^{\alpha}) \\
	&= 1 - \sum_{(\gamma, \gamma') \in V} {\Pr}_{\beta}(\gamma, \gamma') \\
	&\geq 1 - \calO(d^{-\delta}).
	\end{align}
	using Eq. (\ref{sigmaBound}). The only types of errors in the approximate state that reduce $M_2$ are dual-1-cycles containing a loop that wraps around a boundary component of $\partial\Gamma_2=S_2^L\sqcup S_2^R$. Therefore any excitation in $\rho_{\text{ap}}^{\alpha}(\beta)$ that gives rise to an error is contained within an $\alpha/2$ neighbourhood of $S_1^L$ and $S_1^R$. By measuring all cluster terms $K(\sigma_2)$ in an $\alpha/2$ neighbourhood of $\partial \Gamma_2$ one can determine the location of any possible error cycles (for sufficiently large $d$, these $\alpha/2$ neighbourhoods are non-intersecting). Then depending on the parity of the number of error loops, one can apply a correction operator $Z(\gamma')$ for some dual-1-cycle $\gamma'$ wrapping around $S_2^L$ or $S_2^R$, that returns $\rho_{\text{ap}}^{\alpha}(\beta)$ to the $+1$ eigenspace of $M_2$. Letting $\calE$ denote the measurement and recovery steps (which in particular does not change the expectation value of the other membrane operator $M_1$ since the recovery is a local unitary), the approximate state similarly satisfies $\Tr(\calE(\rho_{\text{ap}}^{\alpha}(\beta))M_2) \geq 1- \calO(d^{-\delta})$, and therefore Eq. (\ref{EqapproxCorrected}) holds.

	Finally, let $\calE$ be the aforementioned error correction map, using an argument similar to that in  Lemma~\ref{lemMembrane}, we have 
	\begin{align}
	\left| \Tr\left(\calE(\rho_{\text{ap}}^{\alpha}(\beta))(M_1 + M_2)\right) - \Tr\left(\calE(\rho_{\calC}(\beta))(M_1 + M_2)\right) \right| &\leq2 \norm{\calE (\rho_{\text{ap}}^{\alpha}(\beta)) - \calE(\rho_{\calC}(\beta))} _1 \\
	&\leq 2\norm{\rho_{\text{ap}}^{\alpha}(\beta) - \rho_{\calC}(\beta)}_1 \\
	&\leq \calO(d^{-\delta}).
	\end{align}
	Then using Eq. (\ref{EqapproxCorrected}) we have that 
	\begin{equation}
	O_{\Gamma}(\rho_{\calC}(\beta)) \geq 1 - \calO(d^{-\delta}),
	\end{equation}	
	completing the proof.
\end{proof}

Lemma \ref{lemClusterNontriv} tells us that $O_{\Gamma}(\rho_{\calC}(\beta)) \rightarrow 1$ in the limit of infinite system size. This, along with Lemma \ref{lemMembrane}, shows that the RBH cluster model, protected by 1-form symmetry has nontrivial SPT-order for temperatures $0 \leq T \leq T_c$. The key ingredient in the proof is that large loop configurations are energetically suppressed in the low temperature phase, and this results in a type of \textit{string tension}. This is the characteristic behaviour of the $\zz_2$ lattice gauge theory in three dimensions, and we make this connection precise in the next subsection. Above the critical temperature, the string tension disappears as large error cycles become entropically favourable \cite{KogutGauge,FSGauge} and thus $O_{\Gamma}(\rho_{\calC}(\beta))$ will approach 0. We correspondingly expect the SPT-order to disappear above $T_c$.

\subsection{Comparison with a three-dimensional Ising gauge model}\label{secIG}

Having proved that the nontrivial SPT-order of the RBH model under the 1-form symmetry survives at nonzero temperature, we now compare it to a three-dimensional Ising gauge model \cite{KogutGauge,3DIsingGauge}. This comparison is natural because the 1-form symmetry of the RBH model and the gauge symmetry of the three-dimensional Ising gauge model are closely related. The model can be defined on the same lattice $\calC$ as the RBH model, and the Hamiltonian is given by a sum of plaquette terms
\begin{equation}\label{eqIsingG}
H_{IG} = - \sum_{\sigma_2 \in \Delta_2} Z(\partial \sigma_2) - \sum_{{\sigma}_1 \in {\Delta}_1} Z(\partial^{*} {\sigma}_1).
\end{equation}
We notice that the first and second terms are supported on disjoint sublattices so that $H_{IG}$ describes two decoupled copies of a three-dimensional Ising gauge model on the cubic lattice. This model has local gauge symmetries, which are the 1-form operators of Eq. (\ref{1formsym}).

Excitations of this model take the form of loop-like objects, and can be created by products of Pauli $X$ operators. These loop-like excitations have an energy cost proportional to their length in the same way as the RBH model with 1-form symmetry. Indeed, the spectrum of $H_{IG}$ is identical to that of the RBH model $H_{\calC}$ with 1-form symmetry enforced, and one can construct a duality mapping between the 1-form symmetric model $H_{\calC}$ and two copies of the three-dimensional Ising gauge model $H_{IG}$.

This Ising gauge model $H_{IG}$ has a low-temperature ordered phase where the excitations have string tension, such that large loops excitations are suppressed. The suppression of large excitations was the necessary ingredient in the proof of Lemma \ref{lemClusterNontriv} which we use to show the nontriviality of the 1-form symmetric RBH model at nonzero temperature. Therefore the lower bound of $T_c$ in Lemma \ref{lemClusterNontriv} of $2/\log(5) \approx 1.24$ can be increased to the critical temperature of the three-dimensional Ising gauge model, which has been estimated via numerical simulations~\cite{3DIsingGauge}, to be $T_{IG} \approx 1.31$. 

It is worth noting that the model described by the Hamiltonian $H_{IG}$ and the RBH model $H_{\calC}$ belong to distinct phases at zero temperature under 1-form symmetries, since the three-dimensional Ising gauge model has long-range entangled (topologically ordered) ground states. This distinction persists to nonzero temperature $T$ with $0 \leq T \leq T_c$, as the $H_{IG}$ retains the same order as the three-dimensional toric code \cite{CCTopoMem}. Indeed, the three models: the trivial paramagnet $H_X$, the RBH model $H_{\calC}$ and the three-dimensional Ising gauge theory $H_{IG}$, all have the same spectrum under 1-form symmetries and belong to distinct symmetric phases for temperatures $0 \leq T \leq T_c$. From the viewpoint of quantum information processing tasks, each of these phases has distinct uses: $H_{IG}$ can be used as a memory at nonzero temperature for the storage of classical bits \cite{CCTopoMem}, while the RBH model $H_{\calC}$ is a universal resource for MBQC at nonzero temperature. 

\section{Localizable entanglement}\label{sec4}

In the previous section, we have shown that the RBH model possesses nontrivial SPT-order at nonzero temperature when protected by a 1-form symmetry, and we developed order parameters that detect this nontrivial SPT phase.  In subsection \ref{sec4A} we provide an operational interpretation for these order parameters in terms of quantifying the entanglement that can be localized between distant regions in the thermal state through measurements in the bulk.  This provides a connection with the zero-temperature results in 1D SPT models \cite{Else2012}, where all nontrivial SPT-ordered ground states possess long-range localizable entanglement. These order parameters are also relevant in the context of quantum computation, as localizable entanglement is the underlying mechanism through which---via gate teleportation--- the RBH thermal state functions as a resource for measurement-based quantum computation. 

In subsection \ref{sec4B} we then turn our attention back to the standard RBH model without symmetry, and reflect on the robustness of this model for measurement-based quantum computation even in the case where no symmetry is enforced.   We find a novel perspective: that error correction can be used to restore an effective 1-form symmetry, and when the correction is successful, the model can be used to localize entanglement between distant regions.  This provides a direct link between thermal SPT phase and fault-tolerant measurement-based quantum computation, or more generically, high error-threshold quantum computing architectures.

\subsection{Localizable entanglement in the 1-form SPT model}\label{sec4A}

A primitive form of computation is the ability to generate entanglement between distant regions. Localisable entanglement $\tilde L_{LR}$ is the average entanglement (according to some entanglement measure $E$) of the post measured state between two regions $L$, $R$, maximized over all choices of single-site measurements $M$ on the complement of $L\cup R$. Following~\cite{verstraete2004entanglement}, the localisable entanglement is defined as 
\begin{equation}
\tilde L_{LR}(\rho) = \max_{M} \sum_s p_s E(\rho_s),
\end{equation}
where $\rho_s = \Pi_s \rho \Pi_s/{\rm Tr}(\Pi_s \rho)$ is the post-measurement state associated with a local measurement projector $\Pi_s = |s_1\rangle\langle s_1| \otimes \cdots \otimes |s_n\rangle \langle s_n|$ on $(L \cup R)^c$ and measurement outcome $s=(s_1,\ldots,s_n)$, and $p_s={\rm Tr}(\Pi_s \rho)$ is the probability of outcome $s$.

In general, maximizing over all possible local measurements is difficult, but if the state $\rho$ has a high degree of symmetry then the optimal measurement bases $\Pi_s$ may be determined from symmetry arguments \cite{LEVR}. For the 3D cluster state with the 1-form symmetry, it is straightforward to show (following \cite{DBcompPhases, SPTent}) that the optimal local measurement bases for localizing entanglement are always the $X$-basis; i.e., one should perform local $X$ measurements on all spins in the bulk.  The localizable entanglement of the state $\rho$ can then be expressed as the average entanglement of the post-measurement state $\rho_s$ across the $L/R$ partition:
\begin{equation}
\tilde L_{LR}(\rho) = \sum_s p_s E(\rho_s).
\end{equation}
This entanglement is also known as the $\emph{SPT-entanglement}$~\cite{SPTent}, and shown to be an order parameter for SPT phases protected by onsite symmetries at zero temperature.  We note that, in the presence of the 1-form symmetry, localizable entanglement and SPT-entanglement are identical.

We now show that the order parameters $O_\Gamma(\rho)$ developed in the previous section serve as a witness for localizable entanglement of the thermal SPT state. We note that the membrane operators $M(\Gamma_i)$ take the form
\begin{equation}\label{eq:CommutationIdentity}
M(\Gamma_i) = M_i^{L} \otimes M_i^{bulk} \otimes M_i^{R},
\end{equation}
where the bulk region is the complement of $L\cup R$. Since $L$ and $R$ are 2-dimensional slices, the restrictions $M_1^{L}$ and $M_1^{R}$ are 1-dimensional strings of Pauli $X$ operators, and $M_2^L$ and $M_2^R$ are 1-dimensional strings of Pauli $Z$ operators. Consider measurement of Pauli $X$ on all qubits that either do not belong to the two-dimensional slices $L$ and $R$, or belong to 2-cells of $L$ and $R$. Then the post measured state is an eigenstate of a pair of two-dimensional toric codes, each defined on the slices $L$ and $R$ (see Ref. \cite{RBH} for details). The membrane operators restricted to these slices are equivalent to logical operators of the two-dimensional toric codes, and in particular may be written in terms of these logical operators as $M_1^{L\cup R} = \overline{X}_L\otimes\overline{X}_R$ and $M_2^{L\cup R} = \overline{Z}_L \otimes \overline{Z}_R$.

After performing the local $X$ measurements on the bulk qubits, the measurement projector $\Pi_s$ projects into eigenstates of $M_i^{bulk}$. Then the single qubit measurement outcomes can be multiplied to infer the outcome of each bulk operator $M_i^{bulk}$. This classical information is transmitted to $L$ and $R$ and we can infer the $\pm 1$ outcomes of the logical operators $\overline{X}_L\otimes \overline{X}_R$ and $\overline{Z}_L\otimes\overline{Z}_R$ for the post measured state. Note that due to the anti-commutation relations of Eq. (\ref{EqAntiCom}) these correlations are that of a maximally entangled state encoded within two two-dimensional toric codes. The order parameter of Eq. (\ref{eqorderparameter}) after measurement, $\langle \overline{X}_L\otimes\overline{X}_R + \overline{Z}_L\otimes\overline{Z}_R\rangle/2$, is therefore an entanglement witness for the entanglement between topological degrees of freedom. Note that measurement outcomes of $\overline{X}_L\otimes\overline{X}_R$ and $\overline{Z}_L\otimes\overline{Z}_R$ for the post measured state might potentially depend on the choice of membrane $M_i^{bulk}$, but as discussed in section \ref{secMembrane}, we can freely deform the membrane operators due to the 1-form symmetries, thus removing any ambiguity. This entanglement enables gate teleportation in the topological cluster state quantum computing scheme~\cite{Rau06}, using the thermal state as the resource state.

Having provided an operational interpretation of thermal SPT-order as localizable entanglement in measurement-based quantum computation, we now briefly consider the physical consequence of this localizable entanglement. 
Non-triviality of SPT-order manifests itself most dramatically through physical properties on the boundaries.
For instance, 1D nontrivial SPT phases typically exhibit robust gapless boundary modes similar to those in topological insulators. 
The aforementioned localizable entanglement, or SPT-entanglement, for 1D SPT phases directly measures the boundary degeneracy that appear when the system has open edges \cite{SPTent}.
For three-dimensional systems with symmetries, their two-dimensional boundaries may exhibit robust gapless modes, symmetry-breaking phases and/or 2D topological order \cite{BulkBoundarySPT, TranslationalSET}. 
With $1$-form symmetries in the bulk, the boundary of the 3D cluster state at zero temperature supports the two-dimensional toric code on effective qubits localized near the boundary, and localizable entanglement, as quantified by membrane operators, measures the boundary degeneracy of the toric code on the boundaries. It is tempting to speculate that the presence of localizable entanglement at nonzero temperature in the 3D cluster state suggests that this boundary topological order persists even at nonzero temperature due to $1$-form symmetries in the bulk.

\subsection{Recovering effective 1-form symmetry with error correction}\label{sec4B}

We have shown that the RBH model can retain its long-range localizable entanglement at nonzero temperature when a 1-form symmetry is enforced.  The original results of Ref.~\cite{RBH} demonstrate, however, that this localizable entanglement persists in the thermal state even without any symmetry protection!  This result is surprising because, as we have shown, the protection of a 1-form symmetry is necessary to define an SPT-ordered phase at nonzero temperature.  To add to the confusion, the transition in localizable entanglement in the unprotected model, from long-range at low temperature to short-range at high temperature, does not correspond to any thermodynamic transition. Indeed, the Gibbs state of the RBH model without symmetry protection has no thermodynamic phase transition, and is equivalent to the Gibbs state of a non-interacting paramagnet. What is the underlying quantum order that persists up until this transition in localizable entanglement?

We offer a resolution to this confusing situation, by demonstrating that the persistence of localizable entanglement in the RBH model to nonzero temperature can be understood through imposing an effective 1-form symmetry in the unprotected model via error correction.  The 1-form operators are not enforced \textit{a priori}, but their eigenvalues are reconstructed via the outcomes of the local measurements, and the resulting state can be `restored' to the SPT-ordered thermal state.  We can therefore relax the symmetry requirement on the model, provided it can be effectively restored through error correction. 

Consider the thermal state $\rho_{0}(\beta)$ of the RBH model $H_{\calC}$ where no symmetry is enforced. In the absence of a symmetry, $\rho_{0}(\beta)$ is equivalent to the exact cluster state with local $Z$ errors applied to each qubit with probability $p = (1+\exp(2\beta))^{-1}$, as shown in Ref.~\cite{RBH}.  In order to restore the 1-form symmetry, we follow the error correction scheme detailed in Ref.~\cite{RBH}, which is based on the techniques of Ref.~\cite{dennisTopo}.  We now outline the steps involved with this procedure and we note that error correction proceeds on each sublattice independently. 

Firstly, consider the measurement of all 1-form operators in the local generating set $\tilde{G} = \{S(\partial \sigma_3), S(\partial^{*}\sigma_0) ~|~ \sigma_3 \in \Delta_3, \sigma_0 \in \Delta_0 \}$ given by Eq.~(\ref{genset}), and let $\{ s_{b}=\pm 1 \}$ be the set of corresponding measurement outcomes. A syndrome is the set of all operators in $\tilde{G}$ which return measurement outcome $-1$ and can be found as the dual boundary $\partial^*(c_1')$ of an error chain $Z(c_1')$, and the boundary $\partial c_1$ of an error chain $Z(c_1)$, where $c_1\in C_1$ and $c_1' \in C_1^*$. To recover the 1-form symmetry, one can identify a recovery 1-chain $\gamma_1 \in C_1$ and dual-1-chain $\gamma_1' \in C_1^*$ such that 
\begin{equation}\label{eqRecoveryChain}
\partial(\gamma_1+c_1)=0, \quad \text{and} \quad \partial^*(\gamma_1'+c_1')=0.
\end{equation}
The recovery operator $U_{ \{ s_{b}\} } = Z(\gamma_1)Z(\gamma_1')$ is a product of Pauli $Z$ operators, which is dependent on the measurement outcomes. The post-correction state is 
\begin{align}\label{eqRecovery}
\rho_{\text{sym}}= \sum_{\{s_{b}\}} U_{\{ s_{b}\}} \big(\Pi_{\{ s_{b}\}}\rho_{\text{in}} \Pi_{\{ s_{b}\}} \big) U_{ \{ s_{b}\} }^{\dagger}
\end{align}
where $\Pi_{\{ s_{b}\}}$ is the projection operator onto subspace with syndrome values $\{s_{b}\}$. Since this error-corrected state $\rho_{\text{sym}}$ is $1$-form symmetric by construction, its nontriviality under $1$-form symmetries, in a sense of the circuit complexity, can be defined.

We have recovered the 1-form symmetry, but we have to determine when the error correction is successful, as the choice of recovery chains satisfying Eq. (\ref{eqRecoveryChain}) is not arbitrary. The measure of success is determined by the usefulness of the post-correction state for localizing entanglement, as we will discuss. We say error correction is successful if the recovery chains $\gamma_1$ and $\gamma_1'$ satisfy
\begin{equation}\label{eqSuccessfulChain}
\gamma_1+c_1\in B_1, \quad \text{and} \quad \gamma_1'+c_1' \in B_1^{*} 
\end{equation} 
meaning they are homologically trivial. This means we only need to find recovery chains that are equivalent to $\gamma_1$ and $\gamma_1'$ up to a 1-boundary and a dual-1-boundary, respectively. Optimal error correction finds the most probable equivalence class of chains satisfying Eqs. (\ref{eqRecoveryChain}) and (\ref{eqSuccessfulChain}) for the given syndrome and is known as maximum-likelihood decoding \cite{dennisTopo}. 

The error correction succeeding is equivalent to the post-correction state $\rho_{sym}$ having the same $+1$ expectation values of the operators $X(z_2)$ and $X(z_2')$ as the cluster state, where $z_2 \in Z_2$ is a nontrivial 2-cycle, and $z_2' \in Z_2^*$ is a nontrivial dual-2-cycle. In this case, the corrected state can be reliably used to localize entanglement between distant regions $L$ and $R$, since the measurement outcomes of bulk of the membrane operators $M_i^{bulk}$ in Eq. (\ref{eq:CommutationIdentity}) can be accurately determined.  In the case that $\gamma_1+c_1$ or $\gamma_1'+c_1'$ are homologically nontrivial, then we say a logical error has occurred, and there is no entanglement in the post measured state. 

Throughout the above discussion, an important consequence of the localizable entanglement protocol is that one can defer the error correction procedure until after the single qubit measurements have been performed. In particular, rather than measure 1-form operators explicitly, one can perform all of the single qubit $X$ measurements first and take products of measurement outcomes to infer the eigenvalues of the 1-form operators $\tilde{G}$. One can classically process the measurement outcomes to identify the post measured state, as pointed out in Ref. \cite{RBH}. This gives perspective on why the localisable entanglement persists in the thermal RBH model without symmetry protection, as the measurement outcomes used to localize entanglement also provide the potential for error correction. 

A subtlety in this argument is the fact that the definition of $\rho_{\text{sym}}$ depends on the error correction protocol, which determines the choice of recovery map $U_{\{ s_{b}\}}$ in Eq. (\ref{eqRecovery}). In order to discuss the hidden SPT-order in an initial state $\rho_{0}(\beta)$, it is sensible to use the optimal quantum error correction protocol to construct the $1$-form symmetric ensemble $\rho_{\text{sym}}$. The question of finding a threshold for this optimal error correction can be rephrased as a problem of finding a phase transition in a certain statistical mechanical model, the random-plaquette $\zz_2$ gauge theory in three-dimensions \cite{dennisTopo}. The random-plaquette $\zz_2$ gauge theory undergoes a phase transition between an low-temperature ordered and a high-temperature disordered phase \cite{wang2003confinement}.   The ordered phase corresponds to the ability to successfully perform error correction with a high success probability in the RBH model at low temperature.  The threshold can be found at the critical point in the three-dimensional random-plaquette $\zz_2$ gauge theory along the Nishimori line (see Fig.~\ref{figGlass.jpg}). The critical point corresponds to a temperature of $T_{0} \approx 0.6$, which lower bounds the transition in localizable entanglement \cite{RBH}. It is thus natural to speculate that the thermal SPT-order in $\rho_0(\beta)$ persists up to $T_{0}$.

So far we have considered the thermal state of the RBH model both with and without 1-form symmetries enforced. A natural family of models which interpolates between these two cases is the symmetric Hamiltonian of Eq. (\ref{symHam}), with finite strength symmetry terms
\begin{align}\label{eqInterpolateHamiltonian}
H(\lambda) = H_{\calC} - \lambda \sum_{S_b\in \tilde{G}} S_{b}, \qquad \lambda\geq 0.
\end{align}
In the limit of $\lambda \rightarrow \infty$, the thermal state is the 1-form symmetric state, for which measurement of the operators $\tilde{G}$ always returns $+1$. As we have discussed, the related statistical model is the three-dimensional Ising gauge theory (or equivalently, the random-plaquette $\zz_2$ gauge theory with no randomness), which has a critical temperature at $T_{\infty} \approx 1.3$. Below this temperature we can always localize entanglement between distant boundaries. 

For intermediate values of $\lambda \in (0, \infty)$, excitations also have an additional energy cost at their boundaries (proportional to $\lambda$), as there is a finite energy penalty to violating the symmetry. Increasing $\lambda$ penalizes excitations which cannot be successfully corrected, leading to increased success rate if the same protocol is used. Finding the success rate appears difficult, as the corresponding statistical model is three-dimensional random-plaquette $\zz_2$ gauge theory, but with correlation between the plaquette random variables. If one neglects these correlations (which will be valid for small $\lambda$), then the transition temperature for finite $\lambda$ would be approximated by the line separating the order and disorder phases in the phase diagram of the three-dimensional random plaquette $\zz_{2}$ gauge theory (see Fig.~\ref{figGlass.jpg}). 

We remark that the protocol dependence of characterising topological order of thermal states is a generic challenge, both in the presence or absence of symmetries. It has been shown by Hastings, that the 4D toric code is topologically ordered at sufficiently small but finite temperature using the fact that quantum error correction protocol reliably works at low temperature \cite{HastingsThermal}. 

\begin{figure}[htb!]
	\centering
	\includegraphics[width=0.33\linewidth]{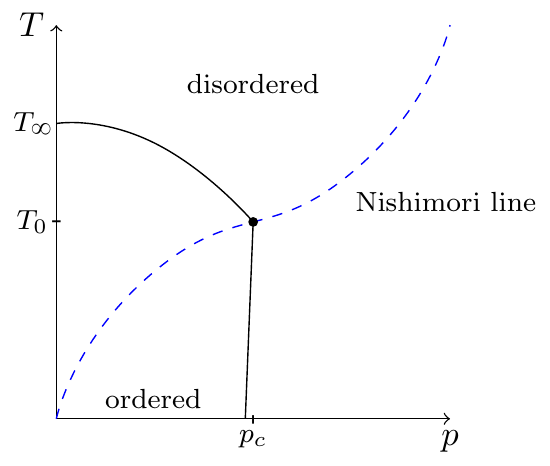}
	\caption{(Color online) Sketch of the phase diagram of the three-dimensional random-plaquette $\zz_2$ gauge theory~\cite{dennisTopo}. The random-plaquette $\zz_2$ gauge theory has $\pm1$ couplings, and the fraction of negative couplings is labelled $p$, the disorder strength. The disorder strength $p$ is on the horizontal axis, and temperature $T$ is on the vertical axis. The solid (black) line is the boundary between the ordered and disordered phase.  The dashed (blue) curve is the Nishimori line $e^{-2\beta}=p/(1-p)$.  The Nishimori point at $(p_c,T_0)$ lies at the intersection of the phase boundary and the Nishimori line. The critical temperatures of $H(\lambda)$ in the limiting case of $\lambda=0,\infty$ are depicted on the vertical axis. For intermediate values $\lambda \in (0, \infty)$, if correlation between plaquette random variables is ignored, the critical temperature is expected to interpolate between $T_{0}$ and $T_{\infty}$.
	} 
	\label{figGlass.jpg}
\end{figure}

\section{Outlook}\label{sec5}
Stability of thermal SPT-order provides a physical account for the surprisingly high error threshold attained in quantum computing architectures involving the 3D cluster state as well as a guiding principle to look for useful resource states for fault-tolerant quantum computation. Our work also opens new avenues for studies of higher-form SPT phases and their thermal properties with possible applications to quantum information processing as well as realizations of higher-form symmetries. Despite the theoretical beauty of higher-form SPT phases, the practical challenge was that physically realistic condensed matter systems do not naturally seem to possess higher-form symmetries. Our perspective on error correction in the 3D cluster state suggests that $1$-form symmetries can emerge from error correction even if we do not impose them as physical symmetries. This raises an intriguing possibility of realizing higher-form symmetries in an emergent manner through quantum error correction. With this perspective, one can ask whether the three-dimensional models of Refs.~\cite{fujii2012topologically,li2011thermal}, which have thermal states that are universal for MBQC, have underlying symmetries that give rise to SPT-order at nonzero temperature. In addition, our generalized definition of topological order at nonzero temperature in the presence of symmetries may be of independent interest as it provides insight into generalizing the Davies map formalism to simulate thermalization for quantum many-body systems with symmetries. This may be interesting in the context of symmetry-enriched topological phases, where the stability of a quantum memory may be enhanced by symmetry.

Thermal SPT phases are likely to find other applications in a broader context of fault-tolerant quantum computation. One particularly promising avenue is single-shot error correction~\cite{Bombin15}, which can significantly reduce the computational overhead in quantum computation.  Conventional error correction needs to take into account a possibility of faulty measurements, and thus repeated measurements are required to get reliable syndrome values. Single-shot error correction, where each syndrome is measured only once, is possible for topological stabilizer quantum codes which retain topological order at nonzero temperature~\cite{Bombin15}. While this observation relates thermal topological order to single-shot error correction, what remains as a puzzle is the fact that the 3D gauge color code~\cite{bombin2015gauge,kubica2015universal}, an example of a subsystem quantum code, also admits single-shot error correction. This fact strongly suggests that the gauge color code retains some sort of order at nonzero temperature, but such thermal order would appear to be in conflict with the thermal instability of topological order at nonzero temperature in all the known three-dimensional models~\cite{brownreview}. Our findings on thermal SPT-order hints that the 3D gauge color code may possess SPT-order protected by some set of symmetry operators that enable single-shot error correction. 

Our perspective of the nontrivial 1-form SPT model as a gapped domain wall described in section~\ref{sec3} raises an interesting question concerning topological defects associated with such a 3D domain wall. In a two-dimensional toric code, defects associated with the endpoints of a gapped domain wall can be viewed as Majorana fermions~\cite{BombinIsing}. This observation led to a huge body of work on characterizations of topological defects in two-dimensional topologically ordered systems~\cite{barkeshli2014symmetry,teo2015theory}. In our construction of a three-dimensional gapped domain wall associated with a nontrivial 1-form SPT model, its two-dimensional boundary may be viewed as some kind of topological defect in the 4D toric code. Characterization of such higher-dimensional defects and their thermal stability may be an interesting future question.  We note also that the thermal stability of Majorana fermions in nanowires is also of interest~\cite{Ped15,campbell2015decoherence} and our characterisation of thermal SPT stability may contribute to this work.

\begin{acknowledgments}
We acknowledge discussions with Andrew Doherty. BY and AK would like to thank the University of Sydney for their hospitality. SR and BY thank Robert Raussendorf for his hospitality during their visit to the University of British Columbia. SDB acknowledges support from the U.S. Army Research Office under Contract No. W911NF-14-1-0133, and the ARC via the Centre of Excellence in Engineered Quantum Systems (EQuS), project number CE110001013, and via project DP170103073. Research at Perimeter Institute is supported by the Government of Canada through Industry Canada and by the Province of Ontario through the Ministry of Research and Innovation. AK acknowledges funding provided by the Institute for Quantum Information and Matter, an NSF Physics Frontiers Center (NFS Grant PHY-1125565) with support of the Gordon and Betty Moore Foundation (GBMF-12500028).
\end{acknowledgments}

\end{document}